\newcommand{\mycomment}[1]{}
\definecolor{sanddune}{rgb}{0.59, 0.44, 0.09}
\definecolor{darkblue}{RGB}{0,0,102}
\definecolor{darkred}{rgb}{0.5,0.,0.}
\definecolor{BlueViolet}{RGB}{138,43,226}
\definecolor{SkyBlue}{RGB}{30,144,255}
\definecolor{DarkGreen}{RGB}{0,100,0}
\theoremstyle{plain}
\newtheorem{thm}{Theorem}
\newtheorem{lem}[thm]{Lemma}
\newtheorem{prop}[thm]{Proposition}
\newtheorem{cor}[thm]{Corollary}
\theoremstyle{definition}
\newtheorem{defn}{Definition}
\newcommand{\ket}[1]{|#1\rangle}
\newcommand{\bra}[1]{\langle #1|}
\newcommand{\bracket}[2]{\langle #1|#2\rangle}
\newcommand{\ketbra}[2]{|#1\rangle\langle #2|}
\newcommand{\cc}{\mathscr{C}}
\newcommand{\ba}{\begin{eqnarray}}
\newcommand{\ea}{\end{eqnarray}}
\DeclareMathOperator{\Tr}{Tr}
\newcommand{\tr}{\operatorname{Tr}}
\newcommand{\id}{\mathrm{id}}
\newcommand{\weicheng}[1]{ { \color{SkyBlue} \footnotesize (\textsf{wy}) \textsf{\textsl{#1}} }}
\begin{document}
\newcommand{\onenorm}[1]{\left\| #1 \right\|_1}
\newcommand{\twonorm}[1]{\left\| #1 \right\|_2}

\newcommand{\ols}[1]{\mskip.5\thinmuskip\overline{\mskip-.5\thinmuskip {#1} \mskip-.5\thinmuskip}\mskip.5\thinmuskip} 
\newcommand{\olsi}[1]{\,\overline{\!{#1}}} 

\title{Complexity and order in approximate quantum error-correcting codes}

\author{Jinmin Yi}
\affiliation{Perimeter Institute for Theoretical Physics, Waterloo, Ontario N2L 2Y5, Canada}
\affiliation{Department of Physics and Astronomy, University of Waterloo, Waterloo, Ontario, Canada N2L 3G1}

\author{Weicheng Ye}
\affiliation{Perimeter Institute for Theoretical Physics, Waterloo, Ontario N2L 2Y5, Canada}

\author{Daniel Gottesman}
\affiliation{Joint Center for Quantum Information and Computer Science (QuICS) and Computer Science
Department, University of Maryland, College Park, MD 20742, USA}

\author{Zi-Wen Liu}
\affiliation{Yau Mathematical Sciences Center, Tsinghua University, Beijing 100084, China}

\date{\today}

\begin{abstract}
We establish rigorous connections between quantum circuit complexity and approximate quantum error correction (AQEC) capability, two properties of fundamental importance to the physics and practical use of quantum many-body systems, covering systems with both all-to-all connectivity and geometric scenarios like lattice systems in finite spatial dimensions. To this end, we introduce a type of code parameter that we call subsystem variance, which is closely related to the optimal AQEC precision.
Our key finding is that, for a code encoding $k$ logical qubits in $n$ physical qubits, if the subsystem variance is  below an $O(k/n)$ threshold, then any state in the code subspace must obey certain circuit complexity lower bounds, which identify nontrivial ``phases'' of codes. Based on our results, we propose $O(k/n)$ as a boundary between subspaces that should and should not count as  AQEC codes. This theory of AQEC provides a versatile framework for understanding quantum complexity and order in many-body quantum systems, generating new insights for wide-ranging physical scenarios, in particular topological order and critical quantum systems which are of outstanding importance in many-body and high energy physics. 
We observe from various different perspectives  that roughly  $O(1/n)$ represents a common, physically significant ``scaling threshold'' of subsystem variance for  features associated with nontrivial quantum order.

\end{abstract}

\pacs{}
\maketitle


\section{Introduction}

A pillar of quantum information science and technology, quantum error correction (QEC) has been extensively studied as a means to protect quantum information from noise and errors for the purpose of realizing the potential advantages of quantum computation in practice \cite{shor95,gottesman1997,nielsen2000quantum}. Remarkably, in recent years, it has  become increasingly evident that the concept of QEC carries broad importance  in fundamental physics,  extending far  beyond its original realm. 
Most notably, QEC plays fundamental roles in our understanding of topological order~\cite{KITAEV20032} and  anti-de Sitter (AdS)/conformal field theory (CFT) correspondence~\cite{Almheiri2015} that stand at the frontier of many-body physics and quantum gravity, respectively.

The idea behind the standard notion of QEC is to encode the logical system into a suitable code subspace in such a way that the logical information is effectively ``hidden'' by entanglement and thus remains recoverable under noise. Owing much to the clean yet powerful scheme of stabilizer codes~\cite{gottesman1997}, it is customary in the study of QEC to seek and understand quantum codes that enable exact recovery.  
However, generalized notions of codes that may only achieve QEC in an approximate manner could be adequate for practical purposes and outperform exact QEC codes in various ways~\cite{Leung97,CGS05:aqec}. Furthermore, they
encompass a much broader range of scenarios innate especially in physical contexts, underscoring the fundamental importance of approximate quantum error correction (AQEC) from both practical and theoretical perspectives.    

Despite the extensive study of QEC, our knowledge on AQEC codes is limited to various scattered situations (e.g.,~ examples in spin chains \cite{BCSB19:aqec}, covariant codes \cite{Hayden_2021,faist20,Woods2020continuousgroupsof,PhysRevLett.126.150503,PhysRevResearch.4.023107,Zhou2021newperspectives,liu2022approximate,KongLiu22}, quasi-exact codes~\cite{quasi,Wang2022}), with a fundamental understanding of such codes remaining elusive. It is worth mentioning a striking finding that an extremely small imprecision tolerance suffices to enable the decoding radius to match the code distance~\cite{CGS05:aqec, BGG22:singleton-aqec}, in stark contrast with the situation of exact QEC, signifying the intrinsic distinction in nature.  In the literature, the notion of AQEC commonly just means that the imprecision is vanishingly small in system size.  However, this can even be naturally achieved by a trivial encoding defined by appending a series of garbage states to the logical state for random local noise, simply because the chance of logical information being affected is vanishingly small (concrete analysis in App.~\ref{app:redundant}).    
This suggests that our current understanding of AQEC is too coarse.

In addressing this predicament, we establish a general theory of AQEC codes based on  quantum circuit complexity whose importance permeates quantum computation, complexity theory, and physics (see e.g.,~Refs.~\cite{aaronson2016complexity,Freedman2014,wen2013topological,susskind2016complexity}), encompassing scenarios both with and without geometric locality, encompassing scenarios both with and without geometric locality.
More specifically, we define a code parameter called \emph{subsystem variance} that characterizes the fluctuation of marginals of the physical system and is closely related to the usual notions of AQEC imprecision. We derive critical values of the subsystem variance that scale roughly as $O(k/n)$, below which the entire code subspace is subject to nontrivial circuit complexity lower bounds depending on the geometry. The conditions are nearly optimal in certain regimes and provide meaningful criteria for interesting codes in general, as supported by concrete examples.   
From a code perspective, our results suggest that it is reasonable to consider $O(k/n)$ a  boundary  between subspaces that should be regarded ``acceptable'' AQEC codes and those that should not be.  
Our theory offers not only a fundamental understanding of nontrivial AQEC codes, but also useful methods for the widely important but notoriously difficult problem of  proving circuit lower bounds.  
The wide applicability of our theory is demonstrated by  various examples arising from both quantum computation and physics.

Remarkably, our AQEC framework and results have broad applications in many-body physics, bridging general information-theoretic properties with quantum physical features in many ways. In particular, we gain new insights into nontrivial quantum order or long-range entanglement that underpin ``exotic'' quantum features including topological order and criticality that are of central importance in condensed matter and high energy physiscs' contexts.   
 For topological order, we find that AQEC offers a unifying framework for rigorously understanding the relationship between strict notions of gapped topological order and the long-range entanglement and topological entanglement entropy (TEE) signatures.  For critical quantum systems, we show that a power-law AQEC imprecision is a fundamental nature of CFT codes that emerge at low energies and discuss how our theory may provide insights into quantum gravity through AdS/CFT.  
 It is notable that a roughly $O(1/n)$ imprecision scaling naturally arises  as some kind of ``threshold'' in several different situations. 
 
This paper is organized as follows. In Sec.~\ref{sec:prelim} we provide formal definitions of the AQEC parameters and circuit complexities that we will use. In Sec.~\ref{sec:complexity} we introduce our primary results about circuit lower bounds from AQEC error and provide some intuitions. In Secs.~\ref{sec:to} and \ref{sec:CFT}, we discuss AQEC and the complexity results in the contexts of topological order and critical quantum systems, respectively, demonstrate the physical significance of our study. 
We conclude with some remarks and outlook in Sec.~\ref{sec:discussion}.
In the appendices we provide technical details as well as extended results and discussions.

\section{Preliminaries\label{sec:prelim}}

Here we will work with multi-qubit systems, namely, those living in a Hilbert space given by the tensor product of multiple qubit Hilbert spaces.\footnote{The qubit assumption is not essential;  generalizations to higher local dimensions are straightforward.}  
The notions of locality and geometry associated with  such a multi-qubit system are captured by an {adjacency graph}, 
the edges of which define the connection relations among the nodes (qubits).
Two prototypical types of adjacency graph are complete (all-to-all) graphs and local lattices embedded in finite spatial dimensions, with the latter incorporating geometric locality that is essential in physical contexts.  

We first lay the groundwork for  our study of AQEC.
We call any  $2^k$-dimensional subspace of an $n$-qubit Hilbert space  an $(\!(n,k)\!)$ quantum code  ($k<n$) as it represents an encoding of a $k$-qubit logical system into an $n$-qubit physical system, and any pure state within this subspace is called a code state. Of course, it may not be a good QEC code  as QEC requires intricate structures.  
A theme of this work is to understand the meaning of the deviation from ideal QEC codes, which is generic and serves as the basis for the theory for AQEC.
This deviation can naturally be quantified by how well the recovery can restore the logical information after the system undergoes noise; specifically, for encoding channel $\mathcal{E}$
 and some noise channel 
$\mathcal{N}$, the \emph{QEC inaccuracy} is defined as
\begin{equation}
\tilde\varepsilon\left(\mathcal{N}, \mathcal{E}\right)\coloneqq\min _{\mathcal{R}} P\left(\mathcal{R} \circ \mathcal{N} \circ \mathcal{E}, \id_L\right),
\end{equation}
i.e.,~the minimum distance (here we adopt the channel purified distance $P$; see App.~\ref{app:aqec} for the detailed definition) 
between the overall logical channel after recovery $\mathcal{R}$  and  the logical identity $\id_L$.
From what features of the code does such QEC inaccuracy originate? To understand this, we 
introduce a type of parameter intrinsically associated with the code space $\mathfrak{C}$ (image of $\mathcal{E}$) that we call \emph{subsystem variance}: 
\begin{equation}
    \varepsilon_\mathsf{G}(\mathfrak{C},d) \coloneqq \max_{\psi\in\mathfrak{C},|S|\leq d}\onenorm{\psi_S - \Gamma_S},
\end{equation}
where $\Gamma\coloneqq\frac{1}{2^k} \sum_{i=1}^{2^k}   |\psi_i\rangle\langle\psi_i|$  is the statistical average of $\{\psi_i\}$ that span $\mathfrak{C}$, that is,~the maximally mixed state of $\mathfrak{C}$, 
and $S$ is a connected (local) subsystem with respect to the adjacency graph $\mathsf{G}$ (subscript $S$ denotes the reduced state on $S$).  
Here $d$ should be treated as a tunable variable that generalizes the notion of code distance. 
Intuitively, the subsystem variance limits the accessible information from the subsystems and is thus closely tied to entanglement and QEC properties. In particular, it bounds the  violation of the Knill--Laflamme QEC conditions \cite{PhysRevA.55.900} and broadly characterizes the QEC inaccuracy. As an extreme instance, under the same locality restriction, 
\begin{equation}
\label{eq:error_variance}
 \frac{\varepsilon}{4} \leq {\tilde\varepsilon} \leq 2^{k/2}\sqrt{\varepsilon} 
\end{equation}
 for any noise represented by replacement channels, including, for example, erasure,complete depolarizing and reset channels (some relations between QEC error and the violation of Knill--Laflamme conditions are known~\cite{BO10:AKL,Ng10}). 
A complete form of this result and detailed proof and discussion can be found in App.~\ref{app:aqec}.  
In addition, we can also establish two-way bounds that relate subsystem variance with coherent information, a well-known quantity that characterizes quantum information loss~\cite{cohinfo,PhysRevA.55.1613,CoherinfoAQEC2001} (details given in App.~\ref{sec:coh-info}).
The physics discussions in later sections mainly concern scenarios where  $\varepsilon$ and $\tilde\varepsilon$ convey similar messages. They may generally be referred to as code/AQEC error at appropriate instances.

Next, we formally define the notions of quantum circuit complexity that will be studied. 
Generally, for an $n$-qubit quantum state $\ket{\psi}$, the (quantum circuit) complexity associated with the adjacency graph $\mathsf{G}$, denoted by $\cc_\mathsf{G}(\psi)$, is defined as the minimum depth (number of layers) of 2-local (with respect to $\mathsf{G}$) quantum circuits that generate $\ket{\psi}$ from $\ket{0}^{\otimes n}$. More precisely,
\begin{equation}
    \cc_\mathsf{G}(\psi) \coloneqq \min\left\{l: \ket{\psi} = \prod_{i=1}^l U_i\ket{0}^{\otimes n}\right\},
\end{equation}
where $U_i$'s must be a tensor product of disjoint 2-qubit unitary gates acting on the connected nodes according to $\mathsf{G}$.
The two standard scenarios specifically discussed here are all-to-all quantum circuit complexity associated with complete graph $\mathsf{g}$, and geometric quantum circuit complexity associated with some adjacency graph that encodes certain underlying geometry. In the main text we specifically discuss $D$-dimensional integer lattices  ${\mathsf{G}}_D$, with generalizations to arbitrary graphs being feasible.

The \emph{$\delta$-robust} versions  of these complexities, denoted by $\cc^\delta$,  are  defined as the minimum corresponding complexity of any state $\ket{\psi'}$ within the $\delta$-vicinity of $\ket{\psi}$ in trace norm, namely,
\begin{equation}
    \onenorm{\psi'-\psi}\leq\delta
\end{equation}

It should be noted that our results and discussions are robust under various modifications of the above complexity definitions. 
First, generalizing 2-locality to  $t$-locality for any finite $t$ only introduces constant factors to the results.\footnote{More explicitly, it can be seen from the proofs that if we consider $t$-local gates instead of $2$-local gates, then our circuit complexity bounds for the all-to-all and geometric cases hold with an extra $1/\log t$ factor and an extra $1/(t-1)$ factor, respectively.}  
Furthermore,  several  physical variants of the setting, including quasi-local gates with fast decaying tails, quasi-adiabatic evolutions \cite{HastingsWen}, and more general lattices, are expected to retain the relevant messages in this work.

While our results apply to any  specific $n$, one is often mostly interested in the asymptotic scalings in the thermodynamic (large $n$) limit. 
In addition to the standard  Bachmann--Landau notation using $O,o,\Omega,\omega, \Theta$ symbols  (see e.g., Refs.\ \cite{knuth1976bigO,wiki-big-o-notation} for an extensive introduction), we shall also use  the  ``soft'' notation with a tilde on top, which hide  polylogarithmic factors that are insignificant in our context; explicitly, for $A\in\{O,\Omega,\Theta\}$, $\tilde A(f(n))$ means $A(f(n)\,{\rm polylog}(n))$ for some polylog function, and for $a\in\{o,\omega\}$, $\tilde{a}(f(n))$ means $a(f(n)\,{\rm polylog}(n))$ for any polylog function.

\section{Circuit complexity from approximate quantum error correction}
\label{sec:complexity}

We now introduce our key results on quantum circuit complexity from AQEC for both all-to-all and geometric cases. The log symbols denote the logarithm to base 2.       $H_2(p) = -p\log p -(1-p)\log(1-p)$ is the binary entropy function; whenever it appears it is assumed that $p<1/2$.  
\begin{thm}
\label{thm:all_comp}
Given an $(\!(n,k)\!)$ code $\mathfrak{C}$,
the $\delta$-robust all-to-all quantum circuit complexity of any code state $\ket{\psi}\in\mathfrak{C}$ satisfies $\cc^{\delta}(\psi)>\log d$, if $H_2(\varepsilon_\mathsf{g}(\mathfrak{C},d)/2 + \delta/2) < k/n$ with $\mathsf{g}$ being the complete graph ($\varepsilon$ is defined with respect to any $d$ qubits) where $\varepsilon+\delta < 1$.  
\end{thm}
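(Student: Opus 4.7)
The plan is to prove the theorem by contrapositive: assume some code state $\ket{\psi}\in\mathfrak{C}$ has $\cc^\delta(\psi)\leq\log d$ and derive an upper bound on $k/n$ in terms of $H_2(\varepsilon_\mathsf{g}(\mathfrak{C},d)/2+\delta/2)$ that contradicts the hypothesis. The argument combines two ingredients: a structural factorization of low-depth all-to-all circuit states across small connected components, and a sharp entropy continuity estimate (Audenaert's bound) applied to the maximally mixed code state $\Gamma$.

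First I would use the robust complexity assumption to produce a witness $\ket{\psi'}=U\ket{0}^{\otimes n}$ with $\onenorm{\dm{\psi}-\dm{\psi'}}\leq\delta$ and $U$ an all-to-all circuit of depth at most $\log d$. The key structural observation is that each layer of 2-local gates at most doubles the size of the connected component containing any qubit in the gate graph of $U$, so after $\log d$ layers the components $\{C_j\}$ satisfy $|C_j|\leq 2^{\log d}=d$, and $\ket{\psi'}=\bigotimes_j\ket{\phi_j}_{C_j}$. I would then greedily coarsen $\{C_j\}$ into a partition $\{D_l\}$ with $|D_l|\leq d$, keeping the number of blocks $N$ as small as possible (at most $\lceil n/d\rceil$ in the best case, and at most $\lceil n/2\rceil$ whenever $d\geq 2$ via a standard bin-packing argument); the state $\ket{\psi'}$ still restricts to a pure product on each $D_l$.

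Next, the subsystem variance hypothesis applied on each $D_l$ (which has $|D_l|\leq d$), combined with monotonicity of trace norm under partial trace and the triangle inequality, yields $\onenorm{\Gamma_{D_l}-\psi'_{D_l}}\leq\varepsilon+\delta$, so each marginal $\Gamma_{D_l}$ lies within trace distance $T:=(\varepsilon+\delta)/2$ of a pure state. Audenaert's continuity estimate then gives $H(\Gamma_{D_l})\leq H_2(T)+T\log(2^{|D_l|}-1)\leq H_2(T)+T|D_l|$, and summing via subadditivity together with the identity $H(\Gamma)=k$ (the $2^k$-dimensional maximally mixed code state),
\begin{equation*}
k\leq\sum_l H(\Gamma_{D_l})\leq N\,H_2(T)+Tn.
\end{equation*}
Dividing by $n$, invoking the coarsening bound $N\leq\lceil n/2\rceil$ (for $d\geq 2$), and exploiting the elementary inequality $T\leq H_2(T)/2$ valid for $T<1/2$ (guaranteed by $\varepsilon+\delta<1$), one obtains $k/n\leq H_2(T)$, contradicting the hypothesis. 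The boundary case $d=1$ is immediate: since $\log(2^1-1)=0$, Audenaert's bound becomes $H(\Gamma_{C_j})\leq H_2(T)$ per qubit, and summing gives $k\leq n H_2(T)$ directly.

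The main technical obstacle is eliminating the residual $T|D_l|$ terms produced by Audenaert's log-dimensional factor so that the summed inequality sharpens to $H_2(T)$ alone, rather than the naive $H_2(T)+T$. The coarsening step (packing smaller components together to reduce $N$) and the elementary inequality $T\leq H_2(T)/2$ (which absorbs the leftover linear-$T$ slack back into $H_2(T)$) are the two quantitative ingredients that close this gap, and verifying the bin-packing bound on $N$ for all relevant values of $d$ and $n$ is where most of the combinatorial bookkeeping lies.
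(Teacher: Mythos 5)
Your core structural claim fails. You assert that after $t=\log d$ layers of an all-to-all $2$-local circuit the connected components of the gate graph have size at most $2^{t}=d$, so that $\ket{\psi'}$ factorizes as a pure product over blocks of size at most $d$. The doubling bound holds for the \emph{light cone of an individual qubit} (each layer at most doubles it), but not for connected components: the gates within a single layer are disjoint, yet they can \emph{chain} many existing components together. For instance, with layer~1 acting on $(1,2),(3,4),(5,6)$ and layer~2 on $(2,3),(4,5)$, the depth is $2$ but the interaction graph is the connected path $1\text{--}2\text{--}\cdots\text{--}6$ of size $6>2^{2}$; extending this pattern, a depth-$2$ all-to-all circuit can already produce a single connected component of size $n$, and its output generically does not factorize across any partition into blocks of size at most $d$. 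Consequently the marginals $\psi'_{D_l}$ need not be pure, the Audenaert step only controls $|H(\Gamma_{D_l})-H(\psi'_{D_l})|$ with $H(\psi'_{D_l})$ possibly as large as $|D_l|$, and the inequality $k\le N\,H_2(T)+Tn$ does not follow. (There are also secondary quantitative issues --- the bin-packing constant and the residual $\lceil n/2\rceil$ factor leave a slack of order $H_2(T)/n$ that does not quite close the contradiction --- but these are moot given the structural failure.)

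The paper's proof (Proposition~\ref{lem:app_all_comp}) uses the correct light-cone statement and avoids any factorization: it runs the circuit \emph{backwards} on $\Gamma$ and examines the \emph{single-qubit} marginals of $U^\dagger\Gamma U$. Because the output at qubit $i$ under $U^\dagger$ depends only on the input restricted to its light cone $L_i$ with $|L_i|\le d$, and $\onenorm{\Gamma_{L_i}-\psi_{L_i}}\le\varepsilon+\delta$ by the subsystem-variance hypothesis plus the triangle inequality, monotonicity of the trace norm gives $\onenorm{\tr_{\bar i}(U^\dagger\Gamma U)-\ketbra{0}{0}}\le\varepsilon+\delta$. The Fannes--Audenaert bound on a single qubit (where the $\log(D-1)$ term vanishes) then yields $S\bigl(\tr_{\bar i}(U^\dagger\Gamma U)\bigr)\le H_2(\varepsilon/2+\delta/2)$, and subadditivity over the $n$ qubits gives $k=S(\Gamma)\le nH_2(\varepsilon/2+\delta/2)$ directly, with no dimensional slack to absorb. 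If you want to salvage your write-up, replace the product decomposition with this backward-light-cone argument.
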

\begin{thm}
\label{thm:geo_comp}
Given an $(\!(n,k)\!)$ code $\mathfrak{C}$, the $\delta$-robust geometric circuit complexity with respect to adjacency graph ${\mathsf{G}}_D$ embedded in a $D$-dimensional integer lattice\footnote{The topology of the base manifold does not affect the result.} of any code state $\ket{\psi}\in\mathfrak{C}$  satisfies $\cc_{{\mathsf{G}}_D}^{\delta}(\psi)>(d^{1/D}-1)/2$, if $H_2(\varepsilon_{\mathsf{G}_D}(\mathfrak{C},d)/2 + \delta/2) < k/n$  where $\varepsilon+\delta < 1$.
\end{thm}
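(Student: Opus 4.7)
The plan is to adapt the dual-picture entropy argument behind Theorem~\ref{thm:all_comp}, swapping the all-to-all single-qubit lightcone bound $2^L$ for the geometric one $(2L+1)^D$. I will prove the contrapositive: if $\cc_{{\mathsf{G}}_D}^{\delta}(\psi) \leq L := (d^{1/D}-1)/2$, then $k/n \leq H_2(\varepsilon/2 + \delta/2)$, with $\varepsilon \equiv \varepsilon_{{\mathsf{G}}_D}(\mathfrak{C}, d)$.

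First I would unpack the complexity assumption: there exists $\ket{\psi'} = U\ket{0}^{\otimes n}$ with $U$ a 2-local depth-$L$ circuit on ${\mathsf{G}}_D$ and $\onenorm{\psi - \psi'} \leq \delta$. The essential geometric input is that for each qubit $q$, the backward lightcone $T(q)$ of $q$ under $U^\dagger$ sits inside the graph ball of radius $L$ around $q$, so $|T(q)| \leq (2L+1)^D \leq d$; moreover this ball is connected in ${\mathsf{G}}_D$, matching the connectedness requirement in the definition of $\varepsilon_{{\mathsf{G}}_D}$. The standard lightcone lemma then supplies a CPTP map $\Phi_q$ on $T(q)$-qubit states satisfying $(U^\dagger \rho U)_q = \Phi_q(\rho_{T(q)})$ for every $n$-qubit $\rho$.

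Applying this identity with $\rho = \Gamma$ and $\rho = \psi$, subsystem variance gives $\onenorm{\Gamma_{T(q)} - \psi_{T(q)}} \leq \varepsilon$, and contractivity of $\Phi_q$ upgrades this to $\onenorm{(U^\dagger\Gamma U)_q - (U^\dagger\psi U)_q} \leq \varepsilon$. Since $U^\dagger\ket{\psi'} = \ket{0}^{\otimes n}$, the marginal $(U^\dagger \psi' U)_q = \dm{0}$, and hence $\onenorm{(U^\dagger \psi U)_q - \dm{0}} \leq \onenorm{\psi-\psi'} \leq \delta$ by unitary invariance and monotonicity of the trace norm under partial trace. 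The triangle inequality then yields $\onenorm{(U^\dagger \Gamma U)_q - \dm{0}} \leq \varepsilon + \delta$, and Fannes--Audenaert in dimension $2$ (applicable since $\varepsilon+\delta<1$) gives $S((U^\dagger\Gamma U)_q) \leq H_2(\varepsilon/2+\delta/2)$.

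I would close with subadditivity and unitary invariance of von Neumann entropy: $k = S(\Gamma) = S(U^\dagger \Gamma U) \leq \sum_{q=1}^n S((U^\dagger\Gamma U)_q) \leq n\, H_2(\varepsilon/2+\delta/2)$, contradicting $H_2(\varepsilon/2+\delta/2) < k/n$. The main obstacle, once the Theorem~\ref{thm:all_comp} framework is taken as template, is purely the geometric bookkeeping: one must verify that $L = (d^{1/D}-1)/2$ is precisely what forces every single-qubit backward lightcone inside a connected region of at most $d$ qubits in ${\mathsf{G}}_D$, so that subsystem variance remains applicable. A secondary point worth highlighting is that restricting to single-qubit marginals is essential so that Fannes--Audenaert applies in dimension $2$ and drops the $\log(d_S-1)$ term, yielding the clean per-qubit bound $H_2(\varepsilon/2+\delta/2)$; any attempt to use multi-qubit marginals would both exceed the $|T| \leq d$ budget and spoil the $k/n$ threshold.
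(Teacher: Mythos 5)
Your proof is correct and follows essentially the same route as the paper's: it is the single-qubit backward-light-cone argument of Proposition~\ref{lem:app_all_comp}, with the all-to-all light-cone bound $2^t$ replaced by the lattice bound $(2t+1)^D\le d$, then closed by subadditivity and the dimension-two Fannes--Audenaert inequality. (The appendix's written-out geometric proof, Theorem~\ref{thm:app_geo_comp}, instead partitions the lattice into hypercubes of size $\tilde d$; your argument is precisely its $\tilde d=1$ specialization, which is exactly the simplified main-text statement being proved here.)
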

\noindent {\bf Remark.} The results are applicable to different notions of AQEC error. In particular, using (\ref{eq:error_variance}), the error conditions can be alternatively expressed  in terms of QEC inaccuracy $\tilde\varepsilon$, substituting $\varepsilon_{\mathsf{G}}(\mathfrak{C},d)$ with $4\tilde\varepsilon(\mathcal{N},\mathcal{E})$ for replacement channels $\mathcal{N}$ acting on any $d$-local subsystems with respect to the associated adjacency graph $\mathsf{G}$.  
Then, as mentioned,
our approach can be generalized to obtain circuit complexity bounds for arbitrary adjacency graphs  (see App.~\ref{app:proof}). Additionally, it is worth noting that our complexity results indicate intrinsic circuit complexities of code states themselves, which should not be confused with the depth of the state-independent encoding circuits that are blind to the input logical states.

Evidently, our results cover exact QEC codes as special cases (e.g.,~recovering the stabilizer code result of Ref.~\cite{AnshuNirkhe20} in the NLTS context and the long-range entangled property of the toric code), which can be seen by noting that $\varepsilon_\mathsf{\mathsf{G}}(\mathfrak{C},d-1) =0$ for any $[[n,k,d]]$ code and any adjacency graph $\mathsf{G}$ so that our code error conditions are automatically satisfied.
It is worth emphasizing that the above results  reflect universal complexity features of the entire code spaces that encompass arbitrary superpositions of special wavefunctions in the code, which are important but rarely understood in physics contexts. From the perspective of proving circuit complexity lower bounds,  our approach can be used to establish bounds for specific states beyond the applicability of the theorems (see App.~\ref{app:proof}). A physically interesting family of examples is given by what we call momentum codes, which are discussed in detail in App.~\ref{app:momentum}.
Another noteworthy point is that AQEC properties are able to guarantee  the intrinsic
all-to-all circuit complexity, which  is not constrained by  geometry and spatial locality.
Finally, note that the results conversely indicate lower bounds on code error that depend on the lowest complexity of code states (see App.~\ref{app:proof} for formal statements);  it could be interesting to consider what code error is needed such that the entire code is subject to circuit complexity bounds.

With improved forms of the results and  detailed proofs provided in App.~\ref{app:proof}, we now distil the core intuitions that apply generally to any connectivity, covering all-to-all and geometric cases.
Our results roughly say the ``distance'' (noise size) under which a code can maintain a sufficiently small code error indicates circuit complexity lower bounds.  An overall conceptual message is that  higher complexity is generally associated with smaller code error and larger code rates.
The main proof idea, adapting a method in Ref.~\cite{AnshuNirkhe20} to our AQEC setting, goes as follows.
Suppose a code state is generated by a circuit $Q$ of some low depth from $\ket{0}^{\otimes n}$ where each qubit's effects are confined within its light cone determined by $Q$.
Now run this circuit backwards (apply $Q^\dagger$) on  the maximally mixed code state $\Gamma$.  Using the light cone properties, one finds that if  $\varepsilon$ within the light cone scale of $Q$ ($O(\exp({n}))$ for all-to-all circuits and $O(n^D)$ for $D$-dimensional circuits) is small, then the output of the backward circuit  is  well approximated by $\ket{0} $ locally and therefore the entire system has small entropy due to subadditivity.  Making $\varepsilon$ sufficiently small  leads to contradictions with the entropy of $\Gamma$ directly determined by $k$ which, in turn, implies complexity lower bounds because the light cones are too small for consistency.

The distinction of whether the circuit complexity of a system is $O(1)$ (i.e.,~finite in the large $n$ limit) holds exceptional importance in both physics and complexity theory.   States with $\omega(1)$ (superconstant)  complexity, often referred to as long-range entangled states when a proper notion of geometric locality is present, are generally associated with certain kinds of nontrivial quantum order 
and play central roles in the theories of phases of matter and Hamiltonian complexity etc.\footnote{In both contexts, it is sometimes desirable to consider $\omega(\mathrm{loglog}(n))$ complexity; however, 
this difference is inconsequential in our results.}  The key implications of our theory are summarized in the following corollary.
\begin{cor}\label{cor:scaling}
Given an $(\!(n,k)\!)$ code $\mathfrak{C}$ with subsystem variance $\varepsilon_\mathsf{G}(\mathfrak{C},d)$ where $d = \omega(1)$. Suppose $H_2(\varepsilon_\mathsf{G}(\mathfrak{C},d)/2)< k/n$, which is satisfied in particular when 
\begin{itemize}
    \item $k = \Tilde{O}(1), \varepsilon = \Tilde{o}(1/n)$;
    \item  $k = \Omega(n), \varepsilon =o(1)$.
\end{itemize}
Then for any code state $\ket{\psi}\in\mathfrak{C}$, it holds that  $\cc_{\mathsf{G}}(\psi) = \omega(1)$.

\end{cor}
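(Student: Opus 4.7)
The plan is to derive Corollary \ref{cor:scaling} as an immediate application of Theorems \ref{thm:all_comp} and \ref{thm:geo_comp}, specialized to $\delta = 0$. Applying Theorem \ref{thm:all_comp} with $\mathsf{G}=\mathsf{g}$ yields $\cc_\mathsf{g}(\psi) > \log d$, and applying Theorem \ref{thm:geo_comp} with $\mathsf{G}=\mathsf{G}_D$ yields $\cc_{\mathsf{G}_D}(\psi) > (d^{1/D}-1)/2$. The hypothesis $d = \omega(1)$ immediately turns both of these lower bounds into $\omega(1)$, provided the entropy premise $H_2(\varepsilon_\mathsf{G}(\mathfrak{C},d)/2) < k/n$ is in force. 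The remaining task is therefore to verify this premise under each of the two listed sufficient conditions on $(k,\varepsilon)$.

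The verification relies on the standard near-origin estimate $H_2(x) = -x\log x - (1-x)\log(1-x) = O(x \log(1/x))$ valid for $x \in (0,1/2)$. In the regime $k = \Omega(n)$, $\varepsilon = o(1)$, the ratio $k/n$ is uniformly bounded below by a positive constant, while $H_2(\varepsilon/2) \to 0$ by continuity of $H_2$ at the origin, so the inequality holds for all sufficiently large $n$. In the regime $k = \tilde O(1)$, $\varepsilon = \tilde o(1/n)$, I would substitute the estimate to obtain $H_2(\varepsilon/2) = O(\varepsilon \log(1/\varepsilon))$, reducing the premise to a comparison between $\varepsilon \log(1/\varepsilon)$ and $k/n$. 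Because $\varepsilon = \tilde o(1/n)$ dictates that $\varepsilon$ decays faster than $1/n$ by any polylogarithmic factor, the logarithmic blow-up $\log(1/\varepsilon)$ is absorbed, so $\varepsilon \log(1/\varepsilon)$ remains $\tilde o(1/n)$, strictly dominated by $k/n = \tilde\Omega(1/n)$ at large $n$.

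I do not anticipate any genuine conceptual obstacle; Corollary \ref{cor:scaling} is essentially a bookkeeping translation of Theorems \ref{thm:all_comp} and \ref{thm:geo_comp} into asymptotic language. The one point demanding care is the second bullet, where in principle $\varepsilon$ could be extraordinarily small and thus $\log(1/\varepsilon)$ could in principle be large; however, since $x \log(1/x) \to 0$ as $x \to 0^+$ regardless of the rate of convergence, and the soft-notation decay $\tilde o(1/n)$ already leaves polylogarithmic slack to absorb this logarithm, the premise survives. This reduction cleanly identifies the $O(k/n)$ threshold on subsystem variance as the dividing line between trivial and nontrivial circuit-complexity regimes for $(\!(n,k)\!)$ codes.
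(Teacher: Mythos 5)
Your proposal is correct and follows essentially the same route as the paper, which presents Corollary~\ref{cor:scaling} as a direct consequence of the complexity theorems: the lower bounds $\log d$ and $(d^{1/D}-1)/2$ become $\omega(1)$ once $d=\omega(1)$, and your verification of the entropy premise via $H_2(x)=O(x\log(1/x))$ in the two parameter regimes is sound (the polylogarithmic slack in $\tilde o(1/n)$ indeed absorbs the $\log(1/\varepsilon)$ factor, and monotonicity of $x\log(1/x)$ near the origin handles arbitrarily small $\varepsilon$). The only bookkeeping remark is that for a fully general adjacency graph $\mathsf{G}$ (beyond the complete graph and $\mathsf{G}_D$) one should invoke the arbitrary-graph version, Theorem~\ref{thm:app_arbitrary}, whose bound $f^{-1}(d)$ is likewise $\omega(1)$ since any depth-$t$ $2$-local circuit has light cones of size at most $2^t$.
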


Fig.~\ref{fig:phase} depicts schematic circuit complexity ``phase diagrams'' for any $d$ and $\mathsf{G}$ in terms of the corresponding $\varepsilon$ as well as $\tilde\varepsilon$ (for replacement channels) over $k$. 
\begin{figure}
\includegraphics[width=\columnwidth]{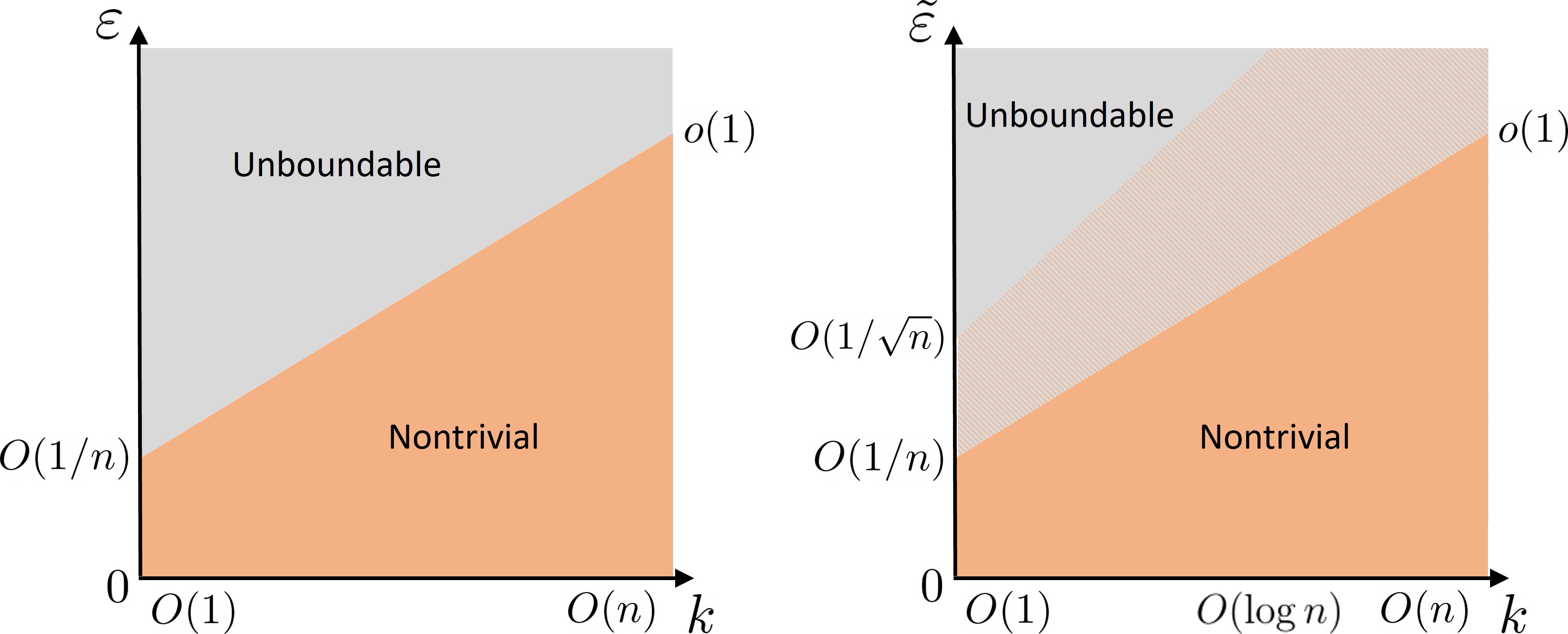}
\caption{\label{fig:phase}Schematic circuit complexity ``phase diagrams'' of general quantum codes for any $d$ and $\mathsf{G}$. 
``Nontrivial'' and ``unboundable'' mean that our complexity bounds hold for any code state, or are inapplicable, respectively; The intermediate regime for $\tilde\varepsilon$ can be regarded an expanded boundary where the applicability of the complexity bounds depends on the code.}
\end{figure}
Most importantly, our results identify ``nontrivial'' regimes of code parameters---any state that belongs to a code within these nontrivial regimes is subject to our circuit complexity lower bounds.

 Note that the critical scalings below which our bounds can apply can be roughly achieved in a naive manner (see the redundant encoding in App.~\ref{app:redundant}), suggesting that our code error conditions for circuit lower bounds represent meaningful conditions for AQEC codes. Besides, the conditions are nearly tight  for small $k$ as evident from the Heisenberg chain code (see App.~\ref{app:heisenberg} for details).

The universality of our framework enables  applications in  an exceptionally broad range of scenarios in coding theory and physics. 
In Table~\ref{tab:examples}, we summarize the properties of various representative types of AQEC codes originating from diverse contexts, which also provide meaningful examples for different parameter and complexity regimes. We will study the CFT codes and momentum codes in more detail in this work.  
\begin{table}\footnotesize
\begin{ruledtabular}
    \centering
    \begin{tabular}{lccc}
    Code&$k$&$\varepsilon$\footnote{with respect to suitable superconstant $d$  }&\makecell{Code space\\ complexity}

    \\
    \hline

    ETH energy window & $\Omega(n)$& $O(\exp(-n))$&Nontrivial$^\ast$\\
    Random unitary code\footnote{with $1-O(\exp(-n))$ probability} & $\Omega(n)$& $O(\exp(-n))$&Nontrivial$^\ast$\\
    Topological code & $O(1)$& $O(\exp(-n))$&Nontrivial\\
      ``Good'' AQLDPC code & $\tilde\Omega(n)$& $O(1/\mathrm{polylog}(n))$&Nontrivial$^\ast$\\
      Heisenberg chain code & $\Omega(\log n)$&$\tilde{\omega}(1/n)$&Unboundable\\
    \textit{CFT low energy sector} & $O(1)$ & $\Theta(1/n^{\varDelta/D})$\footnote{$\varDelta$ is the minimum scaling dimension of the CFT; see Sec.~\ref{sec:CFT}.}&Indefinite$^\ast$\\
{\textit{Momentum code}}&$O(\log n)$&$O(1)$& Unboundable\\

    \end{tabular}
    \end{ruledtabular}
    \caption{Various representative examples of AQEC codes. The ETH and Heisenberg chain codes (refined analysis in App.~\ref{app:heisenberg}) are defined in Ref.~\cite{BCSB19:aqec}; the good approximate quantum LDPC code refers specifically to the spacetime Hamiltonian construction in Ref.~\cite{Bohdanowicz_2019};
    the italicized ones are explicitly studied in this paper.    
    The rightmost column lists the complexity results deduced from our theory, where the asterisks specify applicability to all-to-all complexity and the remaining entries concern geometric complexity in their respective native dimensions.
    }
    \label{tab:examples}
\end{table}

\mycomment{
In particular, the  momentum codes are a new type of AQEC codes constructed from spin systems with nontrivial lattice momentum, which have physical significance~\cite{GioiaWang} and exhibit intriguing properties. For the simplest case, consider the family of 1D states
\begin{equation}
    |W_p\rangle=\frac{1}{\sqrt{n}}\sum_{x=1}^n e^{ipx}|\tilde x\rangle,~ p=\frac{2\pi m}{n},~ m=0,1,\ldots,n-1,
\end{equation}
where $\ket{\tilde{x}}$ denotes the all-0 state with a quasiparticle excitation 1 at the $x$-th site. One can think of $\ket{W_p}$ as a generalized $W$-state with lattice momentum $p$, i.e.,~$\hat{T}|W_p\rangle=e^{ip}|W_p\rangle$ under lattice translation operator $\hat{T}$.  We call spaces spanned by such states momentum codes, where the momentum is the logical information being encoded (so the code parameters can characterize the local detectability of momentum, which is of potential physical interest).  
The momentum codes naturally reside near the $O(1/n)$ nontriviality boundary and demonstrate a remarkable transition across the boundary upon code ``fragmentation'':
The full code consisting of all $\ket{W_p}$ falls right outside the nontrivial regime, but  smaller fragments of the code with close momenta can exhibit smaller code error and thus transition into the nontrivial regimes, enabling us to show that quasiparticle states with any momentum has superconstant circuit complexity, in line with Ref.~\cite{GioiaWang}.     A detailed exposition of the momentum codes  can be found in App.~\ref{app:momentum}.

}

\section{Topological order\label{sec:to}}
Our complexity results  shed a new light on topological order, a widely studied concept in modern condensed matter physics that characterizes exotic quantum phases of matter arising from many-body entanglement.  A central problem in the study of topological order is to identify simple criteria or indicators for states associated with systems  with topological order. 
As signified by the prototypical example of the toric code \cite{KITAEV20032}, QEC is a representative feature (and application) of topologically ordered systems.  Indeed, QEC properties underlie the well-established topological quantum order (TQO) condition, which is tied to strong physical notions of topological order such as gap stability \cite{BHV06,BHM10:TQO}, essentially demanding that the state belongs to an almost exact QEC code 
with macroscopic (at least $\mathrm{poly}(n)$) distance.  
On the other hand, based directly on many-body entanglement properties, there are two other prominent characteristics,  oftentimes considered definitions, for states with  topological order:  long-range entanglement \cite{ChenGuWen10:lu,wen2013topological}, and topological entanglement entropy (TEE) \cite{kitaev2006topological,levin2006detecting}.  
Despite  extensive study and usage of all three conditions, 
their relationship has not been systematically understood. We now demonstrate that AQEC provides a general framework that allows us to rigorously compare the TQO (code) and entanglement conditions, thereby sharpening our understanding of topological order.  

First, a direct implication of our results is a general quantitative understanding of the gap between TQO and long-range entanglement conditions, with the former being strictly stronger than the latter.   To be more specific, recall that long-range entanglement  means superconstant circuit complexity (in the current context one considers finite $D$). 
According to Corollary~\ref{cor:scaling}, the code property requirements in TQO can be  relaxed from exponentially small $\varepsilon$ under macroscopic distance to  $\varepsilon=\Tilde{o}(1/n)$ under any superconstant (e.g., logarithmic) distance, while still ensuring that all code states are long-range entangled.  This is a substantial relaxation that is expected to encompass wide-ranging physical situations.  
However, in the literature, the notions of long-range entanglement and topological order are frequently lumped together, especially for gapped systems. Our observation elucidates the discrepancy between the two notions in terms of AQEC parameters, sharpening the characterization of topological order. 

Next, let us consider TEE, which widely serves as a simple information-theoretic signature for topological order.   More concretely, consider the standard 2D setting, where ground states of  gapped systems commonly obey an area law, with possible subleading corrections originated from long-range entanglement inherent in topologically ordered systems, based on which 
TEE is defined. Specifically, suppose the entanglement entropy of any contractible subsystem $A$ has the form 
\begin{equation}\label{eq:area}
S(A)=al(A)-\gamma + o(1),    
\end{equation}
where the first term manifests the area law, with $a$ being some constant and $l(A)$ being the length of the boundary of $A$, and the correction $\gamma$ is the  TEE which is expected to be a universal constant signifying topological order. 
For general AQEC code states, we show the following: 


\begin{prop}
    (Informal) 
  Consider an $(\!(n,k)\!)$ code with an area-law code state defined on a 2D torus.  Suppose  $\varepsilon=o(1/n)$ for any contractible region of linear size; then all code states have nontrivial TEE.
   
   Specifically, the abelian topological order saturates the best TEE lower bound of $k/2$ from our approach.

As a corollary, a 2D area-law state with trivial TEE does not belong to any code that achieves $\varepsilon=o(1/n)$ on linear-size contractible regions.
\end{prop}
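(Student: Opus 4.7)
The plan is to combine the AQEC hypothesis with entropic continuity to lift the TEE to a code-space invariant, and then to extract the $k/2$ lower bound from the global entropy constraint $S(\Gamma)=k$. First, I would invoke $\onenorm{\psi_R-\Gamma_R}\leq\varepsilon=o(1/n)$ for any contractible region $R$ of linear size, and apply the Fannes--Audenaert continuity inequality to obtain $|S(\psi_R)-S(\Gamma_R)|\leq(\varepsilon/2)|R|+H_2(\varepsilon/2)=o(1)$, using $|R|=O(n)$. This is the entropic bridge between any pure code state and the maximally mixed code state $\Gamma$ on every contractible linear-size region.

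Next, I would apply the Kitaev--Preskill TEE extractor to a Mercedes tripartition $A,B,C$ of a large contractible disk on the torus, chosen so that each region and each union $AB,BC,AC,ABC$ is contractible of linear size. The combination $S_A+S_B+S_C-S_{AB}-S_{BC}-S_{AC}+S_{ABC}$ cancels area-law contributions and isolates the TEE. By the first step, this combination takes the same value (up to $o(1)$) on any code state $\psi$ and on $\Gamma$, so $\gamma(\psi)=\gamma(\Gamma)+o(1)$; the TEE is a code-space invariant, which already proves that all code states share whatever TEE $\Gamma$ has.

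For the lower bound $\gamma\geq k/2$, I would exploit the non-contractible topology of the torus. Consider an alternative Kitaev--Preskill configuration $A',B',C'$ in which each piece is a contractible arc but the triple union $A'B'C'$ forms a non-contractible annulus wrapping one cycle. Comparing the modified identity for $\Gamma$ with that for $\psi$, AQEC collapses all contractible-region differences and leaves only $S(\Gamma_{A'B'C'})-S(\psi_{A'B'C'})$, which measures the logical information $\chi$ accessible on the annulus. Repeating the construction on annuli around each of the two independent cycles of the torus, and invoking $S(\Gamma)=k$ together with subadditivity on a cellular decomposition that fuses the two annular complements, the accumulated logical information across the two cycles must be at least $k$; by symmetry and pigeonhole, $\chi\geq k/2$ on some annulus, and feeding this back through the modified Kitaev--Preskill identity (combined with the Mercedes one) yields $\gamma\geq k/2$. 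For the abelian prototype---e.g., the $\mathbb{Z}_N$ toric code---exactly half of the logical operators are detectable on each non-contractible annulus, giving $\chi=k/2$ and saturating the bound; the corollary follows by contrapositive.

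The main obstacle is this last step: the non-contractible annulus is not protected by AQEC, so relating $\chi$ to $\gamma$ and partitioning $S(\Gamma)=k$ across the two independent cycles must proceed indirectly through subadditivity and area-law bookkeeping. This is where the proof is most delicate, and where the abelian structure is what makes the bound close up exactly at $k/2$.
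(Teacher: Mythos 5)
Your first two steps are sound and parallel the paper's opening moves: the Fannes--Audenaert continuity bound $|S(\psi_R)-S(\Gamma_R)|\leq\frac{\varepsilon}{2}|R|+H_2(\varepsilon/2)=o(1)$ on contractible linear-size regions, and the observation that a signed combination of subregion entropies cancels the area-law terms and isolates $\gamma$. But the step that actually produces the quantitative bound $\gamma\geq k/2$ is missing, and you essentially concede this yourself. The paper's mechanism is the Markov entropy decomposition: a telescoping chain of strong-subadditivity inequalities
\begin{equation}
S(A_mB_mC_m)\leq S(A_1B_1)+\sum_{i=1}^{m}\bigl(S(B_iC_i)-S(B_i)\bigr),
\end{equation}
applied to $\Gamma$, where $A_iB_iC_i=A_{i+1}B_{i+1}$, the whole torus is reached at step $m$, and \emph{every} region appearing on the right-hand side is contractible (or has controlled boundary count with $b(B_i)-b(B_iC_i)=1$), hence has entropy $al(R)-b(R)\gamma+o(1)$ by the AQEC hypothesis. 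The area terms cancel, each step contributes one $+\gamma$, and the left-hand side equals $k$, giving $k\leq(m-1)\gamma$. The factor $2$ in $k/2$ is the minimal $m-1$ with $m=3$, forced by the torus topology when $d>n/4$ --- it has nothing to do with splitting $k$ across the two homology cycles.

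Your proposed route through non-contractible annuli does not close this gap. The hypothesis $\varepsilon=o(1/n)$ is only assumed for \emph{contractible} regions, so $S(\Gamma_{A'B'C'})-S(\psi_{A'B'C'})$ on an annulus wrapping a cycle is not controlled by anything you have established; the quantity $\chi$ you introduce is therefore undefined within your framework, and the ``subadditivity on a cellular decomposition that fuses the two annular complements'' plus ``pigeonhole'' is not an argument --- it is a placeholder for exactly the inequality you need. To repair the proof you would have to replace that paragraph with a concrete chain of entropy inequalities in which only contractible pieces appear and the global entropy $S(\Gamma)=k$ enters on one side; that is precisely what the Markov decomposition supplies. (Your saturation claim for abelian topological order is fine as a statement --- $\gamma=\log 2$, $k=2$ for the toric code --- but the paper simply cites this; it is not derived from annulus counting.)
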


Full formal results are given in App.~\ref{app:tee}.
The corollary provides a simple physical diagnostic for ``bad'' code states. 
Also note that the best lower bound $\gamma\geq k/2$ is attained when $d>n/4$, and this $n/4$ can be improved to $cn$ with any $c>0$ if we further require a local recoverability feature (which is expected to hold generally for topological order) by leveraging the Expansion Lemma from Ref.~\cite{Flammia2017limitsstorageof}.

This result  is proven using the prescription from Ref.~\cite{IssacKim-TOEntropy}. The main idea is to apply the Markov entropy decomposition~\cite{Hastings2011MED} to relate $k$ and a signed sum of subregion entropies in which all the area law terms are canceled out, leaving only TEE with corrections due to subsystem variance. When $\varepsilon=o(1/n)$, the corrections turn out to be vanishingly small, ensuring nontrivial bounds on TEE.  
Conversely, when $\varepsilon=\omega(1/n)$, there is no nontrivial bound because TEE cannot overshadow the corrections.
We further note that this result does not hinge on a strict area law, i.e.,~small fluctuations of the correction $\gamma$ are allowed, in which case the lower bound is for the average TEE.
By considering a deformation of the toric code through adding string tension (such that the code states are string-net wavefunctions with tension)~\cite{Wen:ToricCodeTension}, we can construct a physically interesting example of a code family  with tunable AQEC error where the TEE vanishes as the AQEC error increases, in accordance with our results (also see App.~\ref{app:tee}).  Details and extended discussion of this section can be found in App.~\ref{app:tee}.

To conclude, a key takeaway is that TEE and long-range entanglement have a similar $\sim 1/n$ robustness against AQEC error.
It is worth emphasizing that our discussion applies to any subspace, not hinging on gapped ground spaces (associated with Hamiltonians) as conventional in the context of topological order.

\section{Critical systems and conformal field theories\label{sec:CFT}}

Critical quantum systems, widely described by conformal field theory (CFT),  represent another prominent type of quantum order with wide-ranging physical significance.
The nature of critical systems, specifically their gaplessness and scale invariance, suggests a universal presence of highly nontrivial  entanglement that supports interesting quantum coding properties. 
Notably, in the context of quantum gravity, the concept of CFT codes is expected to play a pivotal role, in light of the fundamental connection \cite{Almheiri2015,harlow2018tasi}  between QEC and AdS/CFT correspondence \cite{Maldacena1999,Witten1998}.    

Physically, it is most natural to consider the low-energy sectors as code spaces, where $k$ does not scale with the system size and all states are CFT states. As we will now demonstrate, they generally give rise to intrinsic AQEC codes whose properties are closely connected to the physics of the system and can be concretely analyzed by employing techniques from the field of CFT. 
Let the system be defined on a hypersphere $S^D$ of $D\in\mathbb{Z}^+$ spatial dimensions.  Using the state-operator correspondence~\cite{TASI_CFT2016} on the cylinder geometry $S^D\times R$, it can be shown that the one-point functions $\langle \phi_\beta|\phi_\alpha|\phi_\gamma\rangle$ for code states $|\phi_{\beta,\gamma}\rangle$ and  local ($O(1)$-size) primary operator $\phi_\alpha$, which can be  related to the Knill--Laflamme conditions for QEC, exhibit the scaling behavior~\cite{cardy1984conformal,cardy1988universal}
\begin{equation}
\label{eq:1pf-scaling}
\langle \phi_\beta|\phi_\alpha|\phi_\gamma\rangle =\Theta\left( \frac{1}{ n^{\Delta_\alpha/D}}\right),
\end{equation}
where $n$ is the total system size and $\Delta_\alpha$ is the scaling dimension of $\phi_\alpha$.  
These one-point functions are studied for extracting the conformal data in the CFT literature  \cite{cardy1986,blote1986,zou2020,Hu2023}. 
Here we can use them to compute the AQEC error scaling.  For the sake of nontrivial circuit complexity arguments, consider  operators  (that are not necessarily spatially local) of size $O(\mathrm{loglog}(n))$, which are sufficiently small and can thus be well approximated by a product of local operators via proper renormalization group flow so that the scaling in (\ref{eq:1pf-scaling}) is expected to hold up to insignificant (at most $\mathrm{polylog}(n)$) factors. Then it can be shown that  $\varepsilon$ follows a power law scaling with the exponent determined by the minimum scaling dimension $\varDelta$, precisely, 
\begin{equation}
    \varepsilon=\Tilde{\Theta}(n^{-\varDelta/D}).
\end{equation}
   Crucially, this shows that the polynomial AQEC error is a  fundamental nature of CFT, different from e.g.~topological order. 
App.~\ref{app:cft} provides a more detailed exposition of the derivation.

As per Corollary~\ref{cor:scaling}, our theory has the following implication that could be of particular interest in relation to holography: if the minimum scaling dimension $\varDelta>D$ (so $\varepsilon = \Tilde{o}(1/n)$), then any state in the CFT code is not just long-range entangled (as can  also be inferred from general spatial correlation properties of CFT using Lieb--Robinson-type arguments \cite{Lieb1972,BHV06}), but actually has the fundamentally stronger feature of being intrinsically nontrivial, i.e.,~has superconstant all-to-all complexity. 
This all-to-all property is morally congruent with the ultrastrong long-range interactions and  chaotic behaviors (think e.g.~SYK model) which are strong physical signatures of nontrivial gravity duals in  AdS/CFT \cite{kitaev2015,PhysRevD.94.106002,Heemskerk_2009,maldacena2016,witten2007threedimensional}, providing new insights into the duality from a quantum information perspective. 
Indeed, large scaling dimension is associated with large central charge \cite{Hellerman:2009bu, Rattazzi:2010gj}  and bulk field mass \cite{GUBSER1998105,Witten1998}, which are in accordance with strong coupling and complexity.  While our understanding of their relationship is incomplete, 
the connections among all these perspectives indicates that AQEC plays a profound role in the physics of CFT and gravity that could be fruitful to further study.  

Remarkably, the threshold of our complexity results' applicability, $\varDelta = D$, is of special physical significance as global symmetry current operators  have scaling dimension $D$. 
Therefore, if the system has a global symmetry, $\varDelta$ is capped at $D$ so that the CFT code obeys $\varepsilon=\Omega(1/n)$, thereby falling outside the regime of universal nontrivial complexity. As an implication, AQEC parameters could be used to probe symmetries.
This is consistent with the scaling limit of the AQEC error of covariant codes \cite{faist20,Woods2020continuousgroupsof,Zhou2021newperspectives,liu2022quantum} and
highlights the significance of the $\sim 1/n$ boundary from yet another angle.  
In addition, $\varDelta>D$ implies a fundamental tension with global symmetries on both sides of AdS/CFT  \cite{harlow2021symmetries}, which could be a situation of special significance given the key role of symmetries in quantum gravity \cite{Misner1957,PhysRevD.83.084019,harlow2021symmetries}.  
All things considered, our discussions suggest compelling motivations to look for and study CFTs in the $\varDelta>D$  regime.  This problem is nontrivial and interesting by itself because such a theory is not forbidden by known constraints in any $D$ (note that the stress-energy tensor has scaling dimension $D+1$), but we have limited knowledge of natural examples.

\section{Discussion and outlook}
\label{sec:discussion}
We studied general quantum code subspaces that do not necessarily enable exact QEC, offering a systematic understanding of their nontriviality.
We proved code error thresholds below which circuit complexity lower bounds for any code state remain robust, for any geometry.
Through the examples of topological order and critical systems that respectively represent gapped and gapless order, we have demonstrated that AQEC provides a powerful unifying lens for understanding the physics of complex many-body quantum systems, highlighting the value of insights from quantum information in physics.  

A noteworthy phenomenon is that intrinsically approximate codes with power-law error scaling, although highly atypical in the sense that randomly constructed codes  almost always exhibit exponentially small errors, naturally arise 
in wide-ranging physical scenarios like gapless systems in fundamental ways.
In particular, error scaling near the $\sim 1/n$ boundary often emerges hand in hand with symmetries or specific structures of states.  Based on our theory, the nontrivial order associated with such scenarios, which we call \emph{marginal order}, is expected to represent a general type of order that is fundamentally distinct from topological order in its stability and other physical properties. It would be interesting to further investigate this notion.  Note that the approximate Eastin--Knill theorems \cite{faist20,Woods2020continuousgroupsof,Zhou2021newperspectives,PhysRevLett.126.150503,PhysRevResearch.4.023107,liu2022quantum,liu2022approximate} place codes with continuous transversal gates or symmetries outside the acceptable code regime,  further strengthening the notion of incompatibility between symmetry and QEC.   

Overall, the $\sim 1/n$ boundary is worth further study. Note that the the approximate Eastin--Knill theorems \cite{faist20,Woods2020continuousgroupsof,Zhou2021newperspectives,PhysRevLett.126.150503,PhysRevResearch.4.023107,liu2022quantum,liu2022approximate} place codes with continuous transversal gates or symmetries outside the acceptable code regime,  further strengthening the notion of incompatibility between transversality and QEC.

The CFT codes represent  a family of intrinsically AQEC codes that warrant deeper investigation, especially because of their importance to the understanding of  quantum criticality as well as quantum gravity.
Specifically, our preliminary discussion  point towards several interesting avenues for more rigorous consideration, such as the existence of nontrivial gravity duals and implications for symmetries in quantum gravity. Recall also that our theory suggests particularly sharp properties of
$\varDelta > D$ theories, which would be interesting to explore.

Furthermore, it would be valuable to extend our theory to continuous variable and fermionic systems, mixed states, and more general QEC settings.

To conclude, a key takeaway is that we expect AQEC codes to substantially extend the scope and utility of conventional notions of QEC in the realms of practical quantum technologies, complexity theory, and physics. We hope this study sparks further exploration in AQEC and its physical as well as practical applications.

\begin{acknowledgments}
We thank Anurag Anshu, Yin-Chen He, Han Ma, Shengqi Sang, Beni Yoshida, Sisi Zhou, Zheng Zhou, Yijian Zou for valuable discussions and feedback. J.Y. would like to thank Anton Burkov for his  support. 
D.G.\ is partially supported by the National Science Foundation (RQS QLCI grant OMA-2120757).
Z.-W.L.\ is partially supported in part by a startup funding from YMSC, Tsinghua University, and NSFC under Grant No.~12475023.
Research at Perimeter Institute is supported in part by the Government of Canada through the Department of Innovation, Science and Economic Development Canada and by the Province of Ontario through the Ministry of Colleges and Universities.
\end{acknowledgments}

\bibliography{qec}

\onecolumngrid
\appendix

\section{Approximate quantum error correction and subsystem variance}
\label{app:aqec}

In this section, we provide detailed information for the preliminaries of AQEC  briefly introduced in Sec.~\ref{sec:prelim} of the main text. Specifically, we will formally define two  different types of quantitative measures of deviation from exact QEC codes---namely, QEC inaccuracy and subsystem variance---and examine their relationship. 

The key point here is that we do not require the quantum code to encode logical information in such an ideal manner that it can be perfectly recovered (or decoded) after the system undergoes noise. Instead, we permit the process to be approximate. 
The degree of approximation  is naturally characterized by how well the recovery procedure can restore the logical information.  More precisely, we consider the optimal distance between the effective channel of the entire coding procedure and the logical identity channel which represents the situation where no logical information is lost.  Indeed, if a recovery map exists such that this distance is zero, that is, the logical information can be perfectly recovered, we say that exact QEC has been achieved. The larger this distance is, the further the procedure deviates from perfect recovery.

We now define the specific metrics that will be used. In the following, $\onenorm{O}$ denotes the trace norm of operator $O$ given by $\onenorm{O} := \tr\sqrt{O^\dagger O}$.
The purified distance between quantum states $\rho$ and $\sigma$ is defined as
\begin{equation}
    P(\rho, \sigma)\coloneqq\sqrt{1-f(\rho, \sigma)^2},
\end{equation}
where $f$ is the Uhlmann fidelity
\begin{equation}
    f(\rho, \sigma)\coloneqq\|\sqrt{\rho} \sqrt{\sigma}\|_1=\operatorname{Tr} \sqrt{\sqrt{\rho} \sigma \sqrt{\rho}}.
\end{equation}
Then the proper channel version of the purified distance, known as the completely bounded  purified distance, between two quantum channels $\mathcal{M}_1$ and $\mathcal{M}_2$, is defined as 
\begin{equation}
    P(\mathcal{M}_1, \mathcal{M}_2)\coloneqq\sqrt{1-F(\mathcal{M}_1, \mathcal{M}_2)^2},
\end{equation}
where $F$ is the completely bounded fidelity of channels given by
\begin{equation}
    F(\mathcal{M}_1, \mathcal{M}_2)\coloneqq\min_\rho f((\mathcal{M}_1 \otimes \id)(\rho),(\mathcal{M}_2 \otimes \id)(\rho)),
\end{equation}
with the optimization running over input states on any extended system.

\begin{defn}[QEC inaccuracy]
For encoding map $\mathcal{E}$ and noise channel $\mathcal{N}$ acting on the physical system, the \emph{QEC inaccuracy} is defined as 
    \begin{equation}
\tilde\varepsilon\left(\mathcal{N}, \mathcal{E}\right)\coloneqq\min _{\mathcal{R}} P\left(\mathcal{R} \circ \mathcal{N} \circ \mathcal{E}, \id_L\right),
\end{equation}
where $P$ is the channel purified distance and $\id_L$ denotes the logical identity channel.
\end{defn}
This quantity can be characterized using the complementary channel formalism  \cite{BO10:AKL}. 
Indeed, one can always view  ${ \mathcal { N } \circ \mathcal { E } }$ as a unitary mapping from the logical system $L$ to a joint system of the physical system $S$
and an environment system $E$ followed by tracing out $E$. The complementary channel of ${ \mathcal { N } \circ \mathcal { E } } $ given by tracing out $S$ instead, denoted
by $\widehat { \mathcal { N } \circ \mathcal { E } }$, describes the leakage of logical information due to the noise channel. Intuitively, if this complementary channel is a constant channel, i.e.,~it is independent of the logical state,  no logical information is leaked to the environmen, thus recovery on the physical system is possible; on the other hand, if the complementary channel is dependent on the logical state, it indicates that some logical information is leaked, which prohibits perfect recovery.  
A precise characterization is given by
\begin{equation}\label{eq:complementary-worst}
    \tilde\varepsilon\left(\mathcal{N}, \mathcal{E}\right)=\min _\zeta \max _{R,\ket{\psi}^{L R}} P\left(( \widehat { \mathcal { N } \circ \mathcal { E } } { } ^ { L \rightarrow E } \otimes \mathrm { id } ^ { R } ) (\ketbra{\psi}{\psi}^{L R}),(\mathcal{T}_\zeta^{L \rightarrow E} \otimes \mathrm{id}^R)(\ketbra{\psi}{\psi}^{L R})\right).
\end{equation}
where $R$ is a reference system, and $\mathcal{T}_\zeta(\rho) = \tr(\rho)\zeta$ is a constant channel that outputs state $\zeta$ in the environment.

We will also use the following adapted form of \cite[Cor.\ 2]{BO10:AKL} (see also Ref.~\cite{Ng10}).
\begin{lem}
\label{lemma:KL-general}
Let $\Pi$ be the projector onto the code subspace of an isometric quantum code, and consider noise channel $\mathcal{N}_S(\cdot) = \sum_\alpha K_\alpha(\cdot)K_\alpha^\dagger$. 
    Then 
\begin{equation}
\tilde\varepsilon(\mathcal{N},\mathcal{E}) = \min_{{\Lambda}} {P({\Lambda},{\Lambda}+{\mathcal{B}})}, 
\end{equation}
where ${\Lambda}(\rho) = \sum_{\alpha\beta} \lambda_{\alpha\beta} \Tr(\rho) \ket{\alpha}\bra{\beta}$, $({\Lambda}+{\mathcal{B}})(\rho) = {\Lambda}(\rho) + \sum_{\alpha\beta} \Tr(\mathcal{E}(\rho){B }_{\alpha\beta} ) \ket{\alpha}\bra{\beta}$, with $\ket{\alpha},\ket{\beta}$ being orthonormal basis states, and $\lambda_{\alpha\beta}$ and ${B }_{\alpha\beta}$ are constant numbers and operators satisfying 
\begin{equation}
    \Pi K_\alpha^\dagger K_\beta \Pi = \lambda_{\alpha\beta} \Pi + \Pi {B }_{\alpha\beta} \Pi.  \label{eq:kl-approx}
\end{equation}
\end{lem}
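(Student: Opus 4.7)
The plan is to reduce the claim directly to the complementary-channel characterization of QEC inaccuracy given in (\ref{eq:complementary-worst}), by computing $\widehat{\mathcal{N}\circ\mathcal{E}}$ explicitly and reading off a canonical decomposition $\Lambda+\mathcal{B}$ from the approximate Knill--Laflamme relation (\ref{eq:kl-approx}). Let $V$ denote the isometry implementing $\mathcal{E}$, so that $VV^\dagger=\Pi$ and $V^\dagger V=I_L$. Since the Kraus operators of $\mathcal{N}\circ\mathcal{E}$ are $\{K_\alpha V\}$, the standard Stinespring construction gives
\begin{equation}
\widehat{\mathcal{N}\circ\mathcal{E}}(\rho)=\sum_{\alpha\beta}\Tr\!\left(V^\dagger K_\beta^\dagger K_\alpha V\,\rho\right)\ket{\alpha}\bra{\beta}.
\end{equation}
Inserting $V=\Pi V$ on both sides, substituting (\ref{eq:kl-approx}), and using $V^\dagger \Pi V=I_L$ produces the split $\widehat{\mathcal{N}\circ\mathcal{E}}=\Lambda+\mathcal{B}$, with $\Lambda(\rho)=\sum_{\alpha\beta}\lambda_{\alpha\beta}\Tr(\rho)\ket{\alpha}\bra{\beta}$ and $\mathcal{B}(\rho)=\sum_{\alpha\beta}\Tr\!\left(B_{\alpha\beta}\mathcal{E}(\rho)\right)\ket{\alpha}\bra{\beta}$ as in the statement (up to a harmless relabeling of indices).

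The second step is to recognize that the optimization over constant channels $\mathcal{T}_\zeta$ in (\ref{eq:complementary-worst}) is literally the same as the optimization over $\Lambda$'s of the declared form. Indeed, each $\mathcal{T}_\zeta$ is itself such a $\Lambda$ with $\lambda_{\alpha\beta}=\bra{\alpha}\zeta\ket{\beta}$, and conversely any $\Lambda$ whose coefficient matrix $(\lambda_{\alpha\beta})$ is a valid density operator equals some $\mathcal{T}_\zeta$. Combining this with the first step yields
\begin{equation}
\tilde\varepsilon(\mathcal{N},\mathcal{E})=\min_\zeta P\!\left(\mathcal{T}_\zeta,\widehat{\mathcal{N}\circ\mathcal{E}}\right)=\min_{\Lambda}P(\Lambda,\Lambda+\mathcal{B}),
\end{equation}
which is the asserted identity. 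The maximization over references and purifications built into $P$ in (\ref{eq:complementary-worst}) matches the completely bounded purified-distance lift used here, so no further manipulation of the metric is needed.

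The main obstacle is bookkeeping rather than substance. One must (i) fix index conventions in the Stinespring dilation so that $\Lambda$ and $\mathcal{B}$ emerge with exactly the form in the statement (the indices on $K_\alpha^\dagger K_\beta$ versus $K_\beta^\dagger K_\alpha$ enter via the choice of environment basis and can be absorbed by relabeling $\ket{\alpha},\ket{\beta}$), and (ii) ensure that the tacit restriction of $\min_{\Lambda}$ to those $\Lambda$'s corresponding to valid states $\zeta$ matches the unrestricted form written in the claim. For (ii) it suffices to notice that the $\lambda_{\alpha\beta}$'s produced by (\ref{eq:kl-approx}) when applied to the maximally mixed code state form a valid density matrix, so the admissible set of $\Lambda$'s is a nonempty compact convex set and the infimum is attained.
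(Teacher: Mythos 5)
Your derivation is correct, but note that the paper does not actually prove this lemma at all: it is introduced verbatim as ``an adapted form of [Cor.\ 2] of B\'eny--Oreshkov'' and imported by citation, so there is no in-paper proof to match. What you have done is supply the missing bridge between the two imported facts: you compute $\widehat{\mathcal{N}\circ\mathcal{E}}(\rho)=\sum_{\alpha\beta}\Tr(V^\dagger K_\beta^\dagger K_\alpha V\rho)\ket{\alpha}\bra{\beta}$ from the Stinespring dilation with Kraus operators $K_\alpha V$, split it as $\Lambda+\mathcal{B}$ using $V=\Pi V$, $V^\dagger\Pi V=I_L$ and Eq.~\eqref{eq:kl-approx}, and then observe that since $\Lambda+\mathcal{B}=\widehat{\mathcal{N}\circ\mathcal{E}}$ for every admissible choice of $\lambda_{\alpha\beta}$, the minimization over $\Lambda$ is exactly the minimization over constant channels $\mathcal{T}_\zeta$ in Eq.~\eqref{eq:complementary-worst}. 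All of these steps are sound, and your two flagged caveats are handled appropriately: the index order ($K_\beta^\dagger K_\alpha$ versus $K_\alpha^\dagger K_\beta$) is a choice of environment basis under which the set of constant channels and the purified distance to it are invariant, and the implicit restriction of $\min_\Lambda$ to channels (i.e.\ $(\lambda_{\alpha\beta})$ a density operator) is forced anyway by the requirement that $P(\Lambda,\cdot)$ be defined. The one thing to be clear about is where the mathematical weight sits: the nontrivial content --- that the optimal recovery error equals the distance from the complementary channel to the nearest constant channel --- is entirely contained in Eq.~\eqref{eq:complementary-worst}, which is itself the B\'eny--Oreshkov information--disturbance duality and which you (like the paper) take as given. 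So your argument is a correct and useful reduction of the lemma to that characterization, not an independent proof of the duality; within the paper's logical structure that is exactly the right level of rigor.
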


From the definition, it can be observed that $\mathcal{B}$ essentially characterizes the violation of the Knill--Laflamme conditions.

It is important to note that the QEC inaccuracy is defined with respect to a specific noise channel and the corresponding  recovery maps. 
What intrinsic properties of the code is it related to? To answer this, we introduce the following type of quantities.

\begin{defn}[Subsystem variance]
Given a code subspace $\mathfrak{C}$ (i.e., the image of the encoding map $\mathcal{E}$), consider a reference state given by the maximally mixed state of  $\mathfrak{C}$, namely, the equal statistical mixture of a set of basis states $\{\ket{\psi_i}\}$ that span $\mathfrak{C}$,
\begin{equation}
\label{eq:ref_state}
    \Gamma\coloneqq\frac{1}{2^k} \sum_{i=1}^{2^k}   |\psi_i\rangle\langle\psi_i|.
\end{equation}
Consider subsystems that are connected (local) with respect to a given adjacency graph $\mathsf{G}$ associated with the physical system.
The \emph{subsystem variance} for a particular subsystem $R$ is defined as
    \begin{align}
            \varepsilon_\mathsf{G}(\mathfrak{C},R)& \coloneqq \max_{\sigma\in\mathfrak{C}}\onenorm{\sigma_R - \Gamma_R}.
       \end{align}
       where  $\sigma=\mathcal{E}(\rho)$ is a code state (subscript $R$ denotes the reduced state on $R$).
       By convexity, the maximum is necessarily attained by some pure state.
       Then the \emph{overall subsystem variance} for size-$d$ subsystems is 
       \begin{align}
\varepsilon_\mathsf{G}(\mathfrak{C},d)&\coloneqq\max_{|R|\leq d}\varepsilon_\mathsf{G}(\mathfrak{C},R).
    \end{align}
\end{defn}

To gain some intuition, observe that zero subsystem variance implies that all code states are completely locally indistinguishable, thus the subsystem does not carry any logical information. As a result, noise actions within the subsystem cannot cause any  leakage of logical information into the environment and are therefore exactly recoverable.   On the other hand, if the subsystem variance is nonzero, then local noise actions can cause logical information leakage, prohibiting exact QEC.

Now let us discuss the general relationship between QEC inaccuracy and subsystem variance in a quantitative sense.  
In light of Lemma~\ref{lemma:KL-general}, we need to understand the connection between $\mathcal{B}$ and subsystem variance.  We now prove the following equivalence relation for noise effects represented by replacement channels, i.e., channels that outputs a fixed state independent of the input, the most important cases of which are erasure noise that outputs some garbage state, and completely depolarizing noise that outputs the maximally mixed state.
\begin{prop}\label{thm:replacement_cond}
Consider a noise channel $\mathcal{N}$ given by a replacement channel acting nontrivially on subsystem $R$, i.e., $ {\mathcal{N}}(\sigma)=(\Tr_{R}\sigma)_{\ols R}\otimes\gamma$ for a fixed state $\gamma$ ($\ols R$ denotes the complement of $R$). {Taking $\lambda_{\alpha\beta}=\Tr(\Gamma K_{\alpha}^\dagger K_\beta)$ in Lemma~\ref{lemma:KL-general}, } for any code and any state $\rho$, we have 
\begin{equation}
         \onenorm{\mathcal{B}(\rho)}= \onenorm{\sigma_R-\Gamma_R}.\label{eq:B=delta}
\end{equation}
\end{prop}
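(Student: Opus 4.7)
The plan is to reduce the operator $\mathcal{B}(\rho)$ to a simple tensor product through an explicit Kraus decomposition of the replacement channel, from which the trace-norm identity follows immediately.

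First, I would unpack Lemma~\ref{lemma:KL-general} with the proposed choice $\lambda_{\alpha\beta}=\Tr(\Gamma K_\alpha^\dagger K_\beta)$. Since $\Gamma=\Pi/2^k$ and $\sigma=\mathcal{E}(\rho)$ lives in the code, one has $\Pi\sigma\Pi=\sigma$ and $\Pi\Gamma\Pi=\Gamma$, so Eq.~\eqref{eq:kl-approx} gives
\begin{equation}
\Tr(\sigma B_{\alpha\beta}) = \Tr\bigl(\sigma\, \Pi(K_\alpha^\dagger K_\beta-\lambda_{\alpha\beta}I)\Pi\bigr)
= \Tr\bigl((\sigma-\Gamma)K_\alpha^\dagger K_\beta\bigr).
\end{equation}
Therefore
\begin{equation}
\mathcal{B}(\rho) \;=\; \sum_{\alpha\beta} \Tr\bigl((\sigma-\Gamma)K_\alpha^\dagger K_\beta\bigr)\,|\alpha\rangle\langle\beta|.
\end{equation}
Note that under the unitary freedom $K_\alpha\mapsto\sum_\beta U_{\alpha\beta}K_\beta$ in choosing Kraus operators, the operator $\mathcal{B}(\rho)$ transforms by conjugation $U\mathcal{B}(\rho)U^\dagger$, leaving $\|\mathcal{B}(\rho)\|_1$ invariant, so I am free to pick any convenient Kraus decomposition.

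Next, I would write the replacement channel $\mathcal{N}(\sigma)=(\Tr_R\sigma)_{\ols R}\otimes\gamma$ in a canonical form: diagonalize $\gamma=\sum_i p_i|i\rangle\langle i|$ on $R$, fix an orthonormal basis $\{|j\rangle\}$ of $R$, and take Kraus operators
\begin{equation}
K_{ij} \;=\; \sqrt{p_i}\,|i\rangle_R\langle j|_R \otimes I_{\ols R}.
\end{equation}
A direct computation gives $K_{ij}^\dagger K_{kl}=\delta_{ik}p_i\,|j\rangle_R\langle l|_R\otimes I_{\ols R}$, and hence
\begin{equation}
\Tr\bigl((\sigma-\Gamma)K_{ij}^\dagger K_{kl}\bigr) \;=\; \delta_{ik}\,p_i\,\langle l|(\sigma_R-\Gamma_R)|j\rangle.
\end{equation}
Substituting into the expression for $\mathcal{B}(\rho)$ collapses the double sum into a tensor product
\begin{equation}
\mathcal{B}(\rho) \;=\; \Bigl(\sum_i p_i|i\rangle\langle i|\Bigr)\otimes\Bigl(\sum_{jl}\langle l|(\sigma_R-\Gamma_R)|j\rangle\,|j\rangle\langle l|\Bigr)
\;=\; \gamma \otimes (\sigma_R-\Gamma_R)^{T},
\end{equation}
where the transpose is taken in the $\{|j\rangle\}$ basis.

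Finally, using multiplicativity of the trace norm under tensor products together with $\onenorm{\gamma}=1$ and the invariance of the trace norm under transposition gives
\begin{equation}
\onenorm{\mathcal{B}(\rho)} \;=\; \onenorm{\gamma}\cdot \onenorm{(\sigma_R-\Gamma_R)^T} \;=\; \onenorm{\sigma_R-\Gamma_R},
\end{equation}
which is the claim. The only real step with any substance is verifying that the indices collapse cleanly into the tensor product form in the key display; everything else is algebraic bookkeeping. Conceptually, the result is natural because a replacement channel forgets $R$ entirely, so the only channel-to-environment signature of the input state is its marginal on $R$, which is exactly what subsystem variance measures.
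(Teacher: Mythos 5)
Your proposal is correct and follows essentially the same route as the paper: the same Kraus decomposition $K_{\alpha i}=\sqrt{c_\alpha}\,|\alpha\rangle\langle i_R|\otimes\mathbbm{1}_{\ols R}$, the same reduction of $\Tr(\sigma B_{\alpha\beta})$ to $\Tr\bigl((\sigma-\Gamma)K_\alpha^\dagger K_\beta\bigr)$ via the code-projector identities, and the same block-diagonal/tensor-product structure $\gamma\otimes(\sigma_R-\Gamma_R)^T$ yielding the trace-norm identity. Your added remark on the unitary (isometric) freedom of Kraus decompositions is a nice touch that the paper leaves implicit, but it does not change the substance of the argument.
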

\begin{proof}
    Recall the definition of $\mathcal{B}(\rho)$,
\begin{equation}
\label{eq:defB}
    \mathcal{B}(\rho)=\sum_{\alpha\beta}\Tr(\mathcal{E}(\rho)B_{\alpha\beta})|\alpha\rangle\langle \beta|,
\end{equation}
where
\begin{equation}\label{eq:B }
    \Pi {B }_{\alpha\beta} \Pi=\Pi K_\alpha^\dagger K_\beta \Pi - \lambda_{\alpha\beta} \Pi, 
\end{equation}
with the Kraus operators associated with the noise channel $ {\mathcal{N}}(\sigma)=(\Tr_{R}\sigma)_{\ols R}\otimes\gamma$. Suppose the spectral decomposition of $\gamma$ is given by $\gamma=\sum_{\alpha}c_\alpha |\alpha\rangle\langle\alpha|$. Then the Kraus operators can be taken  to be $K_{\alpha i}=\sqrt{c_\alpha}|\alpha\rangle\langle i_R|\otimes\mathbbm{1}_{\ols{R}}$ where $\{\ket{i_R}\}$ is a basis of $\mathcal{H}_R$. So we obtain
\begin{align}
      \mathcal{B}(\rho)_{\alpha i, \beta j}&=\Tr(\sigma B_{\alpha i, \beta j})\\&=\Tr(\sigma \Pi B_{\alpha i, \beta j}\Pi)\\
      &=\Tr(\sigma \Pi K_{\alpha i}^\dagger K_{\beta j} \Pi)-\Tr(\sigma \lambda_{\alpha i,\beta j}\Pi)\\
      &=\Tr(\sigma K_{\alpha i}^\dagger K_{\beta j})-\Tr(\Gamma K_{\alpha i}^\dagger K_{\beta j})\\
      &=\delta_{\alpha\beta} c_\alpha(\sigma_R-\Gamma_R)_{{j_R}{i_R}},
      \label{eq:B_ab}
\end{align}
where $\sigma_R$ and $\Gamma_R$ are, respectively, the reduced density matrices of $\sigma$ and $\Gamma$ on subsystem $R$. 
We have repeatedly used the code state property $\Pi\psi\Pi = \psi$ and the cyclic property of the trace function. 
In the third line, we used (\ref{eq:B }).
In the fourth line, we used $\lambda_{\alpha\beta}=\Tr(\Gamma K_\alpha^\dagger K_\beta)$.
From this, we can see that $\mathcal{B}(\rho)$ is a block-diagonal matrix with each block being $(\sigma_R-\Gamma_R)^T$, so we have
\begin{equation}
    \onenorm{\mathcal{B}(\rho)}=\sum_\alpha c_\alpha\onenorm{(\sigma_R-\Gamma_R)^T}
    =\onenorm{\sigma_R-\Gamma_R},
\end{equation}
where we used $\sum_\alpha c_\alpha =1$.
\end{proof}

For general channels, {with $\lambda_{\alpha\beta}$ taken in the same way}, we have
$       \onenorm{\mathcal{B}(\rho)}\leq\onenorm{\sigma_R-\Gamma_R}$.
This can be shown by observing that for a general noise channel $\mathcal{N} = \check{\mathcal{N}}_R\otimes\mathrm{id}_{\ols R}$ acting on a subsystem $R$ with $\check{\mathcal{N}}_R=\sum_{\alpha} \check{K}_{\alpha}(\cdot)\check{K}_\alpha^\dagger$, we can define the following linear transformation 
\begin{equation}
    \mathcal{E}_{\mathcal{B}}(\sigma)\coloneqq\sum_{\alpha,\beta}\Tr(\check{K}_\beta\sigma\check{K}_\alpha^\dagger)|\alpha\rangle\langle\beta|,
 \end{equation}
 {One can see that if we take $\lambda_{\alpha\beta}=\Tr(\Gamma K_{\alpha}^\dagger K_\beta)$ in Lemma~\ref{lemma:KL-general}, $\mathcal{B}(\rho)=\mathcal{E}_{\mathcal{B}}(\sigma_R-\Gamma_R)$. Since $\mathcal{E}_{\mathcal{B}}$ can be understood as}
 a transformation of the form $\mathcal{T}\circ \widehat{\mathcal{N}}$, where $\mathcal{T}$ denotes matrix transposition and $\widehat{\mathcal{N}}$ is a complementary channel  of the noise channel $\mathcal{N}$. Then the claim follows from the fact that the 1-norm is preserved under transposition and monotonic under quantum channels.  
The replacement channels represent extreme cases of noise channels that saturate this bound, that is, they induce the worst violation of the Knill--Laflamme conditions. Indeed, replacement channels completely destroy the logical information within their range and, as such, should set upper bounds on the effects of general noise channels.

\mycomment{
\begin{defn}[Complete noise channels]
A quantum  channel $\mathcal{N}$ is called a \emph{complete noise channel} if the following holds: Given an arbitrary code, when $\mathcal{N}$ acts on subsystem $R$ (i.e., the overall noise channel is given by $ {\mathcal{N}}\otimes \mathrm{id}_{\ols{R}}$ where $\ols{R}$ denotes the complement of $R$), for any $\rho$ we have that 
\begin{equation}
         \onenorm{\mathcal{B}(\rho)}= \onenorm{\sigma_R-\Gamma_R}.\label{eq:B=delta}
\end{equation}
\end{defn}


 It can be shown that $\onenorm{\mathcal{B}(\rho)} \leq  \onenorm{\sigma_R-\Gamma_R}$ for general channels (details after the proof of Proposition~\ref{thm:complete-noise}).
The complete noise channels constitute the ``worst'' class of noise channels that achieve the largest violations of the Knill--Laflamme conditions and  code error among all channels within the same range, thus the term ``complete''  carries a similar spirit as in e.g.~NP-complete.

We will now derive two explicit  conditions for complete noise channels based on the Kraus operators and the channel output, respectively.  It can be seen  that complete noise channels, as the name alludes to, indeed completely destroy quantum information within their range.

\begin{prop}\label{thm:complete-noise}
    A quantum  channel $\mathcal{N}:\mathcal{H}_A\rightarrow\mathcal{H}_B$ is  a {complete noise channel} if there exists a Kraus decomposition $\mathcal{N}=\sum_\alpha \check{K}_\alpha(\cdot)\check{K}_\alpha^\dagger$ that satisfies the following condition:
\begin{equation}
\label{Eq:CompleteNoise}
 \sum_{\alpha}(\check{K}_\alpha)_{{i_B}{i_A}}^*(\check{K}_{\alpha})_{{j_B}{j_A}}=\frac{1}{\dim(\mathcal{H}_B)}\delta_{{i_B}{j_B}}\delta_{{i_A}{j_A}},
\end{equation}
where $(\check{K}_\alpha)_{kl}$ denotes the $(k,l)$-th entry of the matrix $\check{K}_\alpha$ (the $A,B$ subscripts denote the space that the labels correspond to). 
Equivalently, $\mathcal{N}$ is a replacement channel with the output state having the form
\begin{equation}
    \psi_B=\frac{1}{M}\sum_{i=1}^M |\psi_i\rangle\langle\psi_i|,
\end{equation}
where $\{|\psi_i\rangle\}$ is an orthonormal basis, namely, the output state being a maximally mixed state of some space.
\end{prop}
A useful interpretation of the  condition (\ref{Eq:CompleteNoise}) can be given by considering a Hilbert space $\mathcal{H}_K$ spanned by basis states labeled by $\alpha$. Then $\sqrt{\dim(\mathcal{H}_B)}(\check{K}_\alpha)_{{i_B}{i_A}}$  represents a linear transformation that maps $|i_A\rangle\otimes|i_B\rangle$, which forms a basis of $\mathcal{H}_{A}\otimes\mathcal{H}_B$, to $\sqrt{\dim(\mathcal{H}_B)}\sum_{\alpha}(\check{K}_\alpha)_{{i_B}{i_A}}|\alpha\rangle$, which also forms a basis of $\mathcal{H}_K$ as required by the complete noise channel condition. In simpler terms, the condition implies that$\sqrt{\dim(\mathcal{H}_B)}\check{K}_\alpha$ acts as an isometry from $\mathcal{H}_A\otimes\mathcal{H}_B$ to $\mathcal{H}_K$.

\begin{proof}[Proof of Propostion~\ref{thm:complete-noise}]
Recall the definition of $\mathcal{B}(\rho)$,
\begin{equation}
    \mathcal{B}(\rho)=\sum_{\alpha\beta}\Tr(\mathcal{E}(\rho)B_{\alpha\beta})|\alpha\rangle\langle \beta|,
\end{equation}
where
\begin{equation}\label{eq:B }
    \Pi {B }_{\alpha\beta} \Pi=\Pi K_\alpha^\dagger K_\beta \Pi - \lambda_{\alpha\beta} \Pi. 
\end{equation}
Here we consider the situation where $\mathcal{N}$ acts on subsystem $R$, that is, the overall noise channel is given by $ {\mathcal{N}}\otimes \mathrm{id}_{\ols{R}}$ associated with Kraus operators $\{K_\alpha=\check{K}_\alpha\otimes \mathbbm{1}_{\ols{R}}\}$.
Since we assume ${\mathcal{N}}$ to be a complete noise channel,  the condition (\ref{Eq:CompleteNoise}) ensures that $\sqrt{\dim(\mathcal{H}_B)}\check{K}_\alpha$ is an isometry. So we obtain
\begin{align}
      \mathcal{B}(\rho)_{\alpha\beta}&=\Tr(\sigma B_{\alpha\beta})\\&=\Tr(\sigma \Pi B_{\alpha\beta}\Pi)\\
      &=\Tr(\sigma \Pi K_\alpha^\dagger K_\beta \Pi)-\Tr(\sigma \lambda_{\alpha\beta}\Pi)\\
      &=\Tr(\sigma K_\alpha^\dagger K_\beta)-\Tr(\Gamma K_\alpha^\dagger K_\beta)\\
      &=\sum_{{i_B},{i_A},{j_A}}(\check{K}_\alpha)^*_{{i_B}{i_A}}(\sigma_R-\Gamma_R)_{{j_A}{i_A}}(\check{K}_\beta)_{{i_B}{j_A}}\\
      &=\sum_{{i_B},{i_A},{j_B},{j_A}}(\check{K}_\alpha)^*_{{i_B}{i_A}}\delta_{{i_B}{j_B}}(\sigma_R-\Gamma_R)_{{j_A}{i_A}}(\check{K}_\beta)_{{j_B}{j_A}},\label{eq:B_ab}
\end{align}
where $\sigma_R$ and $\Gamma_R$ are, respectively, the reduced density matrices of $\sigma$ and $\Gamma$ on subsystem $R$. 
We have repeatedly used the code state property $\Pi\psi\Pi = \psi$ and the cyclic property of the trace function. 
In the third line, we used (\ref{eq:B }).
In the fourth line, we used $\lambda_{\alpha\beta}=\Tr(\Gamma K_\alpha^\dagger K_\beta)$.
 
 Since $\sqrt{\dim(\mathcal{H}_B)}K_\alpha$ represents an isometry that transforms a basis labeled by ${i_B}{i_A}$ to a basis labeled by $\alpha$, it preserves the 1-norm:
 \begin{align}
     \onenorm{\sum_{\alpha,\beta}\sum_{\substack{{i_B},{i_A},{j_B},{j_A}}}(\check{K}_\alpha)^*_{{i_B}{i_A}}\delta_{{i_B}{j_B}}(\sigma_R-\Gamma_R)_{{j_A}{i_A}}(\check{K}_\beta)_{{j_B}{j_A}}|\alpha\rangle\langle\beta|}
     =&~\frac{1}{\dim(\mathcal{H}_B)}\onenorm{\delta_{{i_B}{j_B}}(\sigma_R-\Gamma_R)_{{j_A}{i_A}}|i_B\rangle\langle j_B|\otimes|i_A\rangle\langle j_A|}\\
=&~\onenorm{\sigma_R-\Gamma_R}. 
 \end{align}
Combining with (\ref{eq:B_ab}) we arrive at
\begin{equation}
    \onenorm{\mathcal{B}(\rho)}=\onenorm{\sigma_R-\Gamma_R},
\end{equation}
which matches (\ref{eq:B=delta}).

We now show that an equivalent form of the Kraus decomposition condition (\ref{Eq:CompleteNoise}) is that the noise channel is a replacement channel that outputs a state of the form 
\begin{equation}
    \Gamma_B=\frac{1}{M}\sum_{i=1}^M |\psi_i\rangle\langle\psi_i|,
\end{equation}
 where $\{|\psi_i\rangle\}$ represents an orthonormal basis.
On the one hand, assuming  (\ref{Eq:CompleteNoise}), then
for an arbitrary state $\rho_A\in\mathcal{H}_A$ we have
\begin{align}
    \mathcal{N}(\rho_A)&=\sum_{\alpha,{i_B},{i_A},{j_B},{j_A}}(K_{\alpha})_{{j_B}{j_A}}(\rho_A)_{j_Ai_A}(K_\alpha)_{{i_B}{i_A}}^*|j_B\rangle\langle i_B|\\
    &=\frac{1}{\dim(\mathcal{H}_B)}\sum_{{i_B},{i_A},{j_B},{j_A}}\delta_{{i_B}{j_B}}\delta_{{i_A}{j_A}}(\rho_A)_{j_Ai_A}|j_B\rangle\langle i_B|\\
    &=\frac{1}{\dim(\mathcal{H}_B)}\sum_{i_B,j_B}\delta_{{i_B}{j_B}}|j_B\rangle\langle i_B|\\
    &=\frac{1}{\dim(\mathcal{H}_B)}\sum_{i_B}|i_B\rangle\langle i_B|,
\end{align}
where $\{|i_B\rangle\}$ is taken to be an orthonormal basis of some space $B$ from the start.
On the other hand, suppose the noise channel is indeed a replacement channel that outputs some $
    \Gamma_B=\frac{1}{M}\sum_{i=1}^M|\psi_i\rangle\langle\psi_i|
$
with $|\psi_i\rangle$ being an orthonormal basis.  Then we simply take $\mathcal{H}_B=\mathrm{Span}\left\{\ket{\psi_i}: i=1,2,\ldots,M\right\}$ with $ \dim(\mathcal{H}_B)=M$ and the Kraus operators $K_{i,\alpha}=\frac{1}{\sqrt{M}}|\psi_i\rangle\langle \alpha|$, where $\{\ket{\alpha}\}$ is a basis of $\mathcal{H}_A$. Therefore
\begin{align}
 \sum_{i,\alpha}(K_{i,\alpha})_{{i_B}{i_A}}^*(K_{i,\alpha})_{{j_B}{j_A}}&=\frac{1}{M}\sum_{i,\alpha}\delta_{i_Bi}\delta_{i_A\alpha}\delta_{j_Bi}\delta_{j_A\alpha}=\frac{1}{M}\delta_{{i_B}{j_B}}\delta_{{i_A}{j_A}},
\end{align}
satisfying (\ref{Eq:CompleteNoise}).
\end{proof}

As mentioned, for general channels we have
$       \onenorm{\mathcal{B}(\rho)}\leq\onenorm{\sigma_R-\Gamma_R}$, meaning that complete noise channels are those that always saturate this  bound.
This can be shown by observing that
\begin{equation}
    \mathcal{E}_{\mathcal{B}}(\sigma)\coloneqq\sum_{\alpha,\beta}\sum_{{i_B},{i_A},{j_A}}(\check{K}_\alpha)^*_{{i_B}{i_A}}\sigma_{{j_A}{i_A}}(\check{K}_\beta)_{{i_B}{j_A}}|\alpha\rangle\langle\beta|
\end{equation}
is a linear transformation of the form $\mathcal{T}\circ \widehat{\mathcal{N}}$, where $\mathcal{T}$ denotes matrix transposition and $\widehat{\mathcal{N}}$ is a complementary channel  of the noise channel $\mathcal{N}$. The claim follows from the fact that the 1-norm is preserved under transposition and monotonic under quantum channels.

With the general conditions laid out,
let us discuss specific noise channels.  Generally,  it is clear that partial noise channels with  some nonextreme parameters, such as the probability of noise action, are not complete noise channels. For example, the probabilistic channels (such as partial depolarization and phase damping channels) by definition output the original state with some probability, or equivalently, has identity Kraus component..
The fact that all quantum information is lost in the complete noise channels can also be demonstrated by their typical examples, which include the erasure channel, the completely depolarizing channel, and the completely amplitude damping channel. In contrast, the phase damping channels do not fall into this category, due to the presence of remaining classical information.

\emph{Erasure channel.} The $n$-qubit erasure channel, $\mathcal{N}(\rho)=|\mathrm{vac}\rangle\langle\mathrm{vac}|$, the Kraus operators can be defined as
\begin{equation}
    K_\alpha=|\mathrm{vac}\rangle\langle\alpha|,
\end{equation}
where $|\alpha\rangle\in\mathcal{H}_A$ is a complete basis of $\mathcal{H}_A$. So $K_\alpha$ is a $1\times 2^n$ matrix with 
\begin{equation}
    (K_\alpha)_{{i_B}{i_A}}=\delta_{{i_A},\alpha}, \quad i_B = 1.
\end{equation}
Then one can check  that 
\begin{equation}
    \sum_{\alpha}(K_\alpha)_{{i_B}{i_A}}^*(K_{\alpha})_{{j_B}{j_A}}=\sum_{\alpha}\delta_{{i_B},1}\delta_{{i_A},\alpha}\delta_{{j_B},1}\delta_{{j_A},\alpha}=\delta_{{i_B}{j_B}}\delta_{{i_A}{j_A}},
\end{equation}
that is, the erasure channel satisfies (\ref{Eq:CompleteNoise}) and is thus a complete noise channel.

\emph{Amplitude damping channel.} The complete amplitude damping channel, one can similarly take $\mathcal{H}_B$ to be the space spanned by $|0\rangle^{\otimes n}$ and
\begin{equation}
    K_\alpha=|0\rangle^{\otimes n}\langle\alpha|.
\end{equation}
Then similar to the erasure channel, we can check that the complete amplitude damping is a complete noise channel. Note that this special choice of $\mathcal{H}_B$ is only possible for the complete amplitude damping but not the partial ones.

\emph{Depolarizing  channel.} For the complete depolarizing channel on $n$ qubits, $\mathcal{N}(\rho)=\mathbbm{1}_{2^n}/2^n$, the Kraus operators can be defined as
\begin{equation}
    K_\alpha=\frac{1}{2^n}P_{\alpha_1}\otimes\ldots\otimes P_{\alpha_n}
\end{equation}
where $\alpha_i\in\{0,1,2,3\}$ are the base-4 digits of $\alpha$, $P_{\alpha_i}\in \{I,X,Y,Z\}$ are single-qubit Pauli matrices. $K_\alpha$ is then a $2^n\times 2^n$ matrix proportional to the tensor product of Pauli matrices
\begin{equation}
    (K_\alpha)_{{i_B}{i_A}}=\frac{1}{2^n}(P_{\alpha_1})_{a_1s_1}\ldots(P_{\alpha_n})_{a_ns_n}
\end{equation}
Similarly, here $a_i$ and $s_i$ denotes the base-4 digits of $a$ and $s$.

Since \begin{equation}
    \sum_{P=I,X,Y,Z}P_{as}^*P_{i'j'}=2\delta_{ii'}\delta_{st}
\end{equation}
One can then check that
\begin{align}
    \sum_{\alpha}(K_\alpha)_{ij}^*(E_{\alpha})_{i'j'}&=\frac{1}{2^{2n}}\prod_{i=1}^n\sum_{\alpha_i}(P_{\alpha_i})_{a_is_i}^*(P_{\alpha_i})_{b_it_i}\\
    &=\frac{1}{2^{2n}}\prod_{i=1}^n2\delta_{a_ib_i}\delta_{s_it_i}\\
    &=\frac{1}{2^{n}}\delta_{{i_B}{j_B}}\delta_{{i_A}{j_A}}
\end{align}
Thus the complete depolarizing channel is a complete noise channel.

\emph{Phase damping channel.}  1-qubit channel to see that it is not complete. The Kraus operator is then
\begin{equation}
    K_0=\begin{pmatrix}
        1&0\\
        0&\sqrt{1-\lambda}
    \end{pmatrix},\quad
    K_1=\begin{pmatrix}
        0&0\\
        0&\sqrt{\lambda}
    \end{pmatrix}
\end{equation}
The complete condition in Eq.~(\ref{Eq:CompleteNoise}) cannot be satisfied for off-diagonal terms.
}

With these setups in place, we now prove a two-way bound for QEC inaccuracy  in terms of subsystem variance, as claimed in 
(\ref{eq:error_variance}) in the main text.  We first introduce a technical lemma about error metrics that will be used in the proof, and then proceed to state and prove the result.

\begin{lem}\label{lem:FvdG-for-B}
Denote $\onenorm{\mathcal{B}}\coloneqq \max_\rho\onenorm{\mathcal{B}(\rho)}.$ 
For any $\Lambda$ it holds that
\begin{equation}
    \frac{1}{2}\onenorm{\mathcal{B}}\leq P(\Lambda,\Lambda+\mathcal{B})\leq2^{k/2}\sqrt{\onenorm{\mathcal{B}}}.\label{eq:fvdg-b-p}
\end{equation}
\end{lem}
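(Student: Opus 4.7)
My plan is to sandwich the channel purified distance between trace-norm quantities by combining the Fuchs--van de Graaf inequalities $\tfrac{1}{2}\onenorm{\rho-\sigma}\leq P(\rho,\sigma)\leq\sqrt{\onenorm{\rho-\sigma}}$ with the variational form
\[
P(\mathcal{M}_1,\mathcal{M}_2)=\sup_{R,\rho_{LR}} P\bigl((\mathcal{M}_1\otimes\id_R)(\rho_{LR}),(\mathcal{M}_2\otimes\id_R)(\rho_{LR})\bigr),
\]
which follows directly from the paper's definition of $F(\mathcal{M}_1,\mathcal{M}_2)$ as a minimum of fidelities over inputs on extended systems. Note that for any such $\rho_{LR}$, the two output states differ by exactly $(\mathcal{B}\otimes\id_R)(\rho_{LR})$, since $\Lambda$ and $\Lambda+\mathcal{B}$ differ only by $\mathcal{B}$.

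The lower bound is then essentially immediate: restricting the supremum to trivial ancilla and applying the FvdG lower bound, I get $P(\Lambda,\Lambda+\mathcal{B})\geq\sup_\rho\tfrac{1}{2}\onenorm{\mathcal{B}(\rho)}=\tfrac{1}{2}\onenorm{\mathcal{B}}$. For the upper bound, the FvdG upper bound yields
\[
P(\Lambda,\Lambda+\mathcal{B})\leq\sqrt{\sup_{\rho_{LR}}\onenorm{(\mathcal{B}\otimes\id_R)(\rho_{LR})}}.
\]
Since $\Lambda+\mathcal{B}=\widehat{\mathcal{N}\circ\mathcal{E}}$ is a complementary channel (hence CPTP) and $\Lambda$ is a replacement channel, the difference $\mathcal{B}$ is Hermiticity-preserving. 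The standard stabilization bound for Hermiticity-preserving maps (see e.g.\ Watrous, \emph{Theory of Quantum Information}) gives
\[
\sup_{\rho_{LR}}\onenorm{(\mathcal{B}\otimes\id_R)(\rho_{LR})}\leq 2^{k}\,\onenorm{\mathcal{B}},
\]
where the dimension factor is the logical input dimension of $\mathcal{B}$. Plugging this in produces $P(\Lambda,\Lambda+\mathcal{B})\leq 2^{k/2}\sqrt{\onenorm{\mathcal{B}}}$, as required.

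The step requiring the most care is the tight Hermiticity-preserving stabilization bound with constant exactly $2^{k}$, rather than the weaker $4^{k}$ that would apply to arbitrary linear maps. If a self-contained derivation is preferred, one can Schmidt-decompose a purified input $\ket{\psi}_{LR}=\sum_k\sqrt{\lambda_k}\ket{l_k}\ket{r_k}$, expand
\[
(\mathcal{B}\otimes\id_R)(\ketbra{\psi}{\psi})=\sum_{k,k'}\sqrt{\lambda_k\lambda_{k'}}\,\mathcal{B}(\ketbra{l_k}{l_{k'}})\otimes\ketbra{r_k}{r_{k'}},
\]
bound each $\onenorm{\mathcal{B}(\ketbra{l_k}{l_{k'}})}$ by writing $\ketbra{l_k}{l_{k'}}$ as a short linear combination of Hermitian density operators and invoking Hermiticity preservation, and finish with Cauchy--Schwarz via $(\sum_k\sqrt{\lambda_k})^2\leq 2^{k}\sum_k\lambda_k=2^{k}$. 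Tracking the constants carefully to land exactly on $2^{k/2}$, rather than on a multiple of it, is the one slightly delicate point.
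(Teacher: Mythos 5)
Your proposal is correct and follows essentially the same route as the paper: both directions come from the Fuchs--van de Graaf inequalities applied to the variational (extended-input) form of the channel purified distance, combined with the stabilization bound $\max_{\rho}\onenorm{(\mathcal{B}\otimes\id)(\rho)}\leq 2^{k}\onenorm{\mathcal{B}}$ for the Hermiticity-preserving map $\mathcal{B}$, which the paper likewise imports from the literature. The only cosmetic difference is that you apply the FvdG upper bound directly as $P(\rho,\sigma)\leq\sqrt{\onenorm{\rho-\sigma}}$, whereas the paper first lower-bounds the channel fidelity by $1-2^{k-1}\onenorm{\mathcal{B}}$ and then converts; the arithmetic is identical.
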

\begin{proof}
    Recall the Fuchs–van de Graaf inequalities,
\begin{equation}\label{eq:FvdG}
    1-f(\rho,\sigma)\leq \frac{1}{2}\onenorm{\rho-\sigma}\leq \sqrt{1-f(\rho,\sigma)^2}.
\end{equation}
Upper bound in (\ref{eq:fvdg-b-p}):  Note that
\begin{align}
    F(\Lambda,\Lambda+\mathcal{B})&=\min_\rho f(\Lambda\otimes \id(\rho),(\Lambda+\mathcal{B})\otimes \id(\rho))\\
    &\geq 1-\frac{1}{2}\max_\rho\onenorm{\mathcal{B}\otimes \id(\rho)}\\
    &\geq 1-2^{k-1} \onenorm{\mathcal{B}},
\end{align}
where the second line follows from (\ref{eq:FvdG})  and the last line is taken from e.g.~Refs.~\cite{HaydenWinter10,BCSB19:aqec}.
So we have
\begin{align}
    P(\Lambda,\Lambda+\mathcal{B})&\leq\sqrt{1-(1-2^{k-1} \onenorm{\mathcal{B}})^2}=\sqrt{2^{k} \onenorm{\mathcal{B}}-2^{2k-2} \onenorm{\mathcal{B}}^2}\leq2^{k/2}\sqrt{\onenorm{\mathcal{B}}}.
\end{align}
Lower bound in  (\ref{eq:fvdg-b-p}): Note first that
\begin{equation}
    P(\Lambda,\Lambda+\mathcal{B})=\max_\rho P(\Lambda\otimes \id(\rho),(\Lambda+\mathcal{B})\otimes \id(\rho)).
\end{equation}
By (\ref{eq:FvdG}),
\begin{align}
   & P(\Lambda\otimes \id(\rho),(\Lambda+\mathcal{B})\otimes \id(\rho)) 
    \geq \frac{1}{2}\onenorm{\mathcal{B}\otimes \id(\rho)}.
\end{align}
Taking maximization on both sides, we obtain
\begin{equation}
    P(\Lambda,\Lambda+\mathcal{B})\geq\frac{1}{2}\max_\rho\onenorm{\mathcal{B}\otimes \id(\rho)}\geq\frac{1}{2}\onenorm{\mathcal{B}}.
\end{equation}
\end{proof}

\begin{prop}\label{thm:two-way}
Let $\check{\mathcal{N}}_R$ be any replacement channel acting on a $d$-qubit subsystem $R$ that is connected with respect to adjacency graph $\mathsf{G}$. Denote the overall channel by $\mathcal{N} = \check{\mathcal{N}}_R\otimes\mathrm{id}_{\ols R}$.  It holds that
    \begin{equation}
    \frac{1}{4}\varepsilon_\mathsf{G}(\mathfrak{C},R) \leq \tilde\varepsilon(\mathcal{N}, \mathcal{E}) \leq 2^{k/2}\sqrt{\varepsilon_\mathsf{G}(\mathfrak{C},R)}.
\end{equation}
A version for the overall subsystem variance  is directly obtained by optimizing over $R$: 
    \begin{equation}
    \label{eq:app_error_variance}
    \frac{1}{4}\varepsilon_\mathsf{G}(\mathfrak{C},d) \leq \max_{R}\tilde\varepsilon(\mathcal{N}, \mathcal{E}) \leq 2^{k/2}\sqrt{\varepsilon_\mathsf{G}(\mathfrak{C},d)}.
\end{equation}
\end{prop}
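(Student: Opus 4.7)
The plan is to chain together Lemma~\ref{lemma:KL-general}, Proposition~\ref{thm:replacement_cond}, and Lemma~\ref{lem:FvdG-for-B}. Starting from Lemma~\ref{lemma:KL-general}, we have $\tilde\varepsilon(\mathcal{N},\mathcal{E}) = \min_\Lambda P(\Lambda,\Lambda+\mathcal{B})$, where the minimization runs over admissible choices of the scalar matrix $\lambda_{\alpha\beta}$ (which simultaneously determines $\Lambda$ and $\mathcal{B}$ via (\ref{eq:kl-approx})). Everything then reduces to controlling $\|\mathcal{B}\|_1 \coloneqq \max_\rho\|\mathcal{B}(\rho)\|_1$ from above and below by $\varepsilon_\mathsf{G}(\mathfrak{C},R)$, and plugging the Fuchs--van de Graaf-type bounds of Lemma~\ref{lem:FvdG-for-B} in.

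For the upper bound, the idea is to simply evaluate the minimum at the distinguished choice $\lambda_{\alpha\beta}=\Tr(\Gamma K_\alpha^\dagger K_\beta)$. Proposition~\ref{thm:replacement_cond} then gives $\|\mathcal{B}(\rho)\|_1 = \|\sigma_R-\Gamma_R\|_1$, so maximizing over code states yields $\|\mathcal{B}\|_1 = \varepsilon_\mathsf{G}(\mathfrak{C},R)$. Invoking the right half of (\ref{eq:fvdg-b-p}) at this $\Lambda$ gives $P(\Lambda,\Lambda+\mathcal{B}) \leq 2^{k/2}\sqrt{\varepsilon_\mathsf{G}(\mathfrak{C},R)}$, which upper-bounds $\tilde\varepsilon$ since it is a minimum.

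For the lower bound, I expect the main subtlety: $\tilde\varepsilon$ is obtained by minimizing over all $\lambda$, so I cannot pin it to the distinguished choice. Instead, I would exploit the observation that changing $\lambda$ only shifts $\mathcal{B}(\rho)$ by a $\rho$-independent constant operator $C = \sum_{\alpha\beta}(\lambda_{\alpha\beta}^{(0)}-\lambda_{\alpha\beta})\ket{\alpha}\bra{\beta}$. Consequently the differences $\mathcal{B}(\rho_1)-\mathcal{B}(\rho_2)$ are $\lambda$-independent, and using Proposition~\ref{thm:replacement_cond} at the distinguished choice yields $\|\mathcal{B}(\rho_1)-\mathcal{B}(\rho_2)\|_1 = \|\sigma_{1,R}-\sigma_{2,R}\|_1$ for every admissible $\lambda$. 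A triangle inequality then gives
\begin{equation}
2\max_\rho\|\mathcal{B}(\rho)\|_1 \geq \max_{\rho_1,\rho_2}\|\sigma_{1,R}-\sigma_{2,R}\|_1.
\end{equation}
Since $\Gamma_R$ is a convex combination of code-state marginals $\psi_{i,R}$, the triangle inequality also gives $\max_\rho\|\sigma_R-\Gamma_R\|_1 \leq \max_{\rho_1,\rho_2}\|\sigma_{1,R}-\sigma_{2,R}\|_1$, so the right-hand side is at least $\varepsilon_\mathsf{G}(\mathfrak{C},R)$. This shows $\|\mathcal{B}\|_1 \geq \tfrac{1}{2}\varepsilon_\mathsf{G}(\mathfrak{C},R)$ for every $\lambda$; combined with the left half of (\ref{eq:fvdg-b-p}) this yields $\tilde\varepsilon \geq \tfrac{1}{4}\varepsilon_\mathsf{G}(\mathfrak{C},R)$.

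The two bounds together establish the per-subsystem statement, and taking the maximum over size-$d$ connected subsystems $R$ on both sides (noting $\varepsilon_\mathsf{G}(\mathfrak{C},d)=\max_{|R|\leq d}\varepsilon_\mathsf{G}(\mathfrak{C},R)$) immediately gives (\ref{eq:app_error_variance}). The only nontrivial step is the lower-bound argument, where care is needed because $\mathcal{B}$ itself depends on the auxiliary $\lambda$; the shift-invariance of $\mathcal{B}(\rho_1)-\mathcal{B}(\rho_2)$ is what lets the argument proceed uniformly over $\lambda$ rather than at a single distinguished choice.
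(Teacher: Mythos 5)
Your proposal is correct, and your upper bound is exactly the paper's argument (distinguished choice $\lambda_{\alpha\beta}=\Tr(\Gamma K_\alpha^\dagger K_\beta)$, Proposition~\ref{thm:replacement_cond} to identify $\onenorm{\mathcal{B}}$ with $\varepsilon_\mathsf{G}(\mathfrak{C},R)$, the right half of Lemma~\ref{lem:FvdG-for-B}, and the fact that $\tilde\varepsilon$ is a minimum). The lower bound, however, is where you genuinely diverge. The paper stays at the single distinguished $\Lambda$ and proves the auxiliary inequality $P(\Lambda,\Lambda+\mathcal{B})\leq 2\tilde\varepsilon(\mathcal{N},\mathcal{E})$ by a triangle inequality in the complementary-channel picture: writing $\Lambda$ as $\widehat{\mathcal{N}\circ\mathcal{E}}\circ\widehat{\mathcal{M}}$ with $\widehat{\mathcal{M}}$ the constant channel outputting $\Gamma$, and using idempotence $\widehat{\mathcal{M}}^2=\widehat{\mathcal{M}}$ together with monotonicity of $P$; combining with the left half of Lemma~\ref{lem:FvdG-for-B} then gives $\varepsilon_\mathsf{G}(\mathfrak{C},R)=\onenorm{\mathcal{B}}\leq 2P(\Lambda,\Lambda+\mathcal{B})\leq 4\tilde\varepsilon$. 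You instead handle the minimization over $\Lambda$ head-on by observing that varying $\lambda$ only shifts $\mathcal{B}(\rho)$ by a $\rho$-independent operator, so that $\onenorm{\mathcal{B}(\rho_1)-\mathcal{B}(\rho_2)}=\onenorm{\sigma_{1,R}-\sigma_{2,R}}$ for every admissible $\lambda$; a triangle inequality and the convexity of $\Gamma_R$ then give $\onenorm{\mathcal{B}}\geq\tfrac{1}{2}\varepsilon_\mathsf{G}(\mathfrak{C},R)$ uniformly, and the left half of Lemma~\ref{lem:FvdG-for-B} applied to \emph{every} $\Lambda$ yields $\tilde\varepsilon\geq\tfrac{1}{4}\varepsilon_\mathsf{G}(\mathfrak{C},R)$. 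Both routes land on the same constant $1/4$ — you lose a factor of $2$ to the triangle inequality where the paper loses it to the $2\tilde\varepsilon$ complementary-channel bound. Your version is more self-contained (it never needs Eq.~(\ref{eq:complementary-worst}) or the complementary-channel formalism), while the paper's intermediate inequality $P(\Lambda,\Lambda+\mathcal{B})\leq 2\tilde\varepsilon$ is a reusable statement that, as the paper notes, extends to operator-algebra QEC beyond the $\mathcal{M}=\id$ setting.
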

\begin{proof}
    Following arguments in Ref.~\cite{BO10:AKL}, for noise channel $\mathcal{N}(\cdot) = \sum K_\alpha(\cdot)K_\beta^\dagger$, let
\begin{equation}
\Lambda(\rho)=\sum_{ij}\Tr(K_i^\dagger K_j\Gamma) \Tr(\rho)|i\rangle\langle j|,
\end{equation}
in Lemma~\ref{lemma:KL-general}, then we have 
\begin{equation}
    P(\Lambda, \Lambda+\mathcal{B})\leq 2\tilde\varepsilon(\mathcal{N}, \mathcal{E}).\label{eq:2eps}
\end{equation}
  This can be seen as follows.  Using the language of Ref.~\cite{BO10:AKL},  for the case of our interest $\mathcal{M} = \id$, such a $\Lambda$  corresponds to  $\widehat{\mathcal{N}\circ\mathcal{E}}\circ\widehat{\mathcal{M}}$ where $\widehat{\mathcal{M}}$ outputs $\Gamma$ and  $\Lambda+\mathcal{B}$  corresponds to $\widehat{\mathcal{N}\circ\mathcal{E}}$. By triangle inequality we have
\begin{equation}
    P(\widehat{\mathcal{N}\circ\mathcal{E}},\widehat{\mathcal{N}\circ\mathcal{E}}\circ\widehat{\mathcal{M}}) \leq P(\widehat{\mathcal{N}\circ\mathcal{E}},\widetilde{\mathcal{R}}\circ\widehat{\mathcal{M}}) + P(\widehat{\mathcal{N}\circ\mathcal{E}}\circ\widehat{\mathcal{M}},\widetilde{\mathcal{R}}\circ\widehat{\mathcal{M}}),
\end{equation} where $\widetilde{R}$ is the optimal channel that achieves $\tilde\varepsilon(\mathcal{N}, \mathcal{E})$ so  $P(\widehat{\mathcal{N}\circ\mathcal{E}},\widetilde{\mathcal{R}}\circ\widehat{\mathcal{M}}) = \tilde\varepsilon(\mathcal{N}, \mathcal{E})$, and 
$P(\widehat{\mathcal{N}\circ\mathcal{E}}\circ\widehat{\mathcal{M}},\widetilde{\mathcal{R}}\circ\widehat{\mathcal{M}}) = P(\widehat{\mathcal{N}\circ\mathcal{E}}\circ\widehat{\mathcal{M}},\widetilde{\mathcal{R}}\circ\widehat{\mathcal{M}}^2) \leq P(\widehat{\mathcal{N}\circ\mathcal{E}},\widetilde{\mathcal{R}}\circ\widehat{\mathcal{M}}) = \tilde\varepsilon(\mathcal{N}, \mathcal{E})$
 where we used $\widehat{\mathcal{M}}^2 = \widehat{\mathcal{M}}$. Therefore, $P(\widehat{\mathcal{N}\circ\mathcal{E}},\widehat{\mathcal{N}\circ\mathcal{E}}\circ\widehat{\mathcal{M}}) \leq 2\tilde\varepsilon(\mathcal{N}, \mathcal{E})$.  Note that this holds more generally in operator algebra QEC, beyond the standard  $\mathcal{M} = \id$ setting.
 Consider a replacement channel $\mathcal{N}_R$ acting on a certain $R$.
For any code state ${\sigma}$  we have
\begin{equation}\label{eq:delta_psi_upper}
    \onenorm{\sigma_R-\Gamma_R}=\onenorm{\mathcal{B}(\rho)}\leq\onenorm{\mathcal{B}}\leq2P(\Lambda,\Lambda+\mathcal{B})\leq  4\tilde\varepsilon(\mathcal{N}, \mathcal{E}),
\end{equation}
where the first step is due to Proposition~\ref{thm:replacement_cond}, the third step follows from the lower bound in Lemma~\ref{lem:FvdG-for-B}, and the last step follows from (\ref{eq:2eps}).
Therefore, by definition, $\varepsilon_\mathsf{G}(\mathfrak{C},R) \leq 4\tilde\varepsilon(\mathcal{N}, \mathcal{E})$.

On the other hand, observe that the upper bound in Lemma~\ref{lem:FvdG-for-B}, namely the lower bound on $\onenorm{\mathcal{B}}$, can be translated to a lower bound on the subsystem variance.  Explicitly, consider a replacement channel $\check{\mathcal{N}}_R$ acting on a certain $R$. It holds that
\begin{align}
\max_{\sigma}\onenorm{\sigma_R-\Gamma_R}=\max_{\rho}\onenorm{\mathcal{B}(\rho)}=\onenorm{\mathcal{B}} \geq \left(\frac{P(\Lambda,\Lambda+\mathcal{B})}{2^{k/2}}\right)^2 \geq  2^{-k}\tilde\varepsilon(\mathcal{N}, \mathcal{E})^2,
\end{align}
where the first step is again due to Proposition~\ref{thm:replacement_cond}, the second step follows from the convexity of 1-norm, the third step follows from the upper bound in Lemma~\ref{lem:FvdG-for-B}, and the last step follows from Lemma~\ref{lemma:KL-general}. This indicates $\tilde\varepsilon(\mathcal{N}, \mathcal{E}) \leq 2^{k/2}\sqrt{\varepsilon_\mathsf{G}(\mathfrak{C},R)}$.
\end{proof}
More generally, for an arbitrary noise channel, by understanding  the relationship between the $  \onenorm{\mathcal{B}(\rho)}$ associated with it and $\onenorm{\sigma_R-\Gamma_R}$, we can derive analogous bounds for $\tilde\varepsilon$.  This may be useful in certain scenarios such as channel-adapted QEC settings.

\mycomment{
\section{Subsystem variance, code error, and complete noise channels (will delete)}



\begin{lem}\label{lem:FvdG-for-B-2}
Denote $\onenorm{\mathcal{B}}\coloneqq \max_\rho\onenorm{\mathcal{B}(\rho)}.$ 
For any $\Lambda$ it holds that
\begin{equation}
    \frac{1}{2}\onenorm{\mathcal{B}}\leq P(\Lambda,\Lambda+\mathcal{B})\leq2^{k/2}\sqrt{\onenorm{\mathcal{B}}}.
\end{equation}
\end{lem}
\begin{proof}
    Recall the Fuchs–van de Graaf inequalities,
\begin{equation}\label{eq:FvdG}
    1-f(\rho,\sigma)\leq \frac{1}{2}\onenorm{\rho-\sigma}\leq \sqrt{1-f(\rho,\sigma)^2}.
\end{equation}
Upper bound in (\ref{lem:FvdG-for-B}):  Note that
\begin{align}
    F(\Lambda,\Lambda+\mathcal{B})&=\min_\rho f(\Lambda\otimes \id(\rho),(\Lambda+\mathcal{B})\otimes \id(\rho))\\
    &\geq 1-\frac{1}{2}\max_\rho\onenorm{\mathcal{B}\otimes \id(\rho)}\\
    &\geq 1-2^{k-1} \onenorm{\mathcal{B}},
\end{align}
where the second line follows from (\ref{eq:FvdG})  and the last line is taken from e.g.~Refs.~\cite{HaydenWinter10,BCSB19:aqec}.
So we have
\begin{align}
    P(\Lambda,\Lambda+\mathcal{B})&\leq\sqrt{1-(1-2^{k-1} \onenorm{\mathcal{B}})^2}\\
    &=\sqrt{2^{k} \onenorm{\mathcal{B}}-2^{2k-2} \onenorm{\mathcal{B}}^2}\\
    &\leq2^{k/2}\sqrt{\onenorm{\mathcal{B}}}.
\end{align}
Lower bound in  (\ref{lem:FvdG-for-B}): Note first that
\begin{equation}
    P(\Lambda,\Lambda+\mathcal{B})=\max_\rho P(\Lambda\otimes \id(\rho),(\Lambda+\mathcal{B})\otimes \id(\rho)).
\end{equation}
By (\ref{eq:FvdG}),
\begin{align}
   & P(\Lambda\otimes \id(\rho),(\Lambda+\mathcal{B})\otimes \id(\rho)) 
    \geq \frac{1}{2}\onenorm{\mathcal{B}\otimes \id(\rho)}.
\end{align}
Taking maximization on both sides, we obtain
\begin{equation}
    P(\Lambda,\Lambda+\mathcal{B})\geq\frac{1}{2}\max_\rho\onenorm{\mathcal{B}\otimes \id(\rho)}\geq\frac{1}{2}\onenorm{\mathcal{B}}.
\end{equation}
\end{proof}

Here we expand on the notion of complete noise channels for which there are simple relations between the  violation of Knill--Laflamme QEC conditions .   
Indeed, the term ``complete''  carries a similar meaning as in e.g.~NP-complete---such channels constitute the ``worst'' class of noise channels  that result in the largest violations of the Knill--Laflamme QEC conditions and  completely destroy the quantum information within their range. Concrete statements about their features will be given below.

The explicit condition for complete noise channels  takes the following form.
\begin{defn}
A quantum  channel $\mathcal{N}:\mathcal{H}_A\rightarrow\mathcal{H}_B$ is called a \emph{complete noise channel} if there exists a Kraus decomposition $\mathcal{N}=\sum_\alpha K_\alpha(\cdot)K_\alpha^\dagger$ that satisfies the following condition:
\begin{equation}
\label{Eq:CompleteNoise-2}
 \sum_{\alpha}(K_\alpha)_{{i_B}{i_A}}^*(K_{\alpha})_{{j_B}{j_A}}=\frac{1}{\dim(\mathcal{H}_B)}\delta_{{i_B}{j_B}}\delta_{{i_A}{j_A}},
\end{equation}
where $(K_\alpha)_{kl}$ denotes the $(k,l)$-th entry of the matrix $K_\alpha$ (the $A,B$ subscripts denote the space that the labels correspond to).
\end{defn}

A useful interpretation of the complete noise condition can be given by considering a Hilbert space $\mathcal{H}_K$ spanned by basis states labeled by $\alpha$. Then $\sqrt{\dim(\mathcal{H}_B)}(K_\alpha)_{{i_B}{i_A}}$  represents a linear transformation that maps $|i_A\rangle\otimes|i_B\rangle$, which forms a basis of $\mathcal{H}_{A}\otimes\mathcal{H}_B$, to $\sqrt{\dim(\mathcal{H}_B)}\sum_{\alpha}(K_\alpha)_{{i_B}{i_A}}|\alpha\rangle$, which also forms a basis of $\mathcal{H}_K$ as required by the complete noise channel condition. In simpler terms, the condition implies that$\sqrt{\dim(\mathcal{H}_B)}K_\alpha$ acts as an isometry from $\mathcal{H}_A\otimes\mathcal{H}_B$ to $\mathcal{H}_K$.

As mentioned, in our context of quantum codes, a distinctive feature of complete noise channels is that 
ensures a nice equivalence relation between the deviations from the exact Knill--Laflamme conditions caused by the noise channels equivalence between local indistinguishability and. 

Consider a code word $\psi=|\psi\rangle\langle\psi|=\mathcal{E}(\rho)$, and take the reference state to be the equal statistical mixture of all code words $|\psi_i\rangle\langle\psi_i|$,
\begin{equation}
\label{eq:ref_state-2}
    \Gamma=\frac{1}{2^k} \sum_{i=1}^{2^k}   |\psi_i\rangle\langle\psi_i|.
\end{equation}
A adapted form of \cite[Cor.\ 2]{BO10:AKL}
\begin{lem}
\label{lemma:KL-general-1}
Let $\Pi$ be the projector onto the code subspace of an isometric quantum code, and consider noise channel $\mathcal{N}_S = \sum K_\alpha(\cdot)K_\beta^\dagger$. 
    Then 
\begin{equation}
\tilde\varepsilon(\mathcal{N},\mathcal{E}) = \min_{{\Lambda}} {P({\Lambda},{\Lambda}+{\mathcal{B}})}, 
\end{equation}
where ${\Lambda}(\rho) = \sum_{\alpha\beta} \lambda_{\alpha\beta} \Tr(\rho) \ket{\alpha}\bra{\beta}$, $({\Lambda}+{\mathcal{B}})(\rho) = {\Lambda}(\rho) + \sum_{\alpha\beta} \Tr(\mathcal{E}(\rho){B }_{\alpha\beta} ) \ket{\alpha}\bra{\beta}$, and $\lambda_{\alpha\beta}$ and ${B }_{\alpha\beta}$ are constant numbers and operators satisfying 
\begin{equation}
    \Pi K_\alpha^\dagger K_\beta \Pi = \lambda_{\alpha\beta} \Pi + \Pi {B }_{\alpha\beta} \Pi.
\end{equation}
\end{lem}
\begin{lem}
\label{lemma:B-deltarho}
    Let $\mathcal{N}=\mathcal{N}_R\otimes \id_{\ols{R}}$ be a noise channel with complete noise on a region $R$, take the Kraus decomposition $\mathcal{N}=\sum_\alpha K_\alpha(\cdot)K_\alpha^\dagger$. Take $\lambda_{\alpha\beta}$ in Lemma~\ref{lemma:KL-general-1}  as
 \begin{equation}
     \lambda_{\alpha\beta}=\Tr(K_\alpha^\dagger K_\beta\Gamma) 
 \end{equation}
 it holds that
  \begin{equation}
     \onenorm{\mathcal{B}(\rho)}=\onenorm{\sigma_R-\Gamma_R}
 \end{equation}
\end{lem}
\begin{proof}
Recall the definition of $\mathcal{B}(\rho)$,
\begin{equation}
    \mathcal{B}(\rho)=\sum_{\alpha\beta}\Tr(\mathcal{E}(\rho)B_{\alpha\beta})|\alpha\rangle\langle \beta|,
\end{equation}
where $B$
\begin{equation}\label{eq:B }
    \Pi {B }_{\alpha\beta} \Pi=\Pi K_\alpha^\dagger K_\beta \Pi - \lambda_{\alpha\beta} \Pi. 
\end{equation}
Since $\mathcal{N}_R$ is a complete noise channel, if one denotes $K_\alpha=\check{K}_\alpha\otimes \mathbbm{1}_{\ols{R}}$, the complete noise condition ensures that $\sqrt{\dim(\mathcal{H}_B)}\check{K}_\alpha$ is an isometry. So we obtain
\begin{align}
      \mathcal{B}(\rho)_{\alpha\beta}&=\Tr(\sigma B_{\alpha\beta})\\&=\Tr(\sigma \Pi B_{\alpha\beta}\Pi)\\
      &=\Tr(\sigma \Pi K_\alpha^\dagger K_\beta \Pi)-\Tr(\sigma \lambda_{\alpha\beta}\Pi)\\
      &=\Tr(\sigma K_\alpha^\dagger K_\beta)-\Tr(\Gamma K_\alpha^\dagger K_\beta)\\
      &=\sum_{{i_B},{i_A},{j_A}}(\check{K}_\alpha)^*_{{i_B}{i_A}}(\sigma_R-\Gamma_R)_{{j_A}{i_A}}(\check{K}_\beta)_{{i_B}{j_A}}\\
      &=\sum_{{i_B},{i_A},{j_B},{j_A}}(\check{K}_\alpha)^*_{{i_B}{i_A}}\delta_{{i_B}{j_B}}(\sigma_R-\Gamma_R)_{{j_A}{i_A}}(\check{K}_\beta)_{{j_B}{j_A}},
\end{align}
where $\sigma_R$ and $\Gamma_R$ are the reduced density matrices of $\sigma$ and $\Gamma$ on the region $R$. 
We have repeatedly used the code state property $\Pi\psi\Pi = \psi$ and the cyclic property of trace. 
In the third line, we used (\ref{eq:B }).
In the fourth line, we used $\lambda_{\alpha\beta}=\Tr(\Gamma K_\alpha^\dagger K_\beta)$.
 
 Since $\sqrt{\dim(\mathcal{H}_B)}K_\alpha$ represents an isometry that transforms a basis labeled by ${i_B}{i_A}$ to a basis labeled by $\alpha$, it preserves the 1-norm
 \begin{align}
     \onenorm{\sum_{\alpha,\beta}\sum_{\substack{{i_B},{i_A},{j_B},{j_A}}}(K_\alpha)^*_{{i_B}{i_A}}\delta_{{i_B}{j_B}}(\sigma_R-\Gamma_R)_{{j_A}{i_A}}(K_\beta)_{{j_B}{j_A}}|\alpha\rangle\langle\beta|}
     =&~\frac{1}{\dim(\mathcal{H}_B)}\onenorm{\delta_{{i_B}{j_B}}(\sigma_R-\Gamma_R)_{{j_A}{i_A}}|i_B\rangle\langle j_B|\otimes|i_A\rangle\langle j_A|}\\
=&~\onenorm{\sigma_R-\Gamma_R}, 
 \end{align}
 where in the first line, the 1-norm denotes the 1-norm of the matrix with specified matrix elements. So we have
  \begin{equation}
     \onenorm{\mathcal{B}(\rho)}=\onenorm{\sigma_R-\Gamma_R}.
 \end{equation}
 
\end{proof}

The complete noise channel is the worst possible channel in the sense that all quantum information is lost.
which capture the ``worst'' kind of noise channels that cause the largest violation of the Knill--Laflamme QEC conditions and completely destroy quantum information within their range, 

In fact for a general noise channel, we have
\begin{equation}
         \onenorm{\mathcal{B}(\rho)}\leq\onenorm{\sigma_R-\Gamma_R}
\end{equation}
This can be proven by noting that
\begin{equation}
    \mathcal{E}_{\mathcal{B}}(\sigma)_{\alpha\beta}=\sum_{{i_B},{i_A},{j_A}}(K_\alpha)^*_{{i_B}{i_A}}\sigma_{{j_A}{i_A}}(K_\beta)_{{i_B}{j_A}}
\end{equation}
is the linear transformation $\mathcal{T}\circ \widehat{\mathcal{N}}$, where $\mathcal{T}$ is the matrix transposition and $\widehat{\mathcal{N}}$ is a complementary channel  of the noise channel $\mathcal{N}$. Since the 1-norm is preserved under transposition, and monotonic under quantum channels, the above is proven.
\weicheng{Why do we need Rbar in general noise model?}
To be more concrete, let us consider various typical noise channels.  First, as mentioned in the beginning, it is clear that ..with a  parameter cannot be complete noise channels.
The fact that all quantum information is lost in the complete noise channels can also be demonstrated by their typical examples, which include the erasure channel, the completely depolarizing channel, and the completely amplitude damping channel. In contrast, the phase damping channels do not fall into this category, due to the presence of remaining classical information.

\emph{Erasure channel.} The $n$-qubit erasure channel, $\mathcal{N}(\rho)=|\mathrm{vac}\rangle\langle\mathrm{vac}|$, the Kraus operators can be defined as
\begin{equation}
    K_\alpha=|\mathrm{vac}\rangle\langle\alpha|,
\end{equation}
where $|\alpha\rangle\in\mathcal{H}_A$ is a complete basis of $\mathcal{H}_A$. So $K_\alpha$ is a $1\times 2^n$ matrix with 
\begin{equation}
    (K_\alpha)_{{i_B}{i_A}}=\delta_{{i_A},\alpha}, \quad i_B = 1.
\end{equation}
Then one can check  that 
\begin{equation}
    \sum_{\alpha}(K_\alpha)_{{i_B}{i_A}}^*(K_{\alpha})_{{j_B}{j_A}}=\sum_{\alpha}\delta_{{i_B},1}\delta_{{i_A},\alpha}\delta_{{j_B},1}\delta_{{j_A},\alpha}=\delta_{{i_B}{j_B}}\delta_{{i_A}{j_A}},
\end{equation}
that is, the erasure channel satisfies (\ref{Eq:CompleteNoise}) and is thus a complete noise channel.

\emph{Amplitude damping channel.} The complete amplitude damping channel, one can similarly take $\mathcal{H}_B$ to be the space spanned by $|0\rangle^{\otimes n}$ and
\begin{equation}
    K_\alpha=|0\rangle^{\otimes n}\langle\alpha|.
\end{equation}
Then similar to the erasure channel, we can check that the complete amplitude damping is a complete noise channel. Note that this special choice of $\mathcal{H}_B$ is only possible for the complete amplitude damping but not the partial ones.

\emph{Depolarizing  channel.} For the complete depolarizing channel on $n$ qubits, $\mathcal{N}(\rho)=\mathbbm{1}_{2^n}/2^n$, the Kraus operators can be defined as
\begin{equation}
    K_\alpha=\frac{1}{2^n}P_{\alpha_1}\otimes\ldots\otimes P_{\alpha_n}
\end{equation}
where $\alpha_i\in\{0,1,2,3\}$ are the base-4 digits of $\alpha$, $P_{\alpha_i}\in \{I,X,Y,Z\}$ are single-qubit Pauli matrices. $K_\alpha$ is then a $2^n\times 2^n$ matrix proportional to the tensor product of Pauli matrices
\begin{equation}
    (K_\alpha)_{{i_B}{i_A}}=\frac{1}{2^n}(P_{\alpha_1})_{a_1s_1}\ldots(P_{\alpha_n})_{a_ns_n}
\end{equation}
Similarly, here $a_i$ and $s_i$ denotes the base-4 digits of $a$ and $s$.

Since \begin{equation}
    \sum_{P=I,X,Y,Z}P_{as}^*P_{i'j'}=2\delta_{ii'}\delta_{st}
\end{equation}
One can then check that
\begin{align}
    \sum_{\alpha}(K_\alpha)_{ij}^*(E_{\alpha})_{i'j'}&=\frac{1}{2^{2n}}\prod_{i=1}^n\sum_{\alpha_i}(P_{\alpha_i})_{a_is_i}^*(P_{\alpha_i})_{b_it_i}\\
    &=\frac{1}{2^{2n}}\prod_{i=1}^n2\delta_{a_ib_i}\delta_{s_it_i}\\
    &=\frac{1}{2^{n}}\delta_{{i_B}{j_B}}\delta_{{i_A}{j_A}}
\end{align}
Thus the complete depolarizing channel is a complete noise channel.

\emph{Phase damping channel.}  1-qubit channel to see that it is not complete. The Kraus operator is then
\begin{equation}
    K_0=\begin{pmatrix}
        1&0\\
        0&\sqrt{1-\lambda}
    \end{pmatrix},\quad
    K_1=\begin{pmatrix}
        0&0\\
        0&\sqrt{\lambda}
    \end{pmatrix}
\end{equation}
The complete condition in Eq.~(\ref{Eq:CompleteNoise}) cannot be satisfied for off-diagonal terms.

More generally, one can see that the complete noise condition is equivalent to requiring the noise channel to be a replacement channel with the final state being
\begin{equation}
    \rho_f=\frac{1}{M}\sum_{i=1}^M|\psi_i\rangle\langle\psi_i|
\end{equation}
where $|\psi_i\rangle$'s are orthonormal, and thus all partial noise channels do not fall into this category.

A adapted form of \cite[Cor.\ 2]{BO10:AKL}
\begin{lem}
\label{lemma:KL-general-2}
Let $\Pi$ be the projector onto the code subspace of an isometric quantum code, and consider noise channel $\mathcal{N}_S = \sum K_\alpha(\cdot)K_\beta^\dagger$. 
    Then 
\begin{equation}
\tilde\varepsilon(\mathcal{N},\mathcal{E}) = \min_{{\Lambda}} {P({\Lambda},{\Lambda}+{\mathcal{B}})}, 
\end{equation}
where ${\Lambda}(\rho) = \sum_{\alpha\beta} \lambda_{\alpha\beta} \Tr(\rho) \ket{\alpha}\bra{\beta}$, $({\Lambda}+{\mathcal{B}})(\rho) = {\Lambda}(\rho) + \sum_{\alpha\beta} \Tr(\mathcal{E}(\rho){B }_{\alpha\beta} ) \ket{\alpha}\bra{\beta}$, and $\lambda_{\alpha\beta}$ and ${B }_{\alpha\beta}$ are constant numbers and operators satisfying 
\begin{equation}
    \Pi K_\alpha^\dagger K_\beta \Pi = \lambda_{\alpha\beta} \Pi + \Pi {B }_{\alpha\beta} \Pi.
\end{equation}
\end{lem}
{Comment on $\mathcal{B}$ characterizes the violation of the exact Knill-Laflamme condition.}
Denote $\onenorm{\mathcal{B}}\coloneqq \max_\rho\onenorm{\mathcal{B}(\rho)}.$ 
\begin{lem}\label{lem:FvdG-for-B-22}
For any $\Lambda$ it holds that
\begin{equation}
    \frac{1}{2}\onenorm{\mathcal{B}}\leq P(\Lambda,\Lambda+\mathcal{B})\leq2^{k/2}\sqrt{\onenorm{\mathcal{B}}}.
\end{equation}
\end{lem}
\begin{proof}
    Recall the Fuchs–van de Graaf inequalities,
\begin{equation}\label{eq:FvdG-2}
    1-f(\rho,\sigma)\leq \frac{1}{2}\onenorm{\rho-\sigma}\leq \sqrt{1-f(\rho,\sigma)^2}.
\end{equation}
Upper bound in (\ref{lem:FvdG-for-B}):  Note that
\begin{align}
    F(\Lambda,\Lambda+\mathcal{B})&=\min_\rho f(\Lambda\otimes \id(\rho),(\Lambda+\mathcal{B})\otimes \id(\rho))\\
    &\geq 1-\frac{1}{2}\max_\rho\onenorm{\mathcal{B}\otimes \id(\rho)}\\
    &\geq 1-2^{k-1} \onenorm{\mathcal{B}},
\end{align}
where the second line follows from (\ref{eq:FvdG})  and the last line is taken from e.g.~Refs.~\cite{HaydenWinter10,BCSB19:aqec}.
So we have
\begin{align}
    P(\Lambda,\Lambda+\mathcal{B})&\leq\sqrt{1-(1-2^{k-1} \onenorm{\mathcal{B}})^2}\\
    &=\sqrt{2^{k} \onenorm{\mathcal{B}}-2^{2k-2} \onenorm{\mathcal{B}}^2}\\
    &\leq2^{k/2}\sqrt{\onenorm{\mathcal{B}}}.
\end{align}
Lower bound in  (\ref{lem:FvdG-for-B}): Note first that
\begin{equation}
    P(\Lambda,\Lambda+\mathcal{B})=\max_\rho P(\Lambda\otimes \id(\rho),(\Lambda+\mathcal{B})\otimes \id(\rho)).
\end{equation}
By (\ref{eq:FvdG}),
\begin{align}
   & P(\Lambda\otimes \id(\rho),(\Lambda+\mathcal{B})\otimes \id(\rho)) 
    \geq \frac{1}{2}\onenorm{\mathcal{B}\otimes \id(\rho)}.
\end{align}
Taking maximization on both sides, we obtain
\begin{equation}
    P(\Lambda,\Lambda+\mathcal{B})\geq\frac{1}{2}\max_\rho\onenorm{\mathcal{B}\otimes \id(\rho)}\geq\frac{1}{2}\onenorm{\mathcal{B}}.
\end{equation}
\end{proof}

(\ref{eq:error_variance}) in the main text.
\begin{prop}
Let $\mathcal{N}$ be a complete noise channel on any contiguous $d$-qubit subsystem with respect to an adjacency graph $G$. It holds that
    \begin{equation}
    \frac{1}{4}\varepsilon() \leq \max_{\mathcal{N}}\tilde\varepsilon() \leq 2^{k/2}\sqrt{\varepsilon()}.
\end{equation}
\end{prop}
\begin{proof}
    Following arguments in Ref.~\cite{BO10:AKL}, when taking
\begin{equation}
\Lambda(\rho)=\sum_{ij}\Tr(K_i^\dagger K_j\Gamma) \Tr(\rho)|i\rangle\langle j|
\end{equation}
we have $P(\Lambda, \Lambda+\mathcal{B})\leq 2\tilde\varepsilon$.  This can be seen as follows.  Using the language in Ref.~\cite{BO10:AKL},  for the case of our interest $\mathcal{M} = \id$, such a $\Lambda$  corresponds to  $\widehat{\mathcal{N}\circ\mathcal{E}}\widehat{\mathcal{M}}$ where $\widehat{\mathcal{M}}$ outputs $\Gamma$ and  $\Lambda+\mathcal{B}$  corresponds to $\widehat{\mathcal{N}\circ\mathcal{E}}$  . By triangle inequality we have $P(\widehat{\mathcal{N}\circ\mathcal{E}},\widehat{\mathcal{N}\circ\mathcal{E}}\widehat{\mathcal{M}}) \leq P(\widehat{\mathcal{N}\circ\mathcal{E}},\widetilde{\mathcal{R}}\widehat{\mathcal{M}}) + P(\widehat{\mathcal{N}\circ\mathcal{E}}\widehat{\mathcal{M}},\widetilde{\mathcal{R}}\widehat{\mathcal{M}})$, where $\widetilde{R}$ is the optimal channel that achieves $\tilde\varepsilon$ so  $P(\widehat{\mathcal{N}\circ\mathcal{E}},\widetilde{\mathcal{R}}\widehat{\mathcal{M}}) = \tilde\varepsilon$ , and $P(\widehat{\mathcal{N}\circ\mathcal{E}}\widehat{\mathcal{M}},\widetilde{\mathcal{R}}\widehat{\mathcal{M}}) = P(\widehat{\mathcal{N}\circ\mathcal{E}}\widehat{\mathcal{M}},\widetilde{\mathcal{R}}\widehat{\mathcal{M}}^2) \leq P(\widehat{\mathcal{N}\circ\mathcal{E}},\widetilde{\mathcal{R}}\widehat{\mathcal{M}}) = \tilde\varepsilon$ where we used $\widehat{\mathcal{M}}^2 = \widehat{\mathcal{M}}$, that is, $P(\widehat{\mathcal{N}\circ\mathcal{E}},\widehat{\mathcal{N}\circ\mathcal{E}}\widehat{\mathcal{M}}) \leq 2\tilde\varepsilon$.  Note that this holds more generally in operator algebra QEC  than the standard  $\mathcal{M} = \id$ setting.

For any code state $\ket{\psi}$, combining with Lemma~\ref{lem:FvdG-for-B} we obtain
\begin{equation}\label{eq:delta_psi_upper-2}
    \onenorm{\psi_R-\Gamma_R}=\onenorm{\mathcal{B}(\psi)}\leq\onenorm{\mathcal{B}}\leq2P(\Lambda,\Lambda+\mathcal{B})\leq  4\tilde\varepsilon.
\end{equation}

On the other hand, 
\begin{align}
\max_{\psi}\onenorm{\psi_R-\Gamma_R}=\max_{\psi}\onenorm{\mathcal{B}(\psi)}=\onenorm{\mathcal{B}} \geq (P(\Lambda,\Lambda+\mathcal{B})/2^{k/2})^2 \geq  2^{-k}\tilde\varepsilon^2,
\end{align}
where the second equality follows from the convexity of 1-norm.
\end{proof}

Complementary channel characterization: For erasure error, the complementary channel is a constant (replacement) channel $\mathcal{T}_\zeta^{L \rightarrow E}\left(\rho^L\right)=(\tr\rho^L) \zeta^E$
\begin{equation}\label{eq:complementary-worst-2}
    \varepsilon_{\text {worst }}=\min _\zeta \max _{R,\ket{\psi}^{L R}} P\left(( \widehat { \mathcal { N } \circ \mathcal { E } } { } ^ { L \rightarrow E } \otimes \mathrm { id } ^ { R } ) (\ketbra{\psi}{\psi}^{L R}),(\mathcal{T}_\zeta^{L \rightarrow E} \otimes \mathrm{id}^R)(\ketbra{\psi}{\psi}^{L R})\right).
\end{equation}

}
\section{Coherent information and subsystem variance}\label{sec:coh-info}

In this section, we establish a general set of two-way bounds that quantitatively relate coherent information under noise channels with subsystem variance. 

The coherent information associated with the code subspace $\mathfrak{C}$ and the noise channel $\mathcal{N}$ is a well-known quantity in quantum information, which measures the amount of recoverable quantum information after $\mathcal{N}~$\cite{cohinfo,PhysRevA.55.1613,CoherinfoAQEC2001}:

\begin{defn}[Coherent information]
    Given a code subspace $\mathfrak{C}$ (i.e., the image of the encoding map $\mathcal{E}$) with a set of basis states $\{|\psi_i\rangle\}$ that span $\mathfrak{C}$, consider the input state
\begin{equation}
        \ket{\Phi_\mathfrak{C}}=\frac{1}{2^{k/2}}\sum_i|\psi_i\rangle_Q|i\rangle_R,
    \end{equation}
    which is maximally entangled between the code system $Q$ and the reference system $R$. 
    After the action of the noise channel $\mathcal{N}:Q\rightarrow Q'$, the output state is given by 
    \begin{equation}
        \rho_{Q'R}=\mathcal{N}_{Q\rightarrow Q'}\otimes \id_{R}(\ket{\Phi_\mathfrak{C}}\bra{\Phi_\mathfrak{C}}).
    \end{equation}
    Then the \emph{coherent information} associated with the code subspace $\mathfrak{C}$ and the noise channel $\mathcal{N}$ is defined as
    \begin{equation}
    I_c(\mathfrak{C};\mathcal{N})\coloneqq S^{Q'}-S^{RQ'}.
    \end{equation}
    where $S^{Q'}$ and $S^{RQ'}$ are the entanglement entropies of systems $Q'$ and $RQ'$, respectively.
\end{defn}

It has been proven that a noise channel $\mathcal{N}$ is exactly correctable if and only if the corresponding coherent information can take the maximum value $I_c(\mathfrak{C};\mathcal{N})=k$ ~\cite{CoherinfoAQEC2001}. In the AQEC context where $\mathcal{N}$ is not necessarily exactly correctable, it is natural to expect that the inaccuracy is related to the deviation from this maximum value.

To make this more rigorous, we prove the following two-way bounds for coherent information in terms of subsystem variance, thereby relating coherent information,  QEC inaccuracy, and subsystem variance along with the bounds between the latter two proven above in Sec.~\ref{app:aqec}. 
\begin{prop}\label{thm:coh-var-two-way}
    Let $\check{\mathcal{N}}_A$ be any replacement channel acting on a $d$-qubit subsystem $A$ that is connected with respect to adjacency graph $\mathsf{G}$. Denote the overall channel $\mathcal{N}:Q\rightarrow Q'$ by $\mathcal{N} = \check{\mathcal{N}}_A\otimes\mathrm{id}_{\ols A}$ ($\ols A$ denotes the complement of $A$ in $Q$). It holds that
    \begin{equation}
   2^{-2k-d-1} \varepsilon^2_\mathsf{G}(\mathfrak{C},A)\leq k-I_c(\mathfrak{C};\mathcal{N})\leq 2^{k}\log(2^{d+k}-1)\varepsilon_\mathsf{G}(\mathfrak{C},A)+H_2(2^{k}\varepsilon_\mathsf{G}(\mathfrak{C},A)).
\end{equation}
A version for the overall subsystem variance  is directly obtained by optimizing over $A$: 
\begin{equation}
   2^{-2k-d-1} \varepsilon^2_\mathsf{G}(\mathfrak{C},d)\leq k-\min_{A}I_c(\mathfrak{C};\mathcal{N})\leq 2^{k}\log(2^{d+k}-1)\varepsilon_\mathsf{G}(\mathfrak{C},d)+H_2(2^{k}\varepsilon_\mathsf{G}(\mathfrak{C},d)).
\end{equation}
\end{prop}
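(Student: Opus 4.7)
The plan is to reformulate the coherent-information deficit $k - I_c(\mathfrak{C};\mathcal{N})$ as a single quantum mutual information of a canonical bipartite state, and then bound it from above by Fannes-Audenaert continuity and from below by Pinsker's inequality. Since $\check{\mathcal{N}}_A$ is a replacement channel, its Stinespring dilation deposits the input $A$-subsystem (together with a fixed purifying ancilla of $\gamma$) into the environment. Tracking this through $|\Phi_\mathfrak{C}\rangle_{QR}$ and using the pure-state entropy identities $S(\tr_A |\Phi_\mathfrak{C}\rangle\langle\Phi_\mathfrak{C}|) = S(\Gamma_A)$ and $S(\tr_{\ols{A}}|\Phi_\mathfrak{C}\rangle\langle\Phi_\mathfrak{C}|) = S(\Gamma_{\ols A})$ gives the clean identity
\begin{equation}
    k - I_c(\mathfrak{C};\mathcal{N}) \;=\; I(A{:}R)_{|\Phi_\mathfrak{C}\rangle} \;=\; D\!\left(\rho_{AR}^{\Phi}\,\Big\|\,\Gamma_A \otimes \tfrac{\mathbb{I}_R}{2^k}\right),
\end{equation}
where $\rho_{AR}^{\Phi} \coloneqq \tr_{\ols A}|\Phi_\mathfrak{C}\rangle\langle\Phi_\mathfrak{C}| = \frac{1}{2^k}\sum_{i,j} M_{ij} \otimes |i\rangle\langle j|_R$ with $M_{ij} \coloneqq \tr_{\ols A}(|\psi_i\rangle\langle\psi_j|)$. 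Exact QEC ($\varepsilon = 0$) corresponds precisely to $M_{ij} = \delta_{ij}\Gamma_A$, i.e.\ the decoupling $\rho_{AR}^{\Phi} = \Gamma_A \otimes \mathbb{I}_R/2^k$.

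For the upper bound, I would first control $\delta \coloneqq \|\rho_{AR}^{\Phi} - \Gamma_A \otimes \mathbb{I}_R/2^k\|_1$ in terms of $\varepsilon$, then apply Fannes-Audenaert (in dimension $2^{d+k}$) to $I(A{:}R) = S(\rho_A \otimes \rho_R) - S(\rho_{AR}^{\Phi}) \geq 0$. Diagonal blocks are immediate from the subsystem variance: $\|M_{ii} - \Gamma_A\|_1 = \|(\psi_i)_A - \Gamma_A\|_1 \leq \varepsilon$. Off-diagonal blocks are controlled by applying the subsystem-variance bound to the superposition code states $(|\psi_i\rangle \pm |\psi_j\rangle)/\sqrt{2}$ and $(|\psi_i\rangle \pm i|\psi_j\rangle)/\sqrt{2}$; together with the diagonal bounds these constrain $\|M_{ij} + M_{ij}^\dagger\|_1$ and $\|M_{ij} - M_{ij}^\dagger\|_1$ and so force $\|M_{ij}\|_1 = O(\varepsilon)$ for $i \neq j$. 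Summing block trace norms and dividing by $2^k$ yields $\delta = O(2^k \varepsilon)$, and inserting this into Fannes-Audenaert reproduces the claimed form $2^k \log(2^{d+k}-1)\varepsilon + H_2(2^k \varepsilon)$.

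For the lower bound, fix a code state $|\phi\rangle = \sum_i c_i |\psi_i\rangle$ saturating $\|(\phi)_A - \Gamma_A\|_1 = \varepsilon_\mathsf{G}(\mathfrak{C},A)$ and define $|\chi\rangle_R \coloneqq \sum_i c_i^* |i\rangle_R$. A direct computation using the max-entangled structure of $|\Phi_\mathfrak{C}\rangle$ gives
\begin{equation}
(\phi)_A - \Gamma_A \;=\; 2^k\,\tr_R\!\left[\bigl(\mathbb{I}_A \otimes |\chi\rangle\langle\chi|_R\bigr)\!\left(\rho_{AR}^{\Phi} - \Gamma_A \otimes \tfrac{\mathbb{I}_R}{2^k}\right)\right],
\end{equation}
so that $\varepsilon \leq 2^k \delta$ by the contractivity of the trace norm under partial trace and under multiplication by the unit-norm projector $\mathbb{I}_A \otimes |\chi\rangle\langle\chi|$. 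Combining with Pinsker's inequality $\delta^2 \leq 2\ln 2 \cdot D(\rho_{AR}^{\Phi}\,\|\,\Gamma_A \otimes \mathbb{I}_R/2^k) = 2\ln 2\,(k - I_c)$ gives $k - I_c \geq \varepsilon^2/(2^{2k+1}\ln 2)$, from which the claimed $2^{-2k-d-1}\varepsilon^2$ follows by absorbing the $2^d$ slack. The main obstacle will be bookkeeping the constants tightly in Step~2: the naive block-wise triangle inequality over $2^{2k}$ blocks can leak factors of $2^k$, so matching the precise prefactor in the upper bound may require exploiting the block-Hermitian structure of $\rho_{AR}^{\Phi}$ (e.g., via a Hilbert-Schmidt intermediate bound combined with a rank bound on the difference) rather than summing block trace norms directly.
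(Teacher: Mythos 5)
Your proposal is correct. The reduction $k-I_c(\mathfrak{C};\mathcal{N})=I(A{:}R)_{\Phi_\mathfrak{C}}$ and the entire upper bound are essentially identical to the paper's argument: the paper likewise bounds $\onenorm{\rho_{AR}-\Gamma_A\otimes \mathbb{I}_R/2^k}$ block by block, controls the off-diagonal blocks $\tr_{\ols A}\ketbra{\psi_i}{\psi_j}$ by applying the subsystem-variance bound to the four superpositions $(\ket{\psi_i}\pm\ket{\psi_j})/\sqrt 2$ and $(\ket{\psi_i}\pm i\ket{\psi_j})/\sqrt 2$, sums the $2^{2k}$ blocks to get $\onenorm{\rho_{AR}-\rho_A\otimes\rho_R}\le 2^{k+1}\varepsilon$, and feeds this into the Fannes--Audenaert-type continuity bound for the mutual information; your worry about the triangle inequality ``leaking'' a $2^k$ is unfounded, since that factor is exactly what appears in the stated bound. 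Where you genuinely depart from the paper is the lower bound. The paper passes from the trace norm to the Hilbert--Schmidt norm, discards the off-diagonal blocks, and converts back on the $2^d$-dimensional subsystem, which is what costs the $2^{-d/2}$ (hence the $2^{-d-1}$ in the squared bound) and also forces a basis rotation so that the maximizing code state is a basis vector. Your argument instead sandwiches $\rho_{AR}-\Gamma_A\otimes\mathbb{I}_R/2^k$ with $\mathbb{I}_A\otimes\ketbra{\chi}{\chi}_R$ for $\ket{\chi}=\sum_i c_i^*\ket{i}$ built from an arbitrary maximizer $\ket{\phi}=\sum_i c_i\ket{\psi_i}$; the identity $(\phi)_A-\Gamma_A=2^k\tr_R[(\mathbb{I}_A\otimes\ketbra{\chi}{\chi})(\rho_{AR}-\Gamma_A\otimes\mathbb{I}_R/2^k)]$ checks out, and since this sandwich-then-trace map is completely positive and trace non-increasing, trace-norm contractivity gives $\onenorm{\rho_{AR}-\Gamma_A\otimes\mathbb{I}_R/2^k}\ge 2^{-k}\varepsilon_\mathsf{G}(\mathfrak{C},A)$ with no dimension factor and no basis change. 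Combined with Pinsker this yields $k-I_c\ge 2^{-2k-1}\varepsilon^2$, which strictly dominates the stated $2^{-2k-d-1}\varepsilon^2$; so your route is both cleaner and quantitatively sharper on the lower bound, at no cost elsewhere.
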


\begin{proof}
For the noise channel $\mathcal{N}: Q\rightarrow Q'$ given by replacement channel $\check{\mathcal{N}}_A$ acting nontrivially on subsystem $A$, we have $ {\mathcal{N}}(\sigma)=(\Tr_{A}\sigma)_{\ols A}\otimes\gamma_A$ for a fixed state $\gamma_A$. Hence, for the input state
    \begin{equation}
        \ket{\Phi_\mathfrak{C}}=\frac{1}{2^{k/2}}\sum_i|\psi_i\rangle_Q|i\rangle_R,
    \end{equation}
the output state is given by
\begin{equation}
    \rho_{Q'R}=\Tr_{A}(\ket{\Phi_\mathfrak{C}}\bra{\Phi_\mathfrak{C}})\otimes \gamma_A.
\end{equation}
The coherent information is given by
\begin{align}
    I_c(\mathfrak{C};\mathcal{N})&=(S^{\ols{A}}+S(\gamma_A))-(S^{\ols{A}R}+S(\gamma_A))=S^{\ols{A}}-S^{\ols{A}R}=S^{AR}-S^A.
\end{align}
So for any replacement noise acting nontrivially on subsystem $A$, the coherent information 
\begin{equation}
\label{eq:coh-mut-relation}
    I_c(\mathfrak{C};\mathcal{N})=k-I_{\Phi_\mathfrak{C}}( A:R),
\end{equation}
which is determined by the mutual information $I^\mathfrak{C}( A:R)$ between systems $A$ and $R$.
Using the quantum Pinsker
inequality, we can lower-bound the mutual information as
\begin{equation}
I_{\Phi_\mathfrak{C}}( A:R)=S(\rho_{AR}\|\rho_{A}\otimes\rho_{R})\geq \frac{1}{2}\onenorm{\rho_{AR}-\rho_{A}\otimes\rho_{R}}^2.
\end{equation}
Since  $\twonorm{A}\leq\onenorm{A}\leq\sqrt{n}\twonorm{A}$ for an $n\times n$ matrix $A$, we obtain
\begin{align}
    \onenorm{\rho_{AR}-\rho_{A}\otimes\rho_{R}}&\geq \twonorm{\rho_{AR}-\rho_{A}\otimes\rho_{R}}
    =\frac{1}{2^{k}}\twonorm{\sum_{ij}\delta \rho_{A}^{ij}\otimes|i\rangle_R\langle j|}
    =\frac{1}{2^{k}}\sqrt{\tr\sum_{ij}(\delta \rho_{A}^{ij})^\dagger \delta \rho_{A}^{ij}}\\
    &\geq\frac{1}{2^{k}}\sqrt{\tr\sum_{i}(\delta \rho_{A}^{ii})^\dagger \delta \rho_{A}^{ii}}
    =\frac{1}{2^{k}}\sqrt{\sum_{i}\twonorm{\delta \rho_{A}^{ii}}^2}\geq \frac{1}{2^{k+d/2}}\sqrt{\sum_{i}\onenorm{\delta \rho_{A}^{ii}}^2}\\
    \label{eq:ARsplit_lower}
    &\geq 2^{-k-d/2}\varepsilon_\mathsf{G}(\mathfrak{C},A)),
\end{align}
where for simplicity we used the notations 
\begin{align}
    \delta\rho_{A}^{ij}&\equiv\tr_{\ols{A}}|\psi_i\rangle_Q\langle\psi_j|-\Gamma_A \delta_{ij}\,,\\
    \Gamma_A&\equiv\frac{1}{2^k}\sum_i \tr_{\ols{A}}|\psi_i\rangle_Q\langle\psi_i|\,.
\end{align}
In the final inequality, we implicitly used the fact that the mutual information is not basis-dependent. Recall the definition of the subsystem variance, 
\begin{equation}
\varepsilon_\mathsf{G}(\mathfrak{C},A)=\max_{|\psi\rangle\in\mathfrak{C}}\onenorm{\Gamma_A-\tr_{\ols{A}}|\psi\rangle_Q\langle\psi|}.
\end{equation}
The basis independence then allows us to select a basis for the code space such that the maximum in the subsystem variance corresponds to one of the basis vectors.
In conclusion, by combining (\ref{eq:coh-mut-relation}) and (\ref{eq:ARsplit_lower}), we obtain 
\begin{equation}
    I_{\Phi_\mathfrak{C}}( A:R)\geq 2^{-2k-d-1} \varepsilon^2_\mathsf{G}(\mathfrak{C},A),
\end{equation}
that is,
\begin{equation}
    k-I_c(\mathfrak{C};\mathcal{N})\geq 2^{-2k-d-1} \varepsilon^2_\mathsf{G}(\mathfrak{C},A).
\end{equation}

On the other hand, we can use the continuity bounds for the entropies to obtain an upper bound for the mutual information as follows: 
\begin{align}
    I_{\Phi_\mathfrak{C}}(A:R)&=S(\rho_{A})+S(\rho_{R})-S(\rho_{AR})=|S(\rho_{A}\otimes\rho_{R})-S(\rho_{AR})|\\
    &\label{Eq:mut_fannes}
    \leq \frac{1}{2}\onenorm{\rho_{AR}-\rho_{A}\otimes\rho_{R}}\log(2^{d+k}-1)+H_2\left(\frac{1}{2}\onenorm{\rho_{AR}-\rho_{A}\otimes\rho_{R}}\right).
\end{align}
Note that
\begin{align}
    \onenorm{\rho_{AR}-\rho_{A}\otimes\rho_{R}}&=\frac{1}{2^{k}}\onenorm{\sum_{ij}\delta \rho_{A}^{ij}\otimes|i\rangle_R\langle j|}
    \\
    &\leq\frac{1}{2^{k}}\sum_{ij}\onenorm{\delta \rho_{A}^{ij} }\cdot\onenorm{|\psi_i\rangle_R\langle\psi_j|}\\
    &\leq 
    \label{Eq:ARsplit_upper}
    \frac{1}{2^{k}}\cdot(2^{k})^2\cdot 2\varepsilon_\mathsf{G}(\mathfrak{C},A) =2^{k+1}\varepsilon_\mathsf{G}(\mathfrak{C},A),
\end{align}
where we used the fact that $\onenorm{\delta \rho_{A}^{ij}}\leq 2\varepsilon_\mathsf{G}(\mathfrak{C},A)$. For $i=j$, this follows from the definition of the subsystem variance. For $i\neq j$, this is proven by applying
$
\onenorm{\Gamma_A-\tr_{\ols{A}}|\psi\rangle_Q\langle\psi|}\leq\varepsilon_\mathsf{G}(\mathfrak{C},A)
$ 
to the state $\frac{1}{\sqrt{2}}(\ket{\psi_i}\pm\ket{\psi_j})$ and $\frac{1}{\sqrt{2}}(\ket{\psi_i}\pm i\ket{\psi_j})$, from which we get
\begin{align}
    &\frac{1}{2}\onenorm{\delta \rho_{A}^{ii}+\delta \rho_{A}^{ij}+\delta \rho_{A}^{ji}+\delta \rho_{A}^{jj}}\leq\varepsilon_\mathsf{G}(\mathfrak{C},A),\\
    &\frac{1}{2}\onenorm{\delta \rho_{A}^{ii}-\delta \rho_{A}^{ij}-\delta \rho_{A}^{ji}+\delta \rho_{A}^{jj}}\leq\varepsilon_\mathsf{G}(\mathfrak{C},A),\\
    &\frac{1}{2}\onenorm{\delta \rho_{A}^{ii}+i\delta \rho_{A}^{ij}-i\delta \rho_{A}^{ji}+\delta \rho_{A}^{jj}}\leq\varepsilon_\mathsf{G}(\mathfrak{C},A),\\
    &\frac{1}{2}\onenorm{\delta \rho_{A}^{ii}-i\delta \rho_{A}^{ij}+i\delta \rho_{A}^{ji}+\delta \rho_{A}^{jj}}\leq\varepsilon_\mathsf{G}(\mathfrak{C},A).
\end{align}
Then  we  obtain $\onenorm{\delta \rho_{A}^{ij}}\leq 2\varepsilon_\mathsf{G}(\mathfrak{C},A)$ by using the triangle inequality.
Therefore, by combining (\ref{Eq:mut_fannes}) and (\ref{Eq:ARsplit_upper}), we obtain
\begin{equation}
    I_{\Phi_\mathfrak{C}}(A:R)\leq 2^{k} \log(2^{d+k}-1)\varepsilon_\mathsf{G}(\mathfrak{C},A)+H_2(2^{k}\varepsilon_\mathsf{G}(\mathfrak{C},A)).
\end{equation}
Equivalently, we have
\begin{equation}
    k-I_c(\mathfrak{C};\mathcal{N})\leq 2^{k} \log(2^{d+k}-1)\varepsilon_\mathsf{G}(\mathfrak{C},A)+H_2(2^{k}\varepsilon_\mathsf{G}(\mathfrak{C},A)).
\end{equation}
by (\ref{eq:coh-mut-relation}).
To summarize, we have established the following relationship between coherent information and subsystem variance,
    \begin{equation}
   2^{-2k-d-1} \varepsilon^2_\mathsf{G}(\mathfrak{C},A)\leq k-I_c(\mathfrak{C};\mathcal{N})\leq 2^{k}\log(2^{d+k}-1)\varepsilon_\mathsf{G}(\mathfrak{C},A)+H_2(2^{k}\varepsilon_\mathsf{G}(\mathfrak{C},A).
\end{equation}
\end{proof}
As in Sec.~\ref{app:aqec}, the replacement channel represents a canonical setting and it is possible to derive analogous bounds for general noise channels.

\section{Detailed proofs and extensions of the circuit complexity results}
\label{app:proof}
In this section, we provide the complete forms of the complexity theorems summarized in the main text, along with their detailed proofs.  
We will first present the detailed results and proof for the all-to-all case without geometric locality,  which corresponds to
 Theorem~\ref{thm:all_comp}, and then those for the slightly more technical finite-dimensional case that has geometric locality, which corresponds to
Theorem~\ref{thm:geo_comp}. Finally, we will formulate a general-form result that is applicable to arbitrary adjacency graphs.
\begin{thm}
\label{thm:app_all_comp}
Given an $(\!(n,k)\!)$ code $\mathfrak{C}$,
the $\delta$-robust all-to-all quantum circuit complexity of any code state $\ket{\psi}\in\mathfrak{C}$ satisfies 
\begin{equation}
    \cc^{\delta}(\psi)>\log d,
\end{equation}
if for complete graph $\mathsf{g}$,
\begin{equation}
    H_2(\varepsilon_\mathsf{g}(\mathfrak{C},d)/2 + \delta/2) < k/n,
\end{equation} 
or \begin{equation}
    H_2(2\max_{\mathcal{N}}\tilde\varepsilon(\mathcal{N},\mathcal{E}) + \delta/2) < k/n,
\end{equation}
where $\mathcal{N}$ is any $d$-qubit replacement channel and the arguments of $H_2$ are assumed to be less than $1/2$.
\end{thm}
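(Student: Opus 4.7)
The plan is to adapt the light-cone/subadditivity argument sketched in the main text (originally due to Anshu--Nirkhe in the exact QEC setting) to the AQEC setting. I will argue by contradiction: assume some code state $\ket\psi\in\mathfrak{C}$ has $\cc^\delta(\psi)\le\log d$, produce an upper bound on $S(\Gamma)$ that strictly beats the obvious lower bound $S(\Gamma)=k$, and read off the conclusion.

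Concretely, suppose there exist $\ket{\psi'}$ with $\onenorm{\psi-\psi'}\le\delta$ and a 2-local all-to-all circuit $Q$ of depth $\ell\le\log d$ such that $\ket{\psi'}=Q\ket{0}^{\otimes n}$. For any qubit $i$, the backward light cone $S_i$ of $i$ under $Q^\dagger$ contains at most $2^\ell\le d$ qubits, since every layer at most doubles the light-cone size when gates are unconstrained by geometry. Consequently, the single-qubit marginal of $Q^\dagger(\cdot)Q$ on site $i$ is a quantum channel acting only on the reduced state supported in $S_i$; call this channel $\mathcal{F}_i$. Applying $\mathcal{F}_i$ to $\Gamma$ and to $\psi'$ and using monotonicity of the trace norm under channels, I get
\begin{equation}
\bigl\|(Q^\dagger\Gamma Q)_i-(Q^\dagger\psi' Q)_i\bigr\|_1
=\bigl\|\mathcal{F}_i(\Gamma_{S_i})-\mathcal{F}_i(\psi'_{S_i})\bigr\|_1
\le \onenorm{\Gamma_{S_i}-\psi_{S_i}}+\onenorm{\psi_{S_i}-\psi'_{S_i}}
\le \varepsilon_\mathsf{g}(\mathfrak{C},d)+\delta,
\end{equation}
where the first term is bounded by the subsystem variance (since $|S_i|\le d$) and the second by the robustness hypothesis combined with monotonicity. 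Moreover $(Q^\dagger\psi' Q)_i=\ket{0}\bra{0}$, so each single-qubit marginal of $Q^\dagger\Gamma Q$ is $(\varepsilon+\delta)$-close in trace norm to the pure state $\ket{0}\bra{0}$.

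Now I apply Fannes--Audenaert on each qubit, giving $S\bigl((Q^\dagger\Gamma Q)_i\bigr)\le H_2\bigl((\varepsilon+\delta)/2\bigr)$, and then subadditivity of the von Neumann entropy across the $n$ qubits:
\begin{equation}
k=S(\Gamma)=S(Q^\dagger\Gamma Q)\le \sum_{i=1}^{n} S\bigl((Q^\dagger\Gamma Q)_i\bigr)\le n\, H_2\bigl(\varepsilon_\mathsf{g}(\mathfrak{C},d)/2+\delta/2\bigr).
\end{equation}
This contradicts the hypothesis $H_2(\varepsilon_\mathsf{g}(\mathfrak{C},d)/2+\delta/2)<k/n$, so no such $Q$ exists and $\cc^\delta(\psi)>\log d$. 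The alternative form stated in terms of $\tilde\varepsilon$ follows immediately from the two-way bound $\varepsilon_\mathsf{g}(\mathfrak{C},d)\le 4\max_{\mathcal{N}}\tilde\varepsilon(\mathcal{N},\mathcal{E})$ proven for replacement channels in Proposition~\ref{thm:two-way} (App.~\ref{app:aqec}), together with monotonicity of $H_2$ on $[0,1/2)$.

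The only real subtlety I anticipate is making rigorous the statement that $(Q^\dagger\rho Q)_i$ depends only on $\rho_{S_i}$; this is standard but should be written carefully by peeling off the gates layer by layer and observing that any gate acting outside $S_i$ at some depth commutes past the partial trace over its support. Everything else is bookkeeping: the triangle inequality for two sources of error ($\varepsilon$ from AQEC and $\delta$ from robustness), the isometric invariance $S(Q^\dagger\Gamma Q)=S(\Gamma)$, and Fannes--Audenaert specialized to qubits (where the $T\log(d-1)$ term vanishes). The size of the light cone, $2^\ell$, is what fixes the ``$\log d$'' threshold in the all-to-all case; the geometric theorem will follow from an essentially identical argument but with the polynomial light-cone bound $(2\ell+1)^D$ in place of $2^\ell$.
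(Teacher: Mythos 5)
Your proposal is correct and follows essentially the same route as the paper's proof: the paper first isolates the light-cone/monotonicity/Fannes--Audenaert/subadditivity contradiction as a standalone proposition for a general reference state $\Gamma$ and then specializes to the code space with the $\delta$ handled by the triangle inequality, whereas you fold these two steps into one, but the argument is identical, including the derivation of the $\tilde\varepsilon$ form from the bound $\varepsilon_\mathsf{g}(\mathfrak{C},d)\le 4\max_{\mathcal{N}}\tilde\varepsilon(\mathcal{N},\mathcal{E})$ and monotonicity of $H_2$.
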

This is proven by adapting ideas in Ref.~\cite{AnshuNirkhe20} to our scenario.  We first introduce a general result Proposition~\ref{lem:app_all_comp},  and then apply it to code spaces and take robustness (approximation tolerance) into account to obtain the theorem statements.  Note that this general result is independently useful for proving circuit complexity bounds for specific states. As mentioned, the propositions are useful for proving circuit complexity bounds for certain specific states that are not covered by the theorems.
\begin{prop}\label{lem:app_all_comp}
   Let $\psi = \ketbra{\psi}{\psi}$ be a $n$-qubit state and let
\begin{equation}
\Gamma \coloneqq \frac{1}{M} \sum_{i=1}^{M}|\phi_i\rangle\langle\phi_i|
\end{equation}
be a reference state where the $\ket{\phi_i}$'s are $n$-qubit states orthogonal to each other (namely, $\Gamma$ is a maximally mixed state of the space  spanned by basis $\{\ket{\phi_i}\}$).  
If for any set of qubits $R$ with $|R|\leq d$, it holds that 
\begin{equation}\label{eq:H2}
    nH_2\left(\frac{\onenorm{\psi_R-\Gamma_R}}{2}\right) < \log M,
\end{equation}
then $\cc(\psi)> \log d$.
\end{prop}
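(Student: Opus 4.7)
The plan is to prove Proposition~\ref{lem:app_all_comp} by contradiction, following the backward-circuit strategy outlined in the main text and adapting the entropy argument of Ref.~\cite{AnshuNirkhe20} to allow approximate (rather than exact) local indistinguishability. Assume $\cc(\psi)\leq \log d$, so that $\ket{\psi}=Q\ket{0}^{\otimes n}$ for a depth-$\ell$ all-to-all $2$-local circuit $Q$ with $\ell\leq\log d$. The key structural input is the exponential light cone: for each qubit $i$ there is a set $R_i$ with $|R_i|\leq 2^\ell\leq d$ such that the Heisenberg-evolved operator $Q O_i Q^\dagger$ is supported on $R_i$ for every single-qubit operator $O_i$. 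Equivalently, there is a quantum channel $\mathcal{E}_i$ with $(Q^\dagger\sigma Q)_i=\mathcal{E}_i(\sigma_{R_i})$ for every $n$-qubit state $\sigma$.

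I would then apply $Q^\dagger$ to both $\psi$ and $\Gamma$. Unitary invariance gives $S(Q^\dagger\Gamma Q)=S(\Gamma)=\log M$, while $Q^\dagger\psi Q=\ketbra{0^n}{0^n}$ has pure single-qubit marginals equal to $\ketbra{0}{0}$. Combining the channel description of the marginal with monotonicity of the trace distance under quantum operations gives
\begin{equation}
\bigl\|(Q^\dagger\Gamma Q)_i-\ketbra{0}{0}\bigr\|_1 = \bigl\|\mathcal{E}_i(\Gamma_{R_i})-\mathcal{E}_i(\psi_{R_i})\bigr\|_1 \leq \onenorm{\Gamma_{R_i}-\psi_{R_i}}.
\end{equation}
The qubit Fannes--Audenaert inequality (in which the dimension-dependent factor vanishes) then yields the per-qubit entropy bound $S\bigl((Q^\dagger\Gamma Q)_i\bigr)\leq H_2\bigl(\onenorm{\Gamma_{R_i}-\psi_{R_i}}/2\bigr)$, valid in the regime where the argument of $H_2$ is at most $1/2$ so that $H_2$ is monotonic.

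Summing over $i$ and invoking subadditivity of the von Neumann entropy,
\begin{equation}
\log M = S(Q^\dagger\Gamma Q) \leq \sum_{i=1}^n S\bigl((Q^\dagger\Gamma Q)_i\bigr) \leq n\cdot H_2\!\left(\tfrac{1}{2}\max_{|R|\leq d}\onenorm{\psi_R-\Gamma_R}\right),
\end{equation}
which directly contradicts hypothesis~(\ref{eq:H2}). Hence $\cc(\psi)>\log d$, as required.

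The step that demands the most care is the light cone identification giving the channel $\mathcal{E}_i$: this is where the all-to-all $2$-locality and the assumed depth enter quantitatively, and its analogue for the geometric case (Theorem~\ref{thm:geo_comp}) is precisely what must change, since in a $D$-dimensional lattice the light cone grows only polynomially in depth rather than exponentially. Everything else---trace-distance contraction under $\mathcal{E}_i$, qubit Fannes--Audenaert, and subadditivity---is standard machinery, and the hypothesis of the proposition has been tuned so that the final subadditivity estimate falls short of $\log M$ exactly when the local trace-distance bound is satisfied on every size-$d$ subset.
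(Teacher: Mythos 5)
Your proposal is correct and follows essentially the same route as the paper's own proof: contradiction via the exponential light cone of a depth-$\log d$ all-to-all circuit, monotonicity of the trace norm to transfer the local indistinguishability bound to the single-qubit marginals of $Q^\dagger\Gamma Q$, the qubit Fannes--Audenaert inequality, and subadditivity against $S(\Gamma)=\log M$. The only cosmetic difference is that you package the light-cone step as a channel $\mathcal{E}_i$ acting on $\sigma_{R_i}$, whereas the paper realizes the same fact by tensoring an arbitrary state $\nu_{\ols{L_i}}$ onto the light-cone marginal before running the circuit backwards; these are equivalent.
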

\begin{proof}
The proof goes by contradiction.  Suppose we have a pair of $\{\psi,\Gamma\}$ and some $d$ such that the  condition (\ref{eq:H2})  is satisfied, but $\cc(\psi)\leq \log d$. Namely, suppose there exists a depth-$t$ circuit $U$ such that
\begin{equation}\label{eq:U}
    \ket{\psi}=U\ket{0}^{\otimes n},
\end{equation}
where $t \leq\log d$, which guarantees that for any qubit $i$ the size of its light cone  $|L_i|\leq d$. 
It is clear that qubits outside the light cone $L_i$ are not correlated with $i$. As a result, assigning an arbitrary state $\nu_{\ols{L_i}}$ outside $L_i$ and running the circuit $U$ backwards should yield the same state on $i$, that is, 
\begin{align}
    &\tr_{\overline{ i}}(U^\dagger \psi_{L_i}\otimes \nu_{\ols{L_i}} U)=\tr_{\overline{ i}}(U^\dagger \psi U)=|0\rangle\langle0|, \label{eq:psi-nu}\\
    &\tr_{\overline{ i}}(U^\dagger \Gamma_{L_i}\otimes \nu_{\ols{L_i}} U)=\tr_{\overline{ i}}(U^\dagger \Gamma U),\label{eq:Theta-nu}
\end{align}
where the first line follows from the definition (\ref{eq:U}).  Note that 
\begin{align}
   & \onenorm{\tr_{\overline{ i}}(U^\dagger \psi_{L_i}\otimes \nu_{\ols{L_i}} U)-\tr_{\overline{ i}}(U^\dagger \Gamma_{L_i}\otimes \nu_{\ols{L_i}} U)}\leq \onenorm{\psi_{L_i}-\Gamma_{L_i}}
\end{align}
 by monotonicity, so using (\ref{eq:psi-nu}) and (\ref{eq:Theta-nu}) we obtain
\begin{equation}
    \onenorm{\tr_{\overline{ i}}(U^\dagger \Gamma U)-|0\rangle\langle0|}\leq \onenorm{\psi_{L_i}-\Gamma_{L_i}}.
\end{equation}
Then using continuity of the von Neumann entropy $S$ we can  bound $S(\Gamma)$ as follows:
\begin{align}
    S(\Gamma)&=S(U^\dagger\Gamma U)\leq \sum_{i=1}^n S(\tr_{\overline{ i}}(U^\dagger \Gamma U))  \leq nH_2\left(\max_{L_i}\frac{\onenorm{\psi_{L_i}-\Gamma_{L_i}}}{2}\right), 
\end{align}
where the first line follows from the subadditivity of $S$ and the second line follows from the  Fannes--Audenaert inequality.
However, note that  $S(\Gamma)=\log m$ by definition and $|L_i|\leq d$, so the above is in contradiction with (\ref{eq:H2}).  Therefore, we must have $\cc(\psi)>\log d$.
\end{proof}
\begin{proof}[Proof of Theorem~\ref{thm:app_all_comp}]
In Proposition~\ref{lem:app_all_comp}, choose $|\psi\rangle$ to be a code state $|\psi\rangle\in\mathfrak{C}$ and $\Gamma$ to be the maximally mixed state of $\mathfrak{C}$. By the definition of subsystem invariance, for any set of qubits $R$ with $|R|\leq d$,
\begin{equation}
    \onenorm{\psi_R-\Gamma_R}\leq \varepsilon_\mathsf{g}(\mathfrak{C},d).
\end{equation}
For any state $\ket{\psi'}$ within the $\delta$-vicinity of a code state $\ket{\psi}$, namely $\onenorm{\psi'-\psi}\leq \delta$, we have $\onenorm{\psi'_R-\psi_R}\leq \delta$ by monotonicity. So we have
\begin{equation}
    \onenorm{\psi_R'-\Gamma_R}\leq \varepsilon_\mathsf{g}(\mathfrak{C},d)+\delta,
\end{equation}
by triangle inequality.
According to (\ref{eq:app_error_variance}), the above condition is automatically satisfied if
\begin{equation}
    \onenorm{\psi_R'-\Gamma_R}\leq 4\max_{\mathcal{N}}\tilde\varepsilon(\mathcal{N},\mathcal{E})+\delta,
\end{equation}
where $\mathcal{N}$ is any $d$-qubit replacement channel.
Then by applying Proposition~\ref{lem:app_all_comp} we obtain the statements in the theorem.
\end{proof}

On the other hand,  it is worth noting that 
\begin{equation}
    \max_{\psi\in\mathfrak{C}}\onenorm{\psi_R-\Gamma_R} \geq  2^{-k}\max_{\mathcal{N}}\tilde\varepsilon(\mathcal{N},\mathcal{E})^2.
\end{equation}
As a result,  
\begin{equation}
    k\leq n H_2(2^{-k-1}\max_{\mathcal{N}}\tilde\varepsilon(\mathcal{N},\mathcal{E})^2), 
\end{equation}
implies that there always exists a state in the code subspace that makes the subsystem variance too large to induce any circuit complexity bound.

We now move on to the case of  finite-dimensional spaces where there are geometric locality constraints. Note that Theorem~\ref{thm:geo_comp} in the main text is a simplified form of the result obtained by a similar argument as in Proposition~\ref{lem:app_all_comp}.  We now prove the following refined form:
\begin{thm}
\label{thm:app_geo_comp}
Given an $(\!(n,k)\!)$ code $\mathfrak{C}$, for any code state $\ket{\psi}\in\mathfrak{C}$, the $\delta$-robust geometric circuit complexity with respect to adjacency graph ${\mathsf{G}}_D$ embedded in a $D$-dimensional integer lattice satisfies 
\begin{equation}
    \cc^\delta_{{\mathsf{G}}_D}(\psi)> \frac{1}{2} d^{1/D}\left(1-(\Tilde{d}/d)^{1/D}\right),
\end{equation} 
if
\begin{align}
    k>&~ C(\mathsf{G}_D,\tilde d)H_2\left(\frac{1}{2}\varepsilon_{\mathsf{G}_D}(\mathfrak{C},d) + \frac{\delta}{2}\right)+C(\mathsf{G}_D,\tilde d)\log(2^{\tilde d}-1)\left(\frac{1}{2}\varepsilon_{\mathsf{G}_D}(\mathfrak{C},d) + \frac{\delta}{2}\right),
\end{align}
or
\begin{align}
    k>&~ C(\mathsf{G}_D,\tilde d)H_2\left(2\max_{\mathcal{N}}\tilde\varepsilon(\mathcal{N},\mathcal{E}) + \frac{\delta}{2}\right)+C(\mathsf{G}_D,\tilde d)\log(2^{\tilde d}-1)\left(2\max_{\mathcal{N}}\tilde\varepsilon(\mathcal{N},\mathcal{E}) + \frac{\delta}{2}\right),
\end{align}
where $C(\mathsf{G}_D,\Tilde{d})$ is the minimum number of size $\Tilde{d}$ hypercubes that can cover  $\Lambda$, $\mathcal{N}$ is any replacement channel acting on any $d$-qubit local region that is connected with respect to ${\mathsf{G}}_D$,  and the arguments of $H_2$ are assumed to be less than $1/2$.
\end{thm}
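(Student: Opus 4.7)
The plan is to adapt the contradiction argument behind Proposition~\ref{lem:app_all_comp} (the engine of Theorem~\ref{thm:app_all_comp}) to the geometrically local setting, replacing the single-qubit reductions of the back-propagated state by reductions onto hypercubes of volume $\tilde d$. Suppose the bound fails, so there exists $|\psi'\rangle$ with $\onenorm{\psi'-\psi}\le\delta$ produced by a $\mathsf{G}_D$-local circuit $U$ of depth $t$ with $t \le \tfrac{1}{2}d^{1/D}\bigl(1-(\tilde d/d)^{1/D}\bigr)=\tfrac{1}{2}(d^{1/D}-\tilde d^{1/D})$. The goal is then to convert this into an upper bound on $S(\Gamma)$ incompatible with $S(\Gamma)=k$ and the hypothesis on $k$.

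Concretely, I will first cover $\Lambda$ by $C(\mathsf{G}_D,\tilde d)$ essentially disjoint hypercubes $\{H_j\}$ of volume $\tilde d$. For each $H_j$, the forward light cone $L_j$ under the depth-$t$ local circuit $U$ is a hypercube of linear size at most $\tilde d^{1/D}+2t\le d^{1/D}$, so $|L_j|\le d$. Exactly as in (\ref{eq:psi-nu})--(\ref{eq:Theta-nu}) for the all-to-all case, qubits outside $L_j$ are decoupled from $H_j$ under $U^\dagger$, so
\begin{equation}
\onenorm{\tr_{\overline{H_j}}(U^\dagger \Gamma U)-\ketbra{0}{0}^{\otimes\tilde d}}\;\le\;\onenorm{\psi'_{L_j}-\Gamma_{L_j}}\;\le\;\varepsilon_{\mathsf{G}_D}(\mathfrak{C},d)+\delta,
\end{equation}
where the first step uses monotonicity under the quantum channel induced by $U^\dagger$ and an arbitrary (say, $\Gamma_{\overline{L_j}}$) state on $\overline{L_j}$, and the second step uses the definition of subsystem variance together with the triangle inequality for the $\delta$-neighborhood. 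Applying the Fannes--Audenaert inequality on the $2^{\tilde d}$-dimensional Hilbert space of $H_j$, then subadditivity across the disjoint cover, yields
\begin{equation}
k=S(\Gamma)=S(U^\dagger \Gamma U)\le C(\mathsf{G}_D,\tilde d)\Bigl[\tfrac{1}{2}\bigl(\varepsilon_{\mathsf{G}_D}(\mathfrak{C},d)+\delta\bigr)\log(2^{\tilde d}-1)+H_2\!\bigl(\tfrac{1}{2}(\varepsilon_{\mathsf{G}_D}(\mathfrak{C},d)+\delta)\bigr)\Bigr],
\end{equation}
which contradicts the hypothesis. The alternate form in terms of $\tilde\varepsilon$ follows by substituting $\varepsilon_{\mathsf{G}_D}(\mathfrak{C},d)\le 4\max_\mathcal{N}\tilde\varepsilon(\mathcal{N},\mathcal{E})$ via (\ref{eq:app_error_variance}).

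The main obstacle I expect is careful bookkeeping of the light-cone geometry and the covering count so the parameters $d$ and $\tilde d$ line up cleanly: the partition of $\Lambda$ must be disjoint (so that subadditivity produces the factor $C(\mathsf{G}_D,\tilde d)$ rather than something weaker), while each $L_j$ must simultaneously fit inside a connected $d$-qubit region as required by the definition of $\varepsilon_{\mathsf{G}_D}(\mathfrak{C},d)$. A secondary, mostly notational subtlety is to phrase the ``arbitrary state on the complement of the light cone'' step as an honest quantum-channel monotonicity argument, rather than an ad hoc substitution; once that is formalized, all other ingredients are direct analogues of the all-to-all proof.
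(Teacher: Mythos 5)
Your proposal is correct and follows essentially the same route as the paper: cover the lattice by $C(\mathsf{G}_D,\tilde d)$ blocks of volume $\tilde d$ whose depth-$t$ light cones have linear size at most $\tilde d^{1/D}+2t\le d^{1/D}$ (hence at most $d$ qubits), back-propagate $\Gamma$ through $U^\dagger$, compare each block's reduced state to $\ketbra{0}{0}^{\otimes\tilde d}$ via monotonicity, and combine Fannes--Audenaert with subadditivity to contradict $S(\Gamma)=k$. The disjointness bookkeeping you flag is exactly how the paper handles it (it passes to a disjoint refinement $\{C_i\subseteq B_i\}$ of the hypercube cover), and the $\tilde\varepsilon$ variant via $\varepsilon\le 4\max_{\mathcal{N}}\tilde\varepsilon$ matches as well.
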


Again, we first introduce a general result Proposition~\ref{lem:app_geo_comp},  and then apply it to code spaces and take robustness into account to obtain the theorem statements.  The result is independently useful for proving circuit complexity bounds for specific states.  An interesting use case will be given in the discussion of momentum codes.

    \begin{prop}
   \label{lem:app_geo_comp}
   Let $\psi = \ketbra{\psi}{\psi}$ be a $n$-qubit state and let
\begin{equation}
\Gamma \coloneqq \frac{1}{M} \sum_{i=1}^{M}|\phi_i\rangle\langle\phi_i|
\end{equation}
be a reference state where the $\ket{\phi_i}$'s are orthogonal to each other (namely, $\Gamma$ is a maximally mixed state of the space  spanned by basis $\{\ket{\phi_i}\}$). Consider an associated  adjacency graph $\mathsf{G}_D$ embedded in a $D$-dimensional integer lattice.
If there exists some $\Tilde{d} = cd, c\in (0,1)$  such that for any local region $R$ connected with respect to ${\mathsf{G}}_D$ with $|R|\leq d$, it holds that 
\begin{align}
\label{eq:geo_H2}
    \log M>&~ C(\mathsf{G}_D,\Tilde{d})H_2\left(\frac{\onenorm{\psi_R-\Gamma_R}}{2}\right)+C(\mathsf{G}_D,\tilde d)\log(2^{\tilde d}-1)\frac{\onenorm{\psi_R-\Gamma_R} }{2},
\end{align}
where $C(\mathsf{G}_D,\Tilde{d})$ is the minimum number of size $\Tilde{d}$ hypercubes that can cover  $\Lambda$, 
then we have
\begin{equation}
    \cc_{\mathsf{G}_D}(\psi)> \frac{1}{2} d^{1/D}\left(1-(\Tilde{d}/d)^{1/D}\right).
\end{equation}
\end{prop}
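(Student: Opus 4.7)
The plan is to mirror the proof of Proposition~\ref{lem:app_all_comp}, arguing by contradiction, but replacing the exponentially-growing all-to-all light cone by the polynomially-growing geometric light cone of a $D$-dimensional local circuit. Suppose $|\psi\rangle = U|0\rangle^{\otimes n}$ for some $2$-local circuit $U$ of depth $t \leq \tfrac{1}{2} d^{1/D}\bigl(1-(\tilde d/d)^{1/D}\bigr)$ with gates respecting $\mathsf{G}_D$. The key geometric observation is that for any hypercube $R_j \subset \Lambda$ of size $\tilde d$, the backwards light cone of $R_j$ under $U^\dagger$ is contained in a hypercube of linear size $\tilde d^{1/D} + 2t \leq d^{1/D}$, and thus involves at most $d$ qubits; hence, exactly as in the all-to-all case, replacing the state outside this light cone by an arbitrary ancilla does not change the reduced state on $R_j$ after $U^\dagger$ is applied.

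Next, I would partition $\Lambda$ into $C(\mathsf{G}_D,\tilde d)$ hypercubes $\{R_j\}$ of size $\tilde d$, and let $L_j \supset R_j$ denote the corresponding size-at-most-$d$ light cone. The light-cone observation together with monotonicity of the trace distance under $\tr_{\overline{R_j}}$ and the identity $\tr_{\overline{R_j}}(U^\dagger \psi U) = |0\rangle\langle 0|^{\otimes \tilde d}$ yields
\begin{equation}
\bigl\|\tr_{\overline{R_j}}(U^\dagger \Gamma U) - |0\rangle\langle 0|^{\otimes \tilde d}\bigr\|_1 \;\leq\; \|\psi_{L_j} - \Gamma_{L_j}\|_1.
\end{equation}
Because $|0\rangle\langle 0|^{\otimes \tilde d}$ is pure but $\tilde d$ may be large, the appropriate continuity tool here is the Fannes--Audenaert inequality applied in dimension $2^{\tilde d}$ rather than the cruder Fannes bound used for single qubits in the all-to-all proof, giving
\begin{equation}
S\bigl(\tr_{\overline{R_j}}(U^\dagger \Gamma U)\bigr) \;\leq\; H_2\!\left(\tfrac{1}{2}\|\psi_{L_j}-\Gamma_{L_j}\|_1\right) + \tfrac{1}{2}\|\psi_{L_j}-\Gamma_{L_j}\|_1 \log(2^{\tilde d}-1).
\end{equation}

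Summing over $j$ via subadditivity of the von Neumann entropy on the partition, together with $\log M = S(\Gamma) = S(U^\dagger \Gamma U)$, and bounding each summand uniformly by the worst $R$ with $|R|\leq d$, gives
\begin{equation}
\log M \;\leq\; C(\mathsf{G}_D,\tilde d)\!\left[H_2\!\left(\tfrac{\|\psi_R-\Gamma_R\|_1}{2}\right) + \log(2^{\tilde d}-1)\,\tfrac{\|\psi_R-\Gamma_R\|_1}{2}\right],
\end{equation}
which directly contradicts the hypothesis~(\ref{eq:geo_H2}). The main technical subtlety distinguishing this argument from the all-to-all one is the need to balance two competing scales encoded by $\tilde d$: one wants $\tilde d$ small so that the Fannes--Audenaert correction $\log(2^{\tilde d}-1)$ stays controlled, but large enough that the patch count $C(\mathsf{G}_D,\tilde d)\sim n/\tilde d$ does not overwhelm $\log M$; getting both the patch-based subadditivity and the Fannes--Audenaert bound to combine cleanly into the stated inequality is the place where the geometric refinement over the all-to-all proof actually lives. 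The free parameter $c=\tilde d/d$ may then be optimized depending on the target scaling of $\varepsilon$ and $k$, and setting $\tilde d = \Theta(1)$ recovers the simpler statement of Theorem~\ref{thm:geo_comp}.
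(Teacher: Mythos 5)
Your proposal is correct and follows essentially the same route as the paper's own proof: contradiction via the geometric light-cone bound $\tilde d^{1/D}+2t\le d^{1/D}$, partition of the lattice into $C(\mathsf{G}_D,\tilde d)$ patches of size at most $\tilde d$, subadditivity of the entropy of $U^\dagger\Gamma U$ over the patches, and the Fannes--Audenaert inequality in dimension $2^{\tilde d}$ to produce exactly the two terms in the hypothesis. The only cosmetic difference is that the paper phrases the patching as disjoint subsets $C_i\subseteq B_i$ of a minimal hypercube covering rather than a direct partition, which is immaterial to the argument.
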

\begin{proof}
The intuition is reminiscent of the all-to-all case. However, in the all-to-all case, we divide the system into single qubits and bound their von Neumann entropy. For the current geometric case, we can instead divide the system into hypercubes $B_i$ of size $cd$. This is not possible for cases without geometric locality because in that case $|L_{B_i}|>d$ for any depth-$t$ circuit with $t=\omega(1)$.

Suppose we have a pair of $\{\psi,\Gamma\}$ defined on a $D$-dimensional lattice and some $d$ such that the  condition (\ref{eq:geo_H2}) is satisfied, but 
\begin{equation}
    \cc_{\mathsf{G}_D}(\psi)\leq\frac{1}{2} d^{1/D}\left(1-(\Tilde{d}/d)^{1/D}\right).
\end{equation}
Namely, suppose there exists a $D$-dimensional circuit $U_D$ with depth $t\leq d^{1/D}(1-(\Tilde{d}/d)^{1/D})/2$ such that $\ket{\psi}=U_D \ket{0}^{\otimes n}$.  This guarantees that for the covering $\{B_{i}\}$ of the lattice where $i=1,\ldots,C(\mathsf{G}_D,cd)$, we can divide the lattice into disjoint subsets $\{C_{i}\in B_{i}\}$ such that $|L_{C_i}|\leq d$. Note that from the division, $C(\mathsf G_D,\tilde d)\approx n/\tilde d$. Then similar to the proof for the  all-to-all case,
\begin{align}
    S(\Gamma)=&~S(U_D^\dagger\Gamma U_D)\\ \leq&~ \sum_{C_i}S(\tr_{\overline{ C_i}}(U_D^\dagger \Gamma U_D)) \\ 
    \leq&~ C(\mathsf{G}_D,\Tilde{d})H_2\left(\max_{C_i}\frac{\onenorm{\psi_{L_{C_i}}-\Gamma_{L_{C_i}}}}{2}\right)+\sum_{C_i}\frac{\onenorm{\psi_{L_i}-\Gamma_{L_i}}}{2}\log(2^{|C_i|}-1) \\
    \leq&~ C(\mathsf{G}_D,\tilde d)H_2\left(\max_{|R|\leq d}\frac{\onenorm{\psi_R-\Gamma_R}}{2}\right)+C(\mathsf{G}_D,\tilde d)\log(2^{\tilde d}-1)\max_{|R|\leq d}\frac{\onenorm{\psi_R-\Gamma_R} }{2}. 
\end{align}
This is in contradiction with (\ref{eq:geo_H2}). Therefore, we must have
\begin{equation}
    \cc_{\mathsf{G}_D}(\psi)\geq \frac{1}{2} d^{1/D}\left(1-(\Tilde{d}/d)^{1/D}\right).
\end{equation}
\end{proof}
\begin{proof}[Proof of Theorem~\ref{thm:app_geo_comp}]
In Proposition~\ref{lem:app_geo_comp}, choose $|\psi\rangle$ to be a code state $|\psi\rangle\in\mathfrak{C}$ and $\Gamma$ to be the maximally mixed state of $\mathfrak{C}$. Similar to the proof of Theorem~\ref{thm:app_all_comp}, using triangle inequality we obtain that for any state $\ket{\psi'}$ within the $\delta$-vicinity of a code state $\ket{\psi}$, namely $\onenorm{\psi'-\psi}\leq \delta$,
\begin{equation}
    \onenorm{\psi_R'-\Gamma_R}\leq \varepsilon_{\mathsf{G}_D}(\mathfrak{C},d)+\delta.
\end{equation}
According to (\ref{eq:app_error_variance}), the above condition is automatically satisfied if
\begin{equation}
    \onenorm{\psi_R'-\Gamma_R}\leq 4\max_{\mathcal{N}}\tilde\varepsilon(\mathcal{N},\mathcal{E})+\delta,
\end{equation}
where $\mathcal{N}$ is any $d$-qubit replacement channel.
Then by applying Proposition~\ref{lem:app_geo_comp} we obtain the statements in Theorem~\ref{thm:app_geo_comp}.
\end{proof}

In the above discussion, we have provided detailed derivations for the prototypical cases of the all-to-all graph and regular lattices.
Our approach can be easily generalized to obtain the following results for arbitrary adjacency graphs:
\begin{thm}
\label{thm:app_arbitrary}
Given an $(\!(n,k)\!)$ code $\mathfrak{C}$, for any code state $\ket{\psi}\in\mathfrak{C}$, the $\delta$-robust circuit complexity with respect to adjacency graph $\mathsf{G}$  satisfies 
\begin{equation}
    \cc^{\delta}_{\mathsf{G}}(\psi)>f^{-1}(d),
\end{equation}
where $f(t)$ is the maximal size of the light cone of a single qubit under depth-$t$ circuits, if
\begin{equation}
    H_2(\varepsilon_\mathsf{G}(\mathfrak{C},d)/2 + \delta/2) < k/n,
\end{equation}  or \begin{equation}
    H_2(2\max_{\mathcal{N}}\tilde\varepsilon(\mathcal{N},\mathcal{E}) + \delta/2) < k/n,
\end{equation}
where $\mathcal{N}$ is any replacement channel acting on any $d$-qubit subsystem that is connected with respect to ${\mathsf{G}}$,  and the arguments of $H_2$ are assumed to be less than $1/2$.
\end{thm}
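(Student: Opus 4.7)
The plan is to prove Theorem~\ref{thm:app_arbitrary} by essentially the same contradiction-and-light-cone argument used in Proposition~\ref{lem:app_all_comp} and Theorem~\ref{thm:app_all_comp}, with the all-to-all light-cone bound $|L_i|\le 2^t$ replaced by the graph-dependent bound $|L_i|\le f(t)$. Since $f$ is by definition the maximum single-qubit light-cone size of a depth-$t$ circuit on $\mathsf{G}$, the hypothesis $t\le f^{-1}(d)$ guarantees that every qubit's light cone is a $\mathsf{G}$-connected subsystem of size at most $d$, which is precisely what is needed to invoke the subsystem-variance bound.

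First I would establish a graph-agnostic analogue of Proposition~\ref{lem:app_all_comp}: for any $n$-qubit pure state $\ket{\psi}$ and any equal mixture $\Gamma = \tfrac{1}{M}\sum_{i=1}^M \ketbra{\phi_i}{\phi_i}$ of $M$ orthonormal states, if $n\,H_2(\tfrac{1}{2}\max_{R}\onenorm{\psi_R-\Gamma_R}) < \log M$ with the maximum over $\mathsf{G}$-connected subsystems $R$ of size at most $d$, then $\cc_\mathsf{G}(\psi) > f^{-1}(d)$. Assuming the contrary, write $\ket{\psi}=U\ket{0}^{\otimes n}$ with $U$ of depth $t\le f^{-1}(d)$, so that for each qubit $i$ the light cone satisfies $|L_i|\le f(t)\le d$. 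Inserting an arbitrary filler state $\nu_{\overline{L_i}}$ outside the light cone gives $\tr_{\overline{i}}(U^\dagger(\psi_{L_i}\otimes\nu_{\overline{L_i}})U)=\ketbra{0}{0}$ and $\tr_{\overline{i}}(U^\dagger(\Gamma_{L_i}\otimes\nu_{\overline{L_i}})U)=\tr_{\overline{i}}(U^\dagger\Gamma U)$; monotonicity of the trace distance then yields $\onenorm{\tr_{\overline{i}}(U^\dagger\Gamma U)-\ketbra{0}{0}}\le\onenorm{\psi_{L_i}-\Gamma_{L_i}}$, and subadditivity of the von Neumann entropy together with the Fannes--Audenaert inequality delivers $\log M = S(\Gamma) \le n H_2(\tfrac{1}{2}\max_i\onenorm{\psi_{L_i}-\Gamma_{L_i}})$, contradicting the hypothesis.

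To conclude, I would specialize to codes exactly as in the proof of Theorem~\ref{thm:app_all_comp}: take $\Gamma$ to be the maximally mixed state on $\mathfrak{C}$ so that $\log M = k$, and use the definition of subsystem variance to bound $\onenorm{\psi_R-\Gamma_R}\le\varepsilon_\mathsf{G}(\mathfrak{C},d)$ for every $\mathsf{G}$-connected $R$ of size at most $d$. For the $\delta$-robust version, the triangle inequality combined with monotonicity of the trace norm under partial trace gives $\onenorm{\psi'_R-\Gamma_R}\le\varepsilon_\mathsf{G}(\mathfrak{C},d)+\delta$ for any $\ket{\psi'}$ within trace distance $\delta$ of $\ket{\psi}$, and the proposition applied to $\ket{\psi'}$ yields the claimed complexity bound. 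The $\tilde\varepsilon$ form of the hypothesis follows by substituting the subsystem-variance-to-QEC-inaccuracy bound in~(\ref{eq:app_error_variance}), specialized to replacement channels supported on $\mathsf{G}$-connected $d$-qubit regions.

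The only conceptual subtlety is fixing the precise meaning of $f$ for an arbitrary graph: for two-local circuits on $\mathsf{G}$, a qubit's light cone after depth $t$ is contained in the ball of $\mathsf{G}$-radius $t$ around it, so one takes $f(t)=\max_i|B_t(i)|$. This recovers $f(t)=2^t$ for the complete graph and $f(t)=\Theta(t^D)$ for $\mathsf{G}_D$, consistent with Theorems~\ref{thm:app_all_comp} and~\ref{thm:app_geo_comp} up to the tighter cube-packing constant that the latter exploits. Beyond setting up $f$ correctly the argument is a direct transcription of the all-to-all case, so I do not anticipate any genuine obstacle; the value of the general statement is as a template from which sharper bounds can be derived by exploiting additional combinatorial structure of $\mathsf{G}$, as in the lattice refinement of Theorem~\ref{thm:app_geo_comp}.
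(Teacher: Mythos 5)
Your proposal is correct and is essentially the paper's own proof: the paper explicitly states that Theorem~\ref{thm:app_arbitrary} and Proposition~\ref{lem:app_arbitrary} ``can be proven by directly adapting the proofs of Theorem~\ref{thm:app_all_comp} and Proposition~\ref{lem:app_all_comp},'' which is exactly the light-cone/subadditivity/Fannes--Audenaert contradiction you transcribe with $|L_i|\leq f(t)$ in place of $|L_i|\leq 2^t$, followed by the same specialization to $\Gamma$ and the $\delta$-robust triangle inequality. One small caveat on your closing remark: setting $f(t)=\max_i|B_t(i)|$ does \emph{not} recover $f(t)=2^t$ on the complete graph (the radius-$1$ ball is already all $n$ qubits); the theorem's own definition of $f$ as the maximal light-cone size under depth-$t$ circuits is the quantity your argument actually uses, and with that definition the proof goes through as you describe.
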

\begin{prop}\label{lem:app_arbitrary}
   Let $\psi = \ketbra{\psi}{\psi}$ be a $n$-qubit state and let
\begin{equation}
\Gamma \coloneqq \frac{1}{M} \sum_{i=1}^{M}|\phi_i\rangle\langle\phi_i|
\end{equation}
be a reference state where the $\ket{\phi_i}$'s are $n$-qubit states orthogonal to each other (namely, $\Gamma$ is a maximally mixed state of the space  spanned by basis $\{\ket{\phi_i}\}$).  
If for any $d$-qubit subsystem that is connected with respect to ${\mathsf{G}}$, it holds that 
\begin{equation}\label{eq:H2}
    nH_2\left(\frac{\onenorm{\psi_R-\Gamma_R}}{2}\right) < \log M,
\end{equation}
then $\cc(\psi)> f^{-1}(d)$ where $f(t)$ is the maximal size of the light cone of a single qubit under depth-$t$ circuits with respect to $\mathsf{G}$.
\end{prop}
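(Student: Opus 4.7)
The plan is to mirror the proof of Proposition~\ref{lem:app_all_comp} almost verbatim, replacing the all-to-all light cone bound $|L_i| \leq 2^t$ used there with the general bound $|L_i| \leq f(t)$ dictated by the adjacency graph $\mathsf{G}$. I would proceed by contradiction: suppose the hypothesis holds but $\cc_\mathsf{G}(\psi) \leq f^{-1}(d)$, so that some depth-$t$ circuit $U$ respecting $\mathsf{G}$, with $t \leq f^{-1}(d)$, satisfies $\ket{\psi} = U\ket{0}^{\otimes n}$. Because $f$ is nondecreasing, every backward single-qubit light cone $L_i$ then obeys $|L_i| \leq d$; moreover $L_i$ is connected with respect to $\mathsf{G}$, being the radius-$t$ ball around qubit $i$ in the graph metric, so it qualifies as an admissible subsystem in the hypothesis.

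The next step is to exploit the light cone structure. For any auxiliary state $\nu_{\overline{L_i}}$ on the complement of $L_i$, since qubits outside $L_i$ cannot influence qubit $i$ under $U^\dagger$ and $\ket{\psi} = U\ket{0}^{\otimes n}$, one has
\begin{equation}
\tr_{\overline{i}}\left(U^\dagger (\psi_{L_i} \otimes \nu_{\overline{L_i}}) U\right) = \ketbra{0}{0},
\end{equation}
while the same substitution with $\Gamma_{L_i}$ in place of $\psi_{L_i}$ recovers $\tr_{\overline{i}}(U^\dagger \Gamma U)$. Invoking monotonicity of the trace norm under the channel $\tr_{\overline{i}} \circ (U^\dagger \cdot U)$ then yields
\begin{equation}
\onenorm{\tr_{\overline{i}}(U^\dagger \Gamma U) - \ketbra{0}{0}} \leq \onenorm{\psi_{L_i} - \Gamma_{L_i}}.
\end{equation}

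The conclusion follows from subadditivity of the von Neumann entropy together with the Fannes--Audenaert continuity bound applied to each single-qubit marginal:
\begin{equation}
\log M = S(\Gamma) = S(U^\dagger \Gamma U) \leq \sum_{i=1}^{n} S\left(\tr_{\overline{i}}(U^\dagger \Gamma U)\right) \leq n H_2\left(\max_{|R|\leq d}\frac{\onenorm{\psi_R - \Gamma_R}}{2}\right),
\end{equation}
which directly contradicts the assumed hypothesis and therefore forces $\cc_\mathsf{G}(\psi) > f^{-1}(d)$.

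The only step requiring genuine care is verifying that the backward light cone $L_i$ under a $\mathsf{G}$-local depth-$t$ circuit is indeed a valid connected subsystem of size at most $d$ of the form appearing in the hypothesis; for standard locally finite adjacency graphs this is immediate from the definition of $f(t)$ as the maximum size of a radius-$t$ ball, so I do not anticipate substantive difficulty. The robustness/$\delta$-tolerance upgrade needed to promote this proposition to Theorem~\ref{thm:app_arbitrary} is then a routine triangle inequality step exactly as in the proofs of Theorems~\ref{thm:app_all_comp} and~\ref{thm:app_geo_comp}, combined (if one wishes to phrase the conclusion in terms of QEC inaccuracy rather than subsystem variance) with the replacement channel bound of Proposition~\ref{thm:two-way}.
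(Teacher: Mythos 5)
Your proposal is correct and follows essentially the same route as the paper, which itself proves Proposition~\ref{lem:app_arbitrary} by directly adapting the proof of Proposition~\ref{lem:app_all_comp} with the light-cone size bound $2^t$ replaced by $f(t)$. Your added observation that the backward light cone is a connected ($\mathsf{G}$-ball) subsystem of size at most $d$, and hence admissible in the hypothesis, is exactly the one point that needs checking in the generalization.
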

These can be proven by directly adapting the proofs of Theorem~\ref{thm:app_all_comp} and Proposition~\ref{lem:app_all_comp}. The bounds here are expressed in the most general forms.
For graphs with specific structures such as those embedded in more general Euclidean lattices or other physical geometries, we expect refinements like in Theorem~\ref{thm:app_geo_comp} to be feasible.



As remarked in the main text, a converse implication of the circuit complexity theorems that may be useful is that the lowest complexity of code states induces lower bounds on the subsystem variance and AQEC imprecision.  In other words, any code that contains some state with a certain low circuit complexity is subject to a corresponding precision limit.  
\begin{cor}
In Theorems~\ref{thm:all_comp}, \ref{thm:geo_comp}, \ref{thm:app_all_comp}, \ref{thm:app_geo_comp}, \ref{thm:app_arbitrary}, for any $d$, if there is a state $\ket\psi\in\mathfrak{C}$ in the code space such that the circuit complexity lower bound is not satisfied, then the corresponding conditions for $\varepsilon$ and $\tilde\varepsilon$ cannot be satisfied.
\end{cor}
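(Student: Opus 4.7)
The plan is straightforward: the corollary is exactly the contrapositive of each of the referenced theorems, so it requires no new technical input and will amount to a one-line logical observation, which I will spell out carefully to make the quantifier structure explicit. Each theorem has the schematic form ``if the error condition $C(\varepsilon_{\mathsf{G}}(\mathfrak{C},d),\delta,k,n)$ (or the analogous condition phrased in terms of $\tilde\varepsilon$) is satisfied, then every code state $\ket\psi\in\mathfrak{C}$ obeys $\cc^\delta_{\mathsf{G}}(\psi)>L(d)$ for a specified threshold $L(d)$.'' The crucial structural feature I will exploit is that the universal quantifier ``every code state'' sits inside the conclusion of the implication.

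Taking the contrapositive then transforms this into: if there exists some $\ket\psi\in\mathfrak{C}$ with $\cc^\delta_{\mathsf{G}}(\psi)\le L(d)$, i.e.\ the complexity lower bound fails for at least one code state, then the error condition $C$ must fail as well. This matches the statement of the corollary verbatim once one reads off the precise form of $C$ from each individual theorem. Since the theorems come in two flavours — one phrased in terms of the subsystem variance $\varepsilon_{\mathsf{G}}(\mathfrak{C},d)$ and one phrased in terms of $\max_{\mathcal{N}}\tilde\varepsilon(\mathcal{N},\mathcal{E})$ over replacement channels acting on $d$-qubit connected subsystems — the same contrapositive simultaneously produces both claimed impossibilities, which is why the corollary lumps them together.

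Because no new estimates are involved, there is no genuine obstacle; the only things to double-check when writing up the formal proof are bookkeeping items. Specifically, one should verify that the quantifier over $d$ is interpreted consistently (the corollary asserts the statement for every choice of $d$, and the theorems likewise hold for every $d$), and that the inequality ``$\cc^\delta_{\mathsf{G}}(\psi)\le L(d)$'' is the correct negation of the strict lower bound ``$\cc^\delta_{\mathsf{G}}(\psi)>L(d)$'' appearing in each theorem. It may also be worth remarking, though it is not strictly required by the statement, that this contrapositive viewpoint is the mechanism by which one can turn any explicit upper bound on the complexity of a single code state into an explicit numerical lower bound on $\varepsilon_{\mathsf{G}}(\mathfrak{C},d)$ or on $\max_{\mathcal{N}}\tilde\varepsilon(\mathcal{N},\mathcal{E})$, by inverting the binary-entropy function $H_2$ that appears inside the error conditions of each theorem.
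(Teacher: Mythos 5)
Your proposal is correct and is exactly the reasoning the paper intends: the corollary is stated without proof precisely because it is the contrapositive of the cited theorems, and your careful note that the universal quantifier over code states sits in the conclusion (so its negation is the existential in the corollary's hypothesis) is the only point that needed checking. Nothing further is required.
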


\section{Redundant encoding}
\label{app:redundant}
    Here we consider the code properties of a trivial kind of ``encoding'' map based on simply appending garbage states to the original logical state, that is,  $\mathcal{E}(\ket{\psi}) =  \ket{\psi}\otimes\ket{\text{gar}}\otimes\ldots\otimes\ket{\text{gar}}$ where $\bracket{\psi}{\text{gar}} = 0$, 
    which we call redundant encoding.  Of course, it is not an interesting  quantum code  in any meaningful sense.  However, we would like to make the conceptual point that such encoding can achieve a vanishingly small code error in a natural setting, meaning that a desirable notion of AQEC requires 
more than just the code error being vanishingly small.  

As a basic example, let $\ket{\psi}$ be a single qubit state and the noise $\mathcal{N}$ be a single erasure at a random unknown location. The redundant encoding naturally achieves 
QEC inaccuracy $\tilde\varepsilon = O(1/\sqrt{n})$, the intuition behind which is that the location of the original state is affected with probability $1/n$, so on average at most $O(1/n)$ fraction of   the logical information  is leaked into the environment. 
This can be more precisely linked to the code error via the  complementary channel characterization,  Eq.~(\ref{eq:complementary-worst}).
More explicitly,  the (average) erased subsystem is  given by
\begin{equation}\label{eq:redundant-erased-1}
\rho = \frac{1}{n}\ket{\psi}\bra{\psi} + \left(1-\frac{1}{n}\right)\ket{\text{gar}}\bra{\text{gar}}.
\end{equation}
According to (\ref{eq:redundant-erased-1}), $\widehat{\mathcal{N} \circ \mathcal{E}}(\ket{\psi}\bra{\psi})=\frac{1}{n}\ket{\psi}\bra{\psi} + \left(1-\frac{1}{n}\right)\ket{\text{gar}}\bra{\text{gar}}$, so using $P(\rho, \ket{\text{gar}}\bra{\text{gar}}) =  \sqrt{1/n}$  it can be seen that $\tilde\varepsilon = \sqrt{1/n}$ from Eq.~(\ref{eq:complementary-worst}).

In our context, it is meaningful to consider a general setting where the logical state $\ket{\psi}$ has $k$ qubits and the  erasure error acts on $d$ unknown locations. Then the (average) erased subsystem has the form

\begin{equation}
    \rho=\left(1-\frac{{n-k \choose d}}{{n \choose d}}\right)\rho'+\frac{{n-k \choose d}}{{n \choose d}}|\text{gar}\rangle\langle \text{gar}|^{\otimes d},
\end{equation}
where $\rho'$ is a density matrix dependent on $\ket{\psi}$ associated with the situations where not only garbage locations are erased. 
In this case 
it is evident that $f(\rho',|\text{gar}\rangle\langle \text{gar}|^{\otimes d}) = 0$,   so
\begin{align}
   P(\rho, |\text{gar}\rangle\langle \text{gar}|^{\otimes d}) =&~\sqrt{1-\frac{{n-k \choose d}}{{n \choose d}}}\leq\sqrt{1-\left(1-\frac{k}{n-d+1}\right)^d}\leq\sqrt{\frac{dk}{n-d+1}},
\end{align}
which implies that the QEC inaccuracy satisfies $\tilde\varepsilon \leq O(\sqrt{dk/n})$.  The scalings are not significantly affected  when taking e.g.~logarithmic $d$ to draw fair comparisons with the demands of  superconstant complexity bounds.

The  message is that  such trivial encodings are already capable of generating ``AQEC codes'' that naturally achieves a  vanishingly small code error, and in fact, closely attaining the critical error scalings, which indicates that our theory provides meaningful, in some sense tight, criteria for ``interesting'' AQEC codes.




\section{Ferromagnetic Heisenberg chain code \label{app:heisenberg}}

Here we elaborate on the  properties of the AQEC code originating from the ferromagnetic Heisenberg spin chain model, refining the original analysis in Ref.~\cite{BCSB19:aqec} 
 along the way.  
 While Ref.~\cite{BCSB19:aqec} focuses on upper bounds for the code error, our key finding here is that the subsystem variance of the code scales as $\tilde\omega(1/n)$ under  $d=\Theta(\log(n))$ and as a result, the code falls outside nontrivial complexity regime, which is consistent with the fact that the ground subspace contains trivial product states, e.g.,~$|0\rangle^{\otimes n}$.  This  provides a physical example of an AQEC code outside the nontrivial regime.

Specifically,  we consider the 1D ferromagnetic Heisenberg model, 
\begin{equation}
    H=-\frac{1}{2}\sum_{j=1}^n(\sigma^x_j\sigma^x_{j+1}+\sigma^y_j\sigma^y_{j+1}+\sigma^z_j\sigma^z_{j+1}),
\end{equation}
with periodic boundary conditions. Consider the $\sigma^z$-basis states $\ket{\mathbf{s}}$, where $\mathbf{s}=(s_1,s_2,...,s_n)\in\{1,-1\}^n$. Let $M=\sum_{i=j}^n \sigma_j^z$ be the total magnetization. This is a conserved quantity of the model, which gives an $n$-fold degenerate ground space, with the ground states $|h_m^n\rangle$ labelled by the magnetization 
\begin{equation}
    m={-n, -n+2, -n+4,\ldots,n-2, n},
\end{equation}
given by 
\begin{equation}
    \ket{h_m^n}=\frac{1}{\sqrt{n\choose n / 2+ m/ 2}}\sum_{\mathbf{s}:M(\mathbf{s})=m}\ket{\mathbf{s}}
\end{equation}
These states form a spin-$n/2$ representation of the $SO(3)$ symmetry. Note that there is a permutation symmetry for any $\ket{h_m^n}$, ensuring that the reduced density matrix for any $d$-qubit subsystem is the same. 
Ref.~\cite{BCSB19:aqec} considers code subspaces of the form
\begin{equation}
    \mathfrak{C}=\mathrm{Span}\left\{\ket{h_m^n}: m=-M,-M+2d+1,\ldots,M\right\},
\end{equation}
and shows that they yield AQEC codes with vanishingly small code error with respect to parameters $k=\Theta(\log (n))$, $d=\Theta(\log (n))$.
Here we would like to lower-bound the code error generally for $k,d=O(\log(n))$. The calculations largely follow those in Ref.~\cite{BCSB19:aqec}, with adjustments made when needed.
The reduced density matrix of $\ket{h_m^n}$ on a $d$-qubit subsystem is given by 
\begin{equation}
    \rho_d(m, n)=\sum_{r=-d}^d\left|h_r^d\right\rangle\left\langle h_r^d\right|\left[\frac{{d\choose d / 2+r / 2}{n-d \choose n / 2-d / 2+m / 2-r / 2}
}{{n \choose n / 2+m / 2}}\right].
\end{equation}
Since $\left|m-m^{\prime}\right| \leq 2M$ and $0 \ll d \ll 2M \ll n^{1 / 2}$,  the reduced density matrix can be well approximated as\footnote{Here an error in (D7) in Ref.~\cite{BCSB19:aqec} is corrected.}
\begin{equation}\label{eq:approx}
   \rho_d(m, n)\approx \sum_{r=-d}^d\left|h_r^d\right\rangle\left\langle h_r^d\right|
   {d\choose d / 2+r / 2}
   \frac{2^{-d} \sqrt{n}}{\sqrt{n-d}}\exp \left[\frac{1}{2}\left(\frac{m^2}{n}-\frac{(m-r)^2}{n-d}\right)\right],
\end{equation}
using the asymptotic approximation of the binomial coefficients   ${a \choose a / 2+b / 2}\approx \frac{2^{a+1}}{\sqrt{2 \pi a}} e^{-b^2 / 2 a}$. {Note that these approximations do not affect the scaling results.}
Given the reduced reference state
\begin{equation}
    \Gamma_d(n)=\frac{1}{2^k}\sum_m \rho_d(m, n),
\end{equation}
from (\ref{eq:approx}) we obtain 
\begin{align}
    \onenorm{\rho_d(m, n)-\Gamma_d(n)}\approx\sum_{r=-d}^d & {d\choose d / 2+r / 2}
   \frac{2^{-d} \sqrt{n}}{\sqrt{n-d}}\nonumber\\
   &\times\left|\exp \left[\frac{1}{2}\left(\frac{m^2}{n}-\frac{(m-r)^2}{n-d}\right)\right]-\frac{1}{2^k}\sum_{m'}\exp \left[\frac{1}{2}\left(\frac{m'^2}{n}-\frac{(m'-r)^2}{n-d}\right)\right]\right|.
\end{align}
Taylor expanding the term inside the absolute value, we obtain
\begin{equation}
    \left|\frac{m^2-m^{\prime 2}}{n}-\frac{(m-r)^2-\left(m^{\prime}-r\right)^2}{n-d}\right|  =\left|\frac{2\left(m-m^{\prime}\right)r}{n-d}-\frac{d\left(m^2-m^{\prime 2}\right)}{n(n-d)}\right|,
\end{equation}
For non-zero $r$ and $m\neq m'$, the leading order contribution 
 will be the first term, thus we obtain (the ``$\sim$'' symbol denotes the leading order approximation)
\begin{align}
    \onenorm{\rho_d(m, n)-\Gamma_d(n)}&\sim\sum_{r=-d,r\neq 0}^d {d\choose d / 2+r / 2}
   \frac{2^{-d} \sqrt{n}}{\sqrt{n-d}}
   \left|\frac{1}{2^k}\sum_{m'}\left[\frac{1}{2}\left( \frac{2\left(m-m^{\prime}\right)r}{n-d}-\frac{d\left(m^2-m^{\prime 2}\right)}{n(n-d)}\right)\right]\right|\\
   &\sim\sum_{r=-d}^d {d\choose d / 2+r / 2}
   \frac{2^{-d} \sqrt{n}}{\sqrt{n-d}}
   \left|\frac{mr}{n-d}\right|\\
   &\gtrsim\sum_{r=-d}^d {d\choose d / 2+r / 2}
   \frac{2^{-d}\sqrt{n}|m|}{({n-d})^{3/2}}-{d\choose d / 2}
   \frac{2^{-d}\sqrt{n}|m|}{({n-d})^{3/2}}\\
   &\sim\frac{\sqrt{n}|m|}{({n-d})^{3/2}}
   .
\end{align}
We now consider  superpositions of the code states in order to fully understand the code space property. For a general superposition state
\begin{equation}
    \ket{\psi}=\sum_m\alpha_m\ket{h_m^n},
\end{equation}
the reduced density matrix of a $d$-qubit subsystem $R$ is given by
\begin{equation}
    \rho_d(\psi,m)=\sum_{m,m'}\alpha_m\alpha^*_{m'}\tr_{\bar{R}}\ket{h_m^n}\bra{h_{m'}^n}.
\end{equation}
Note that for any $d$-qubit operator $\hat{O}$,
\begin{equation}
    \bra{h^n_{m'}}\hat{O}\ket{h^n_m}=0
\end{equation}
for $m\neq m'$ with $|m-m'|>2d$, because any $d$-qubit operator can at most flip the total magnetization $m$ by $2d$. Therefore we have
\begin{equation}
\tr_{\bar{R}}\ket{h_m^n}\bra{h_{m'}^n}=\delta_{m,m'}\rho_d(m, n),
\end{equation}
so the off-diagonal terms do not contribute to the subsystem variance. Then by convexity, the subsystem variance is given by 
\begin{equation}    \varepsilon_\mathsf{G}(\mathfrak{C},d)=\max_{m}\onenorm{\rho_d(m, n)-\Gamma_d(n)},
\end{equation}
where $\mathsf{G}$ can be any graph.
Since $\max |m|= M\sim d\cdot 2^k$, suppose we take $k = a\log(n)$, $d = b\log(n)$, the subsystem variance obeys the scaling 
\begin{equation}
    \varepsilon_\mathsf{G}(\mathfrak{C},d)=\omega(1/n^{1-a}).\label{eq:heisenberg_scaling}
\end{equation}
 So we observe that for sufficiently small $k$, in particular $k=o(\log(n))$,  considering that the Heisenberg chain code can contain trivial states, (\ref{eq:heisenberg_scaling}) indicates that the subsystem variance conditions in our complexity theorems cannot be significantly improved.

Note that, regarding the upper bounds, it can be checked that the bug in the analysis of Ref.~\cite{BCSB19:aqec} does not affect its key conclusion that the code error is vanishingly small.


\section{Momentum codes and code fragmentation \label{app:momentum}}

This section provides a self-contained exposition of  the notion of \emph{momentum codes}, which  encode the lattice momentum of quantum states as logical information.  Our construction is inspired by the  long-range entanglement property of quantum systems with translational symmetry and nontrivial lattice momenta, which is of significant physical interest from the perspective of  Lieb--Schultz--Mattis-type theorems, topological order, and so on   \cite{GioiaWang}.   
As we will explain in detail, the momentum codes provide physically motivated constructions that illustrate various interesting phenomena that may carry deep physical significance. For example, by truncating parts of the code (so $k$ becomes smaller), one can reduce the code error  and in turn establish new circuit complexity bounds. In fact, this can happen for {all} fragments upon fragmentation of a large code into smaller pieces.  Furthermore, certain momentum codes naturally exhibit ``marginal order'' with code error near the critical scaling, potentially shedding new light on the understanding of quantum order. 
This section can also be viewed as a particularly interesting application of our new method to prove circuit lower bounds.


The most basic form of momentum codes with a single quasiparticle excitation is defined as follows.
Consider the following generalizations of $W$-states associated with a 1D spin chain,
\begin{align}
    |W_p\rangle &=\frac{1}{\sqrt{n}}\sum_{x=1}^n e^{ipx}|\tilde x\rangle\\&=\frac{1}{\sqrt{n}}(e^{ip}\ket{100\ldots 0} + e^{i2p}\ket{010\ldots 0} + \ldots + e^{inp}\ket{000\ldots 1}),\quad p=\frac{2\pi m}{n},\quad m=0,1,\ldots,n-1.
\end{align}
where $\ket{\tilde{x}}$ corresponds to a quasiparticle excitation $\ket{1}$ at the $x$-th site and $\ket{0}$'s elsewhere, as elucidated by the second line.
$\ket{W_p}$ can be thought of as a generalized $W$-states that carries lattice momentum $p$, in the sense that it attains a $e^{ip}$ phase under the  lattice translation operator $\hat{T}$, i.e.,~$\hat{T}|W_p\rangle=e^{ip}|W_p\rangle$. 
The momentum can only take quantized values because of the periodic boundary condition. 

We start by considering the code space spanned by the entire set of $|W_p\rangle$ with all possible different momenta, namely, 
\begin{equation}
    \mathfrak{C} = \mathrm{Span}\left\{\ket{W_p}: p =\frac{2\pi m}{n}, m = 0,1,\ldots,n-1\right\},
\end{equation}
which we refer to as a full momentum code.  This code has $k=\log(n)$.
We now analyze its subsystem variance. For the reference state
\begin{equation}
    \Gamma=\frac{1}{n}\sum_p |W_p\rangle\langle W_p|,
\end{equation}
the reduced state of a $d$-local region (we simply consider the first $d$ qubits labeled by $A_d$, without loss of generality,  because of the periodic boundary condition) is given by:
\begin{equation}
    \Tr_{\ols{A_d}}\Gamma=\frac{1}{n}\left(\sum_{x=1}^d |\tilde x\rangle\langle \tilde x|+(n-d)|\tilde 0\rangle\langle\tilde 0|\right),
\end{equation}
where we have slightly stretched the notations, using $|\tilde x\rangle$  to denote the $d$-qubit state with excitation $|1\rangle$ at the $x$-th site and using  $|\tilde 0\rangle$ to denote the all-0 state
 $|0\rangle^{\otimes d}$.
For the momentum eigenstates
$|W_p\rangle$, The reduced states are given by
\begin{equation}  \Tr_{\ols{A_d}}|W_p\rangle\langle W_p|=\frac{1}{n}\left(\sum_{x,y=1}^d e^{ip(x-y)}|\tilde x\rangle\langle \tilde y|+(n-d)|\tilde 0\rangle\langle\tilde 0|\right),
\end{equation}
so we have
\begin{align}
   \onenorm{\Tr_{\bar{d}}|W_p\rangle\langle W_p|-\Tr_{\bar{d}}\Gamma}
    =&~\frac{1}{n}\onenorm{\sum_{x,y=1}^d e^{ip(x-y)}|\tilde x\rangle\langle \tilde y|-\sum_{x=1}^d |\tilde x\rangle\langle \tilde x|}\\
    =&~\frac{1}{n}\onenorm{d|\phi_{p}\rangle\langle\phi_{p}|-\sum_{p'} |\phi_{p'}\rangle\langle\phi_{p'}|}\\
    =&~\frac{2}{n}(d-1),
\end{align}
where $|\phi_{p}\rangle=\frac{1}{\sqrt{d}}\sum_{x=1}^de^{ipx}|\tilde x\rangle$. It is easy to verify the first equality does not depend on the choice of the code state. 
According to Proposition~\ref{lem:app_geo_comp},  the condition for our circuit complexity bound requires
\begin{equation}
    \log n> d-1+\frac{n}{d-1 }H_2\left(\frac{d-1}{n}\right),
\end{equation}
where we used $\tilde{d}<d-1$. 
Here both sides are $\log n$ at the first order, but up to the second order, the right-hand side is $\log n+d$.  That is, the condition is only weakly violated but it cannot be fulfilled by tuning $d$. 
Furthermore, since by suitable superpositions of $|W_p\rangle$ we can obtain   product states $\ket{\tilde x}$ in the code space, the subsystem variance obeys
\begin{align}
   \varepsilon_{{\mathsf{G}}_1}(\mathfrak{C},d)  \geq \onenorm{\Tr_{\ols{A_d}}|\tilde1\rangle\langle \tilde1|-\Tr_{\ols{A_d}}\Gamma}
    =2-\frac{2}{n},
\end{align}
which is nonvanishing.  By Proposition~\ref{thm:two-way}, the QEC inaccuracy  for replacement channels is also nonvanishing.
To summarize, the full momentum code is highly ``diverse'', which leads to the somewhat peculiar feature that even though the subsystem variance for an entire set of basis states is vanishingly small, the subsystem variance for the code space is not, due to the versatility of superpositions. This diverse nature of the full momentum code renders its properties  insufficient to induce nontrivial circuit complexity bounds.



Let us now consider the fragmentation of the full momentum code into smaller pieces. We first show that it is possible for all fragments of the code to exhibit vanishingly small subsystem variance under polynomial $d$, which is in contrast to the behavior of the full momentum code.  For instance, consider the extreme fragmentation into $k = 1$ codes, each spanned by a pair of
nearest-neighbor momentum eigenstates,
\begin{equation}
    \hat{\mathfrak{C}}_p = \mathrm{Span}\left\{\ket{W_p},\ket{W_{p+\Delta p}}\right\}, \quad\Delta p=2\pi/n,
\end{equation}
where $\Delta p=2\pi/n$ is the smallest momentum interval   determined the periodic boundary condition. 
For this code the reference state is
\begin{equation}
    \hat\Gamma_p=\frac{1}{2}(|W_p\rangle\langle W_p|+|W_{p+\Delta p}\rangle\langle W_{p+\Delta p}|).
\end{equation}
Here for the momentum eigenstate $\ket{W_p}$, 
simple calculations yield 
\begin{align}
\Tr_{\ols{A_d}}|W_p\rangle\langle W_p|-\Tr_{\ols{A_d}}\hat\Gamma_p=\frac{1}{2n}\sum_{x,y=1}^d \left(e^{i(x-y)p}-e^{i(x-y)(p+\Delta p)}\right)|\tilde x\rangle\langle \tilde y|,
\end{align}
so we have 
\begin{align}
\onenorm{\Tr_{\ols{A_d}}|W_p\rangle\langle W_p|-\Tr_{\ols{A_d}}\hat\Gamma_p}&\leq\frac{1}{2n}\sum_{x,y=1}^d|e^{i(x-y)p}-e^{i(x-y)(p+\Delta p)}|\onenorm{|\tilde x\rangle\langle \tilde y|}\\
        &=\frac{1}{2n}\sum_{x,y=1}^d|e^{i(x-y)p}-e^{i(x-y)(p+\Delta p)}|\\
        &\leq\frac{1}{2n}\sum_{x,y=1}^d\left|2\sin\left(\frac{x-y}{2}\Delta p\right)\right|\\
        &\leq\frac{1}{2n}d^3\Delta p\\&=\frac{\pi d^3}{n^2}.
\end{align}
For $\ket{W_{p+\Delta p}}$ we have the same result.
In this case, according to Proposition~\ref{lem:app_geo_comp}, the condition for our circuit complexity bound is satisfied when
\begin{equation}
    1\geq \frac{\pi d^3}{2n}+\frac{n}{\tilde{d}}H_2\left(\frac{\pi d^3}{2n^2}\right),
\end{equation}
which is always the case when we take $d=O(n^a)$ with $a<1/3$. Therefore, by applying Proposition~\ref{lem:app_geo_comp} to all the fragmented codes we  prove that \begin{equation}
    \cc_{{\mathsf{G}}_1}(w)\geq \frac{1}{2} d > O(n^{\frac{1}{3} - \epsilon}), \quad\forall\epsilon>0,
\end{equation} 
where $\ket{w}$ can be any momentum eigenstate $\ket{W_p}$.  For the purpose of proving circuit complexity bounds for specific states, a key implication from the comparison between the full and fragmented momentum codes is that modifying the code that the states belong to can be useful.

To understand the code space properties, we also need to consider coherent superpositions of momentum eigenstates which takes the the form $|\psi_p\rangle=\alpha|W_p\rangle+\beta\ket{ W_{p+\Delta p}}$ where $\alpha,\beta$ are amplitudes. By definition, the subsystem variance of $\hat{\mathfrak{C}}_p$ is given by
\begin{equation}
    \varepsilon_{{\mathsf{G}}_1}(\hat{\mathfrak{C}}_p,d)  =\max_{\alpha,\beta}  \onenorm{\Tr_{\ols{A_d}}|\psi_p\rangle\langle \psi_p|-\Tr_{\ols{A_d}}\hat\Gamma_p}.
\end{equation}
The reduced states of $\ket{\psi_p}$ are calculated to be
\begin{align}
\Tr_{\ols{A_d}}|\psi_p\rangle\langle \psi_p|=&~\frac{1}{n}\sum_{x,y=1}^d \left(\alpha e^{ipx}+\beta e^{i(p+\Delta p)x} \right)\left(\alpha^*e^{-ipy}+\beta^*e^{-i(p+\Delta p)y}\right)|\tilde x\rangle\langle \tilde y|\nonumber\\
&+\frac{1}{n}\left((n-d)+\alpha\beta^*\frac{e^{-i\Delta p(d+1)}-e^{-i\Delta p }}{1-e^{-i\Delta p}}+\beta\alpha^*\frac{e^{i\Delta p(d+1)}-e^{i\Delta p }}{1-e^{i\Delta p}}\right)|\tilde 0\rangle\langle\tilde 0|.
\end{align}
Again, use the notation $\ket{\phi_p}=\frac{1}{\sqrt{d}}\sum_{x=1}^d e^{ipx}|\tilde x\rangle$. 
We have
\begin{align}
        \onenorm{\Tr_{\ols{A_d}}|\psi_p\rangle\langle \psi_p|-\Tr_{\ols{A_d}}\hat\Gamma_p}=&~\frac{1}{n}\onenorm{d(\alpha|\phi_p\rangle+\beta|\phi_{p+\Delta p}\rangle)(\alpha^*\langle\phi_p|+\beta^*\langle\phi_{p+\Delta p}|)-\frac{d}{2}|\phi_p\rangle\langle\phi_p|-\frac{d}{2}|\phi_{p+\Delta p}\rangle\langle\phi_{p+\Delta p}|}\nonumber\\
        &+\frac{1}{n}\onenorm{\left(\alpha\beta^*\frac{e^{-i\Delta p(d+1)}-e^{-i\Delta p }}{1-e^{-i\Delta p}}+\beta\alpha^*\frac{e^{i\Delta p(d+1)}-e^{i\Delta p }}{1-e^{i\Delta p}}\right)|\tilde0\rangle\langle\tilde0|}\\
        \leq&~\frac{d}{n}\left(\left|\alpha\alpha^*-\frac{1}{2}\right|+\left|\beta\beta^*-\frac{1}{2}\right|+|\alpha\beta^*|+|\beta\alpha^*|\right)+\frac{2}{n}\left|\alpha\beta^*\frac{\sin(\Delta p d/2)}{\sin(\Delta p/2)}\right|\\
        \leq&~ \frac{5d}{n},
\end{align}
so the scaling of the subsystem variance is bounded as 
\begin{equation}
    \varepsilon_{{\mathsf{G}}_1}(\hat{\mathfrak{C}}_p,d)=O\left(\frac{d}{n}\right),
\end{equation}
which is vanishingly small for any $d=o(n)$.

The above analysis and observations can be extended to momentum codes associated with more general quasiparticles, particularly in the natural case where the Fourier transform of the momentum eigenstates has excitations in a region of length $\xi$
that correspond to the size of the quasiparticle.
The phenomenon that the code fragmentation can improve the code error can also be observed. Explicitly, let us consider momentum eigenstates of the form 
\begin{equation}
|W_{p,\xi}\rangle=\frac{1}{\sqrt{n}}\sum_{x=1}^ne^{ipx}|\tilde{x}_\xi\rangle, \quad p=\frac{2\pi m}{n},\quad m=0,1,\ldots,n-1,  
\end{equation}
where $|\tilde{x}_\xi\rangle$ denotes a generalized version of $\ket{\tilde x}$ in which the size-$\xi$ region from the $x$-th site to the $(x+\xi-1)$-th site can host quasiparticle excitations ($\ket{1}$'s) and outside this region are $|0\rangle$'s. We again first consider the full momentum code spanned by the entire set of $|W_{p,\xi}\rangle$ with all possible different momenta, namely, 
\begin{equation}
    \mathfrak{C}_\xi = \mathrm{Span}\left\{\ket{W_{p,\xi}}: p =\frac{2\pi m}{n}, m = 0,1,\ldots,n-1\right\}.
\end{equation}
Again, this code has $k=\log n$ and  exhibits  nonvanishing code error.  We perform a similar calculation as in the single quasiparticle case.  Given the reference state
\begin{equation}
    \Gamma_\xi=\frac{1}{n}\sum_p |W_{p,\xi}\rangle\langle W_{p,\xi}|,
\end{equation}
its reduced state of a $d$-local region $A_d$ is given by
\begin{equation}
    \Tr_{\ols{A_d}}\Gamma_\xi=\frac{1}{n}\left(\sum_{x=2-\xi}^d \Tr_{\ols{A_d}}|\tilde x_\xi\rangle\langle \tilde x_\xi|+(n-d-\xi+1)|\tilde 0\rangle\langle\tilde 0|\right).
\end{equation}
Since by suitable superpositions of $|W_{p,\xi}\rangle$ we can obtain product states $\ket{\tilde x_\xi}$ in the code space, the subsystem variance obeys
\begin{align}
   \varepsilon_{{\mathsf{G}}_1}(\mathfrak{C}_\xi,d)  &\geq \onenorm{\Tr_{\ols{A_d}}|\tilde1_\xi\rangle\langle \tilde1_\xi|-\Tr_{\ols{A_d}}\Gamma_\xi}\\
   &\geq \left|\onenorm{|\tilde1_\xi\rangle\langle \tilde1_\xi|-\frac{n-d-\xi+1}{n}|\tilde 0\rangle\langle\tilde 0|}-\onenorm{\frac{1}{n}\sum_{x=2-\xi}^d \Tr_{\ols{A_d}}|\tilde x_\xi\rangle\langle \tilde x_\xi|}\right|\\
    &= 
    2-\frac{2(d+\xi-1)}{n}.
\end{align}
For $d+\xi<n$, the subsystem variance is   nonvanishing, and by Proposition~\ref{thm:two-way}, the QEC inaccuracy  for replacement channels is also nonvanishing.  We see again that the highly diverse nature of the full code leads to a large code error that is insufficient to induce nontrivial circuit complexity bounds.

Now consider the  fragmented $k=1$ codes given by
\begin{equation}
    \hat{\mathfrak{C}}_{p,\xi} = \mathrm{Span}\left\{\ket{W_{p,\xi}},\ket{W_{p+\Delta p,\xi}}\right\}, \quad\Delta p=2\pi/n,
\end{equation}
for which the associated reference state is
\begin{equation}
    \hat\Gamma_{p,\xi}=\frac{1}{2}(|W_{p,\xi}\rangle\langle W_{p,\xi}|+|W_{p+\Delta p,\xi}\rangle\langle W_{p+\Delta p,\xi}|).
\end{equation}
Following a similar calculation as before, we have   (note the periodic boundary condition)
\begin{align}
    \onenorm{\Tr_{\ols{A_d}}|W_{p,\xi}\rangle\langle W_{p,\xi}|-\Tr_{\ols{A_d}}\hat\Gamma_{p,\xi}} &\leq\frac{1}{2n}\sum_{x,y=-\xi+2}^{d+\xi-1}|e^{i(x-y)p}-e^{i(x-y)(p+\Delta p)}|\onenorm{\tr_{\ols{A_d}}|\tilde{x}_\xi\rangle\langle \tilde{y}_\xi|}\\
        &\leq\frac{1}{2n}\sum_{x,y=-\xi+2}^{d+\xi-1}|e^{i(x-y)p}-e^{i(x-y)(p+\Delta p)}|\\
        &\leq\frac{1}{2n}\sum_{x,y=-\xi+2}^{d+\xi-1}\left|2\sin\left(\frac{x-y}{2}\Delta p\right)\right|\\
&<\frac{1}{2n}(d+2\xi-2)^3\Delta p\\&=\frac{\pi (d+2\xi-2)^3}{n^2},
\end{align}
where for the second line we used the fact that 
$\tr_{\ols{A_d}}|\tilde{x}_\xi\rangle\langle \tilde{y}_\xi| = \delta_{xy}|\tilde 0\rangle\langle\tilde 0|$
except for $2-\xi\leq x,y\leq d+\xi-1$, and also  the monotonicity of the trace norm.
For $\ket{W_{p+\Delta p,\xi}}$ we have the same result.
When taking $d,\xi=O(n^a)$ with $a<1/3$, the circuit complexity bound condition is satisfied.  So by applying Proposition~\ref{lem:app_geo_comp} to all the
fragmented codes we prove that 
\begin{equation}
\cc_{{\mathsf{G}}_1}(w)\geq \frac{1}{2} d > O(n^{\frac{1}{3} - \epsilon}), \quad\forall\epsilon>0,
\end{equation}
where $\ket{w}$ can be any momentum eigenstate $\ket{\psi_{p,\xi}}$.

Given the coherent superpositions $|\psi_{p,\xi}\rangle=\alpha|W_{p,\xi}\rangle+\beta\ket{ W_{p+\Delta p,\xi}}$,
the subsystem variance has the form
\begin{equation}
    \varepsilon_{{\mathsf{G}}_1}(\hat{\mathfrak{C}}_p,d)  =\max_{\alpha,\beta}  \onenorm{\Tr_{\ols{A_d}}|\psi_{p,\xi}\rangle\langle \psi_{p,\xi}|-\Tr_{\ols{A_d}}\hat\Gamma_{p,\xi}}.
\end{equation}
The reduced states of $|\psi_{p,\xi}\rangle$ are calculated to be
\begin{align}
\Tr_{\ols{A_d}}|\psi_{p,\xi}\rangle\langle \psi_{p,\xi}|=&~\frac{1}{n}\sum_{x,y=-\xi+2}^{d+\xi-1} \left(\alpha e^{ipx}+\beta e^{i(p+\Delta p)x} \right)\left(\alpha^*e^{-ipy}+\beta^*e^{-i(p+\Delta p)y}\right)\Tr_{\ols{A_d}}|\tilde x\rangle\langle \tilde y|\nonumber\\
&+\frac{1}{n}\left((n-d-2\xi+2)+\alpha\beta^*\frac{e^{-i\Delta p(d+\xi)}-e^{-i\Delta p(2-\xi) }}{1-e^{-i\Delta p}}+\beta\alpha^*\frac{e^{i\Delta p(d+\xi)}-e^{i\Delta p (2-\xi)}}{1-e^{i\Delta p}}\right)|\tilde 0\rangle\langle\tilde 0|.
\end{align}
Let $\ket{\phi_{p,\xi}}=\frac{1}{\sqrt{d+2\xi-2}}\sum_{x=-\xi+2}^{d+\xi-1} e^{ipx}|\tilde x\rangle$.
We have
\begin{align}
        &\onenorm{\Tr_{\ols{A_d}}|\psi_{p,\xi}\rangle\langle \psi_{p,\xi}|-\Tr_{\ols{A_d}}\hat\Gamma_{p,\xi}}\nonumber\\
        =&~\frac{1}{n}\onenorm{(d+2\xi-2)\Tr_{\ols{A_d}}\left(\alpha|\phi_p\rangle+\beta|\phi_{p+\Delta p}\rangle\right)\left(\alpha^*\langle\phi_p|+\beta^*\langle\phi_{p+\Delta p}|\right)-\frac{d+2\xi-2}{2}\Tr_{\ols{A_d}}(|\phi_p\rangle\langle\phi_p|+|\phi_{p+\Delta p}\rangle\langle\phi_{p+\Delta p}|)}\nonumber\\
        &+\frac{1}{n}\onenorm{\left(\alpha\beta^*\frac{e^{-i\Delta p(d+\xi)}-e^{-i\Delta p(2-\xi) }}{1-e^{-i\Delta p}}+\beta\alpha^*\frac{e^{i\Delta p(d+\xi)}-e^{i\Delta p (2-\xi)}}{1-e^{i\Delta p}}\right)|\tilde0\rangle\langle\tilde0|}\\
        \leq&~\frac{d+2\xi-2}{n}\left(\left|\alpha\alpha^*-\frac{1}{2}\right|+\left|\beta\beta^*-\frac{1}{2}\right|+|\alpha\beta^*|+|\beta\alpha^*|\right)+\frac{2}{n}\left|\alpha\beta^*\frac{\sin(\Delta p (d+2\xi-2)/2)}{\sin(\Delta p/2)}\right|\\
        \leq&~\frac{5(d+2\xi-2)}{n},
\end{align}
so the scaling of the subsystem variance is bounded as 
\begin{equation}
    \varepsilon_{{\mathsf{G}}_1}(\hat{\mathfrak{C}}_p,d)=O\left(\frac{d+2\xi}{n}\right),
\end{equation}
which is vanishingly small for any $d,\xi=o(n)$.

We note that one can prove good circuit complexity bounds even for certain superpositions by calculating correlation functions and applying  Lieb--Robinson bounds, due to the geometric locality.  A point  we would like to emphasize here is that  the code properties of momentum codes, on their own, are insufficient to generate code space circuit complexity bounds. 
This is in stark contrast to the standard cases of gapped topological order.  
In this sense, the nontrivial order associated with such systems is of a ``marginal'' kind, which should be distinguished from the usual notion of topological order. 
This is of potential significance to the understanding and classification of quantum order, especially
considering the various intriguing connections between nontrivial momentum and quantum order \cite{GioiaWang}.

Of course, the above calculations are merely intended as a basic  demonstration of the properties of momentum codes and code fragmentation; all parameters  can be further generalized.
From the perspective of lower bounding the circuit complexity of certain states, the above examples indicate that our  method can transition from being inapplicable to applicable upon modifications of the code.  
To conclude, a potentially interesting perspective of momentum codes is that the code error can be used to understand the fundamental limitations on the  detectability of  momentum in quantum systems, which is of potential significance in the view of 
symmetries, uncertainty principles, etc. We leave further study of these aspects for future work.

\section{Topological entanglement entropy}\label{app:tee}
Here we provide detailed proofs for the topological entanglement entropy (TEE) results as well as an extended  discussion on the string net example. As is customary, we consider 2D systems defined on a torus.  We are most interested in states that obey an area law (that is a general property of e.g.~ground states of gapped local Hamiltonians), meaning that the entanglement entropy of a subsystem $A$ has the form  
\begin{equation}
S(A)=al(A)-b(A)\gamma + o(1).    
\end{equation}
The first term is the area law term where
 $a$ is a constant and $l(A)$ is the length of the boundary of $A$. The second term is the correction term where $b(A)$ is the number of connected components of the boundary of $A$ (for a contractible region, $b_0=1$) and $\gamma$ is the TEE.  
 
 TEE is widely used as a smoking gun for topological order. 
 In line with our result which characterizes the robustness of nontrivial circuit complexity under relaxations of code properties, we would also like to understand to what extent the TEE signature is robust. This provides a more complete picture of the relationship between these widely used but different conditions for topological order. 
We restate our main results here  for readers' convenience.
\begin{prop}
   Consider an $(\!(n,k)\!)$ code defined on a 2D lattice on a torus.  Suppose that $\varepsilon=o(1/n)$ for any contractible region of size $d$, and there exists a code state with area-law entanglement. Then  in the thermodynamic limit the TEE of any code state satisfies
    \begin{equation}
        \gamma\geq k/\max\{2,2\lfloor n/2d\rfloor\}.
    \end{equation}
Specifically, we have the best bound $\gamma\geq k/2$, which is saturated by abelian topological order, if the code conditions hold for 
\begin{enumerate} [i)]
\item $d>n/4$, or
\item any $d$ linear in $n$ if, additionally, for error regions that do not contain non-contractible loops on the torus, $\tilde\varepsilon = o(1/n)$ can be achieved by recovery operations  acting within $O(1)$ distance to the error region ($\ell=O(1)$, $\ell$ parameter as defined in Ref.~\cite{Flammia2017limitsstorageof}).
\end{enumerate}

\end{prop}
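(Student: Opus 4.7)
The plan is to follow the prescription of Ref.~\cite{IssacKim-TOEntropy}, combining Hastings' Markov entropy decomposition with the Kitaev--Preskill cancellation structure adapted to the approximate code setting. The starting point is the identity $S(\Gamma) = k$, which follows because $\Gamma$ is maximally mixed on a $2^k$-dimensional subspace; the objective is to upper-bound this $k$ by a signed sum of local entropies that extracts the TEE while cancelling all area-law terms.

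First I would apply the Markov entropy decomposition along a Kitaev--Preskill-style annular partition of the torus, expressing
\begin{equation}
k = S(\Gamma) \le \sum_i \bigl[S(\Gamma_{X_i Y_i}) - S(\Gamma_{Y_i})\bigr]
\end{equation}
where every $X_i Y_i$ fits inside a contractible region of linear size at most $d$. Next I would use the hypothesis $\varepsilon_\mathsf{G}(\mathfrak{C},d)=o(1/n)$ together with the Fannes--Audenaert continuity bound to replace each $\Gamma$-marginal by the corresponding marginal of a fixed area-law code state $\ket{\psi}$. Each replacement costs at most $O(\varepsilon d)$ in entropy, and summing over the $O(n/d)$ regions in the decomposition gives a total error of $o(1)$, which is the key reason the $O(1/n)$ threshold enters.

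Once the $\Gamma$-marginals are swapped for $\psi$-marginals, I would substitute the area-law form $S_\psi(A) = a\, l(A) - b(A)\gamma + o(1)$ and invoke the standard Kitaev--Preskill cancellation: the specific signed combination is engineered so that all boundary-length terms vanish identically, leaving only a sum of $\gamma$ contributions plus the $o(1)$ remainder, yielding
\begin{equation}
k \le c\,\gamma + o(1),
\end{equation}
where $c$ is the number of TEE contributions generated by the partition. For $d>n/4$ a single partition of the torus suffices, contributing one TEE from each of the two inequivalent non-contractible cycles, so $c=2$ and $\gamma \ge k/2$, saturated by abelian topological order. For smaller $d$ I would tile the torus with $\lfloor n/(2d)\rfloor$ such partitions per cycle, giving $c = 2\lfloor n/(2d)\rfloor$. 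For claim (ii), linear-size error regions can contain non-contractible loops on which the naive AQEC condition fails; the extra local recoverability hypothesis lets one invoke the Expansion Lemma of Ref.~\cite{Flammia2017limitsstorageof} to promote size-$d$ recoverability to recoverability on regions of size $cn$ for any $c>0$, restoring the $c=2$ bound.

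The main obstacle I anticipate is controlling the accumulated Fannes--Audenaert error so that it remains $o(1)$ after summing over the decomposition: this is why the critical scaling $\varepsilon = o(1/n)$ is precisely tight, since at $\varepsilon = \Theta(1/n)$ the correction becomes $\Theta(1)$ and can swamp the $\gamma$ signal. A secondary subtlety is the case where the area-law correction $\gamma$ itself fluctuates slightly from region to region; this can be absorbed by interpreting the derived inequality as a bound on the average TEE over the tiling, which is sufficient for the stated conclusion. Case (ii) requires an additional check that the Expansion Lemma's recoverability radius $\ell = O(1)$ is compatible with the Markov-decomposition geometry so the cancellation of area-law terms survives the enlarged support.
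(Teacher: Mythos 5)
Your proposal is correct and follows essentially the same route as the paper: the Markov entropy decomposition of Ref.~\cite{IssacKim-TOEntropy} applied to the maximally mixed code state $\Gamma$ (with $S(\Gamma)=k$), Fannes--Audenaert continuity to trade $\Gamma$-marginals for area-law state marginals at total cost $o(1)$ under $\varepsilon=o(1/n)$, cancellation of the boundary terms leaving $k\le (m-1)\gamma+o(1)$ with $m-1=\max\{2,2\lfloor n/2d\rfloor\}$, and the Expansion Lemma of Ref.~\cite{Flammia2017limitsstorageof} with $\ell=O(1)$ to relax $d>n/4$ to any linear $d$ in case (ii). The only cosmetic difference is that the paper obtains the factor $m-1$ from the bookkeeping identity $b(B_i)-b(B_iC_i)=1$ across the $m$ (necessarily odd, $\ge 3$) steps of the $A_iB_iC_i$ sequence rather than from your "one TEE per non-contractible cycle" heuristic, but the count is identical.
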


\begin{proof}
The result is proven based on the  Markov entropy decomposition in Ref.~\cite{IssacKim-TOEntropy}, which enables us to relate the code parameters with TEE. 
Specifically, following Ref.~\cite{IssacKim-TOEntropy},
 we choose a sequence of subregions $A_i, B_i, C_i$ with $ i=1,2,\ldots,m$ with $m$ being the total number of steps, such that  $A_i B_i C_i=A_{i+1} B_{i+1}$,  $A_1 B_1$ and $B_i C_i$ are contractible regions with $|A_iB_i|,|B_iC_i|\leq d$ for some given $d$, and  $A_m B_m C_m$ is the entire torus. 
 As we require $|A_iB_i|,|B_iC_i|\leq d$,  the minimum number of steps needed for the sequence is $m=\max\{3, 2\lfloor n/2d\rfloor+1\}$. This is because we need an odd number ($\geq 3$) of steps for a torus geometry, and we can choose the subregions $B_i$ to be as thin as possible so we only need to require $(m+1)d>n$.
 Fig.~\ref{fig:markov-enytopy} illustrates a  viable sequence of subregion partitions with $m=3$.
 \begin{figure}
\includegraphics[width=0.6\columnwidth]{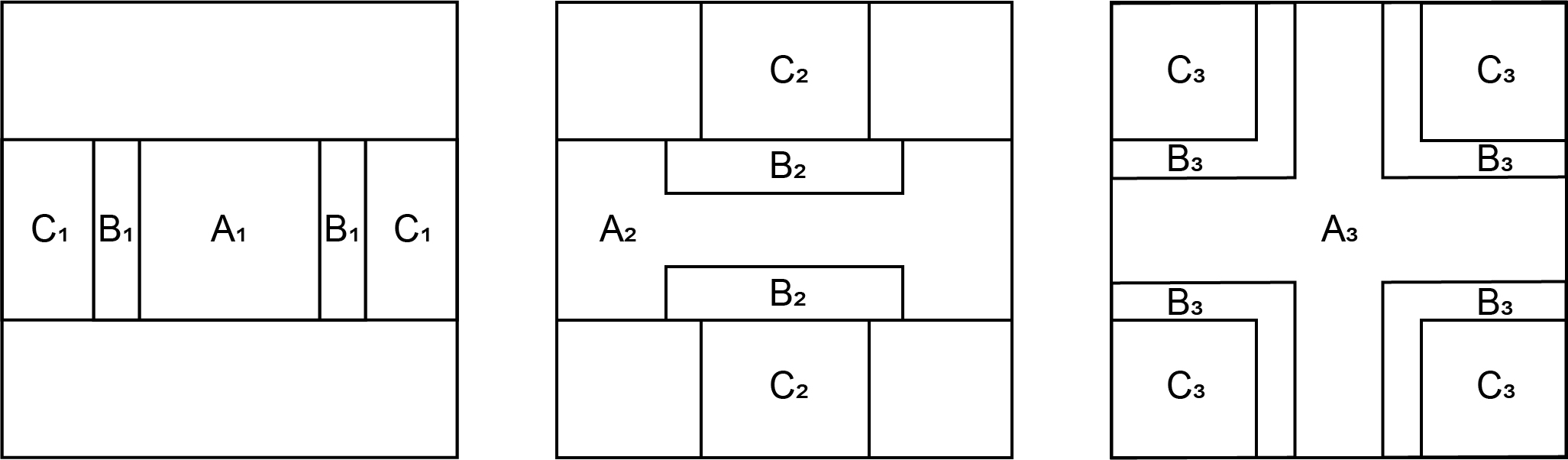}
\caption{(Adapted from Ref.~\cite{IssacKim-TOEntropy}) A viable sequence of subregion partitions with  $m=3$ for the Markov entropy decomposition. The system is defined on a torus.\label{fig:markov-enytopy}}
\end{figure}

The Markov entropy decomposition leads to the  inequality~\cite{IssacKim-TOEntropy}
\begin{equation}
    S(A_m B_m C_m)\leq S(A_1B_1)+\sum_{i=1}^{m} S(B_iC_i)-S(B_i),
    \label{eq:Markov}
\end{equation}
where $S(R)$  denotes the entanglement entropy of  subregion $R$.
We apply this inequality to the  maximally mixed code state (reference state) $\Gamma$. On the left-hand side, $S(A_m B_m C_m) = k$.  
To analyze the right-hand side, suppose the code has subsystem variance $\varepsilon$ on any contractible region of size $d$, i.e.,~for any code state $\psi$ and any contractible region $R$ with $|R|\leq d$ it holds that
$
   \onenorm{\psi_R-\Gamma_R}\leq\varepsilon
$.
Using the Fannes--Audenaert inequality, we obtain the following subregion entropy continuity bound,
\begin{equation}
    |S(\Gamma_R)-S(\psi_R)|\leq \frac{\varepsilon}{2}|R|+H_2(\varepsilon/2).
\end{equation}
By assumption, some code state $\phi$ has area law entanglement, namely $S(\phi_R) = al(R) - b(R)\gamma +o(1)$, which implies that 
\begin{equation}
S(\Gamma_R) = al(R) - b(R)\gamma + \lambda\left(\frac{\varepsilon}{2}|R|+H_2(\varepsilon/2)\right) +o(1),
\end{equation}
where $-1\leq\lambda\leq 1$, bounding the correction originating from subsystem variance.
When $\varepsilon=o(1/n)$, the correction vanishes in the thermodynamic limit, meaning that 
\begin{equation}
    S(\Gamma_R) = al(R) - b(R)\gamma  +o(1).
    \label{eq:Gamma_R}
\end{equation}
It can be observed that the combination of subregion entropies on the right-hand side is structured in a way that the contributions of the area law terms are canceled out.
Also, the partition can be designed in such a way that  $b(B_i) - b(B_i C_i) = 1$.
Inserting (\ref{eq:Gamma_R}) into (\ref{eq:Markov}), we obtain
\begin{equation}\label{eq:TEEbound}
    k \leq (m-1)\gamma + o(1),
\end{equation}
so in the thermodynamic limit we have
\begin{equation}
    \gamma\geq k/(m-1)=k/\max\{2,2\lfloor n/2d\rfloor\}.
\end{equation}

Specifically, the best lower bound $\gamma\geq k/2$ is attained when $d>n/4$ (in which case we can take $m=3$). This bound is already maximal as it is known to be achieved by abelian topological order.

The $d>n/4$ condition can be improved to any $d$ that is linear in $n$ if the following holds: for error regions that do not contain non-contractible loops on the torus, $\tilde\varepsilon = o(1/n)$ can be achieved by recovery operations acting within $O(1)$ distance to the error region. More explicitly, let us consider an additional code parameter $\ell$ that restricts the locality of recovery~\cite{Flammia2017limitsstorageof}: for a noise channel acting on region $A$, the recovery channel is only permitted to act within distance $\ell$ to $A$, namely in the $\ell$-vicinity of $A$. The improved result is derived using the Expansion Lemma \cite{Flammia2017limitsstorageof}. The error metric used here is slightly different but the proof can be directly adapted from that in Ref.~\cite{Flammia2017limitsstorageof}.
\begin{lem}[Expansion lemma \cite{Flammia2017limitsstorageof}]
    Let $A B=A^{+\ell}$ be the $\ell$-neighborhood of region $A$. If $A$ is $\left(\tilde\varepsilon_A, \ell\right)$-correctable and $B$ is $\left(\tilde\varepsilon_B, \ell\right)$-correctable, then $A \cup B$ is $\left(\tilde\varepsilon_A+\tilde\varepsilon_B, \ell\right)$-correctable.
\end{lem}
\noindent Letting $\ell=O(1)$, this lemma implies that for any contractible region of size $d$ linear in $n$, the error scaling is unchanged for different $d$. To see why $d$ can be relaxed to any linear function of $n$, suppose the system is $(\tilde\varepsilon,\ell)$-correctable for any contractible region of size $n/b$. A contractible region of size $n/c$ with $c<b$ can be obtained by conducting the expansion process for $b/c-1$ times, resulting in a region that is $(b\tilde\varepsilon/c,\ell)$-correctable. This indicates that the value of $c$ does not affect the error scaling, so  we can pick any $c>4$ and thus any $d$ linear in $n$.
\end{proof}

Note that our result does not hinge on a strict area law of the form of (\ref{eq:area}) and can allow the TEE to have small fluctuations depending on the subsystem. Specifically, suppose
\begin{equation}
S(A)=al(A)-b(A)\gamma(A)+ o(1),  
\end{equation}
then instead of (\ref{eq:TEEbound}) we obtain
\begin{equation}
    k\leq \sum_{i=1}^m(2\gamma(B_i)-\gamma(B_iC_i))-\gamma(A_1B_1).
\end{equation}
So  for sufficiently small fluctuations, we are still guaranteed a nontrivial TEE.

As mentioned in the main text, a corollary of this result that potentially has practical applications is that  trivial TEE signifies states that preclude good code properties.  A formal statement is as follows.
\begin{cor}
A 2D area-law state with zero TEE does not belong to any code that achieves $\varepsilon=o(1/n)$ on 
linear-size contractible regions.
\end{cor}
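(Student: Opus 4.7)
The plan is to prove the corollary by contraposition, as a direct consequence of the proposition established above. Assume for contradiction that there exists a 2D area-law state $\phi$ with vanishing topological entanglement entropy, $\gamma_\phi = 0$, that is a code state of some nontrivial $(\!(n,k)\!)$ code $\mathfrak{C}$ (so $k\ge 1$) satisfying $\varepsilon = o(1/n)$ on contractible regions of some linear size $d = cn$.

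I would then verify that the hypotheses of the proposition are met by this setup. The subsystem variance condition is given by assumption, and the requirement of an area-law code state is automatic: $\phi$ itself belongs to $\mathfrak{C}$ and obeys an area law by hypothesis. Applying the proposition therefore yields $\gamma \ge k/\max\{2,\,2\lfloor n/2d\rfloor\}$ for the TEE of any code state in the thermodynamic limit. For $k\ge 1$ and $d$ linear in $n$, the right-hand side is a strictly positive constant (indeed $\gamma \ge k/2$ whenever $c > 1/4$, by part (i) of the proposition). Applied to $\phi$ in particular, this contradicts the assumption $\gamma_\phi = 0$, completing the argument.

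Since the proposition does essentially all the work, no substantive obstacle remains. The only subtle point worth checking is that the $\gamma$ bounded in the proposition---which is derived from the area-law scaling of the maximally mixed code state $\Gamma$---actually equals $\gamma_\phi$ for an arbitrary code state $\phi$. This is built into the proposition's proof through the Fannes--Audenaert estimate $|S(\Gamma_R) - S(\phi_R)| \le (\varepsilon/2)|R| + H_2(\varepsilon/2) = o(1)$ for $|R| \le d$, which forces every code state to inherit the same leading-order area law as $\Gamma$, and in particular the same TEE, up to vanishing corrections in the thermodynamic limit. Thus the contradiction with $\gamma_\phi = 0$ is robust.
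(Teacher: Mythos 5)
Your proposal is correct and is essentially the paper's own argument: the corollary is stated as the direct contrapositive of the preceding proposition, with the area-law state itself serving as the witness required by the proposition's hypothesis, and the paper gives no further proof beyond this. Your added check that the Fannes--Audenaert continuity estimate forces every code state (not just $\Gamma$) to carry the same TEE up to $o(1)$ corrections is the right subtlety to verify and is consistent with how the proposition is proved.
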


On the other hand, when the code error is large, the bound can become trivial in the sense that it does not provide a positive lower bound for TEE. We  now provide a physical demonstration of the phenomenon that the TEE vanishes upon increasing the code error, based on adding string tension in a string net model~\cite{Wen:ToricCodeTension}. Consider the following string net wave function for the toric code with string tension (defined on a $\sqrt{n}\times \sqrt{n}$ square lattice),
\begin{equation}
    |\Phi_{\alpha}^\mu\rangle=\sum_{c\in \alpha} e^{-\mu L(\phi_{c})}|\phi_{c}\rangle,
\end{equation}
where $L(\phi_{c})$ is the total string length of the configuration $\ket{\phi_c}$, $\mu$ is the string tension, and we are only summing over the equivalence class $\alpha$ of loops that are smoothly connected to each other. Let us consider the limit of infinite $\mu$, for which we can obtain analytical results. In this case, the string net state becomes either the all-zero state, or the state with the shortest non-contractible strings. We  regard the four $|\Phi_{\alpha}^\mu\rangle$'s as the basis for an $[n,2]$ code. For a $\sqrt{d}\times\sqrt{d}$ subsystem $A_d$, the reduced states of them are given by 
\begin{align}
    \Tr_{\ols{A_d}}\rho_0&=|0\rangle\langle 0|,\\
    \Tr_{\ols{A_d}}\rho_1&=\frac{1}{\sqrt{n}}\left(\sum_{x,y=1}^{\sqrt{d}}|x\rangle\langle y|+(\sqrt{n}-\sqrt{d})|0\rangle\langle0|\right),\\
    \Tr_{\ols{A_d}}\rho_2&=\frac{1}{\sqrt{n}}\left(\sum_{x,y=1}^{\sqrt{d}}|x^\perp\rangle\langle y^\perp|+(\sqrt{n}-\sqrt{d})|0\rangle\langle 0|\right),\\
    \Tr_{\ols{A_d}}\rho_3&=\frac{1}{N_C}\left(\sum_{C,C'}^{n_C}|C\rangle\langle C'|+(N_C-n_C)|0\rangle\langle0|\right),
\end{align}
where $\ket{0}$ denotes the state with no loop,    $\ket{x}$  ($\ket{x^\perp}$) denote the states with straight horizontal (vertical) lines at location $x$, and 
{$\ket{C}$ denotes the state with a shortest loop $C$ that intersects with $A_d$ and is non-contractible both vertically and horizontally; that is, the loop $C$ is a zigzag line that encircles the torus  in both the horizontal and vertical directions.}
The number of such loops is denoted by $n_C$, and the total number of loops that are smoothly connected to them is denoted by $N_C$. 
Therefore, the reduced state of the reference state $\Gamma$ on $A_d$ is given by
\begin{align}
    \Tr_{\ols{A_d}}\Gamma&=\frac{1}{4}\Tr_{\ols{A_d}}(\rho_0+\rho_1+\rho_2+\rho_3)\\
    &=\left(1-\frac{\sqrt{d}}{2\sqrt{n}}-\frac{n_C}{N_C}\right)|0\rangle\langle0|+\frac{1}{4\sqrt{n}}\sum_{x,y=1}^{\sqrt{d}}(|x\rangle\langle y|+|x^\perp\rangle\langle y^\perp|)+\frac{1}{4N_C}\sum_{C,C'}^{n_C}|C\rangle\langle C'|.
\end{align}
By simple combinatorics, we find
\begin{align}
    N_C&=O(n^{1/4}2^{2\sqrt{n}}),\\
    n_C&=\Omega\left(\sqrt{\frac{d}{n}}N_C\right).
\end{align}
Since the number of $\ket{C}$'s for which there exists some $x$ with $\ket{C}=\ket{x}$ or $\ket{C}=\ket{x^\perp}$ is only $\mathrm{poly}(d)$, their contributions are negligible, so the subsystem variance is bounded as
\begin{align}
        \varepsilon_{{\mathsf{G}}_2}(\mathfrak{C},d)& \geq\onenorm{ \Tr_{\ols{A_d}}\rho_0-\Tr_{\ols{A_d}}\Gamma}\sim\frac{\sqrt{d}}{2\sqrt{n}}+\frac{n_C}{N_C}+\frac{1}{2\sqrt{n}}+\frac{1}{4N_C}=\Omega\left(\sqrt{\frac{d}{n}}\right), \label{eq:stringnet-variance}
\end{align}
which falls outside the nontrivial regime.  Indeed, the TEE of this model was shown to vanish~\cite{Wen:ToricCodeTension}, thus providing a physically interesting example of turning off TEE by increasing code error.  
However, observe that there is a gap between the properties of this extreme case and those required for the TEE bound. 
Further understanding of the situation of general $\mu$ may close the gap, which we leave for future work.





\section{CFT as AQEC code}\label{app:cft}

Here we provide details for the analysis of the code properties of CFT codes. 
The gist of the analysis is that the subsystem variance of CFT codes can be estimated by calculating one-point functions that are closely related to the violation of the Knill--Laflamme conditions, using  methods native to the study of CFT.
A key conclusion is that the CFT codes are AQEC codes by nature, exhibiting  polynomial code error with the exponent  determined by the lowest scaling dimension of the CFT.

Specifically, we consider the low energy sector of some critical system described by a CFT. Let the system be defined on a hypersphere $S^D$ of $D$ spatial dimensions, and let the code subspace be composed of the first $2^k$ low lying states. Here the base manifold $S^D$ ensures the viability of the state-operator correspondence between a code state $|\phi_\alpha\rangle$ and a CFT operator $\phi_\alpha$ \cite{di1997conformal}, which means that on the manifold $S^D$, a low lying state $|\phi_\alpha\rangle$ is in one-to-one correspondence with the operator $\phi_\alpha$ inserted at the origin of the hyperplane $R^{D+1}$.


We now proceed to the calculation of the scaling of subsystem variance. We use ``$\sim$'' to indicate approximate equivalence in terms of the scaling to the leading order or the next leading order, 
which will suffice in our scaling analysis. In order to estimate the subsystem variance on a region $R$ of $d$ qubits, we consider a complete set of operators $\mathcal{O}_i$ supported on the region $R$, such that $\Tr\left(\mathcal{O}_i \mathcal{O}_j\right) = \delta_{ij}$. Given a CFT code state $|\psi\rangle$ and the reference state $\Gamma$ defined by the equal mixture of all code states as in (\ref{eq:ref_state}), we have the expansion
\begin{align}
    \psi_R-\Gamma_R=\sum_i\Tr(\mathcal{O}_i(\psi_R-\Gamma_R))\mathcal{O}_i
    =\sum_i\bra{\psi}\mathcal{O}_i\ket{\psi}-\frac{1}{2^k}\sum_i\sum_{\phi\in\mathfrak{C}}\bra{\phi}\mathcal{O}_i\ket{\phi}\,.\label{eq:opexpansion}
\end{align}
Thus, we can study  subsystem variance by estimating the one-point functions $\langle\phi_\beta|\mathcal{O}|\phi_\gamma\rangle$ for size-$d$ operators $\mathcal{O}$. 
For our purpose of complexity arguments, it suffices to choose $d\sim \mathrm{loglog}(n)$.  Such operators have sufficiently small sizes compared to the system size and are thus expected to be well approximated by a product of several local operators via proper renormalization group flow,  namely, $\mathcal{O}\sim \mathcal{O}_1(\Omega_1)\ldots\mathcal{O}_i(\Omega_i)$, where each $\mathcal{O}_i(\Omega_i)$ is a local operator supported at a finite region $\Omega_i$. 
Any local operator $\mathcal{O}_i(\Omega_i)$ can be decomposed as a linear combination of CFT scaling operators $\psi_\alpha$ as follows
\begin{equation}
    \mathcal{O}_i\sim\sum_{\alpha}a_{i,\alpha}\psi_\alpha,
\end{equation}
where $\psi_\alpha$ are the operators in the conformal tower, including the identity, the stress-energy tensors, the primary operators and their descendants. Therefore, $\mathcal{O}$ can be expanded in terms of products of CFT scaling operators:
\begin{equation}
    \mathcal{O}\sim\sum_{\alpha_1\ldots\alpha_i}a_{\alpha_1\ldots\alpha_i}\psi_{\alpha_1}(\Omega_1)\ldots\psi_{\alpha_i}(\Omega_i)
\end{equation}
It is clear that the term involving the identity operator only do not contribute in  \eqref{eq:opexpansion}, so we can subtract the identity component from $\mathcal{O}$ and focus on the correlation function of the rest $\mathcal{O}'$, i.e.,
\begin{equation}
    \mathcal{O}'\sim\sum_{\alpha'}a_{\alpha_1\ldots\alpha_i}\psi_{\alpha_1}(\Omega_1)\ldots\psi_{\alpha_i}(\Omega_i).
\end{equation}
where the summation $\sum_{\alpha'}$ excludes the component where all $\psi_{\alpha_i}$ are identities.


Now we have \cite{di1997conformal, zou2020}
\begin{equation}
\langle\phi_\beta|\mathcal{O}'|\phi_\gamma\rangle\sim\sum_{\alpha'}\frac{a_{\alpha_1\ldots\alpha_i}}{\left(n^{1/D}\right)^{\Delta_{\alpha_1}+\ldots+\Delta_i}}\langle\phi_{\beta}(\infty)\psi_{\alpha_1}(\Omega_1)\ldots\psi_{\alpha_i}(\Omega_i)\phi_\gamma(0)\rangle,
\end{equation}
where $n^{1/D}$ is the length scale of the hypersphere $S^D$ in $D$ spatial dimensions and $\Delta_{\alpha_j}$ is the scaling dimension of $\psi_{\alpha_i}$. The one-point function becomes the largest when one of the $\psi_{\alpha_i}$'s is a primary field with the lowest scaling dimension $\varDelta$ while others are identities. 
That is,
\begin{equation}
    \langle\phi_\beta|\mathcal{O}'|\phi_\gamma\rangle\lesssim\frac{1}{n^{\varDelta/D}}\,.
\end{equation}
Correspondingly, the scaling of the components of $\psi_R-\Gamma_R$ under the operator basis is bounded by
\begin{equation}
    \Tr(\mathcal{O}_i(\psi_R-\Gamma_R))\lesssim\frac{1}{n^{\varDelta/D}}.
\end{equation}
As a result, by (\ref{eq:opexpansion}) we obtain
\begin{equation}
\varepsilon \sim  \onenorm{\psi_R-\Gamma_R}\lesssim 2^{O(d)}\frac{1}{n^{\varDelta/D}},
\end{equation}
where the $2^{O(d)}$ factor is the number of terms in the summation, since the scaling of the summation is dominated by the worst term. In the case of $d\sim \mathrm{loglog}(n)$, this factor is only $\mathrm{polylog}(n)$ and thus insignificant in the overall scaling, ensuring that $\varepsilon=\Theta(n^{-\varDelta/D})$.
For the QEC inaccuracy, we obtain $\tilde\varepsilon\gtrsim{1}/{n^{\varDelta/D}}$ for local replacement noise channels by Proposition~\ref{thm:two-way}.
The superconstant all-to-all complexity is hence guaranteed under the condition $\varDelta > D$.  

Note that $\varDelta$ represents an inherent property of CFT and is related to the mass of bulk fields in AdS/CFT \cite{GUBSER1998105,Witten1998}.   As motivated in the main text, this situation is interesting especially from gravity perspectives.

\mycomment{
\section{AdS/CFT Results}
Here we provide details for our results related to AdS/CFT. The AdS/CFT correspondence gives the following relationship between the mass $m$ of a scalar field in $AdS_{d+1}$ and the scaling dimension $\varDelta$ of its corresponding operators in the boundary CFT
\begin{equation}
    m^2=\varDelta(\varDelta-D-1),
\end{equation}
where $D$ is the spatial dimension for the boundary CFT. The relationship will be slightly changed for a general tensor field with our results unaffected, so in this appendix, we will only focus on the scalar field.

There are two solutions for the above condition:
\begin{equation}
   \Delta_{\pm}=\frac{D+1}{2}\pm\sqrt{\frac{(D+1)^2}{4}+m^2}.
\end{equation}
To obtain a real solution for the scaling dimension, we require the bulk field to satisfy the Breitenlohner--Freedman bound~\cite{Breitenlohner:1982bm,Breitenlohner:1982jf}, $m^2\geq-(D+1)^2/4$, which is a requirement to ensure the stability for the bulk gravity theory. Notably, for $-(D+1)^2/4< m^2\leq1-(D+1)^2/4$, both of the solutions $\Delta_\pm$ satisfy the unitarity bound~\cite{TASI_CFT2016}, so there are two possible quantizations at the boundary, while for $m^2>1-(D+1)^2/4$, there is only one possible quantization. Notably, in the widely studied case of $AdS_5$, this region just corresponds to the region $\varDelta>D$, where we have a non-trivial bound for the intrinsic complexity.

We also want to point out that although a small imaginary mass is allowed in the AdS gravity as can be seen from the Breitenlohner--Freedman bound, physically it might be intriguing to consider the setting where there is no imaginary mass. In this case, $\varDelta\geq D+1$, and falls in our region where there is a bound for intrinsic complexity.
}


\end{document}